 \date{}
 \title{Exact sequence between real and complex bivariant $\K$ theories and application to the $\ZZ$ pairing}
 \author{Samuel Guerin}
\DeclareMathOperator{\tr}{Tr}
\DeclareMathOperator{\id}{id}
\DeclareMathOperator{\re}{Re}
\def\epsilon{\varepsilon}
\def\hom{\mathit{\rm Hom}}
\def\Z{\mathbb{Z}}
\def\ZZ{\mathbb{Z}_2}
\def\N{\mathbb{N}}
\def\C{\mathbb{C}}
\def\H{\mathbb{H}}
\def\R{\mathbb{R}}
\def\Rpq{\mathbb{R}^{p, q}}
\let\kkfont\operatorname
\def\KK{\kkfont{KK}}
\def\KKO{\kkfont{KKO}}
\def\KKU{\kkfont{KKU}}
\def\K{\kkfont{K}}
\def\KO{\kkfont{KO}}
\def\KU{\kkfont{KU}}
\def\KOK{\KO_{\kkfont{K}}}
\def\kok{\KO^{\kkfont{K}}}
\def\HC{\kkfont{HC}}
\def\HCO{\kkfont{HCO}}
\def\O{\operatorname{O}}
\def\SO{\operatorname{SO}}
\def\U{\operatorname{U}}
\def\x{\textbf{x}}
\def\M{M}
\def\simarrow{\xrightarrow{\sim}}
\def\incl{\hookrightarrow}
\def\fun{\mathrm{C}}
\def\func{\mathrm{C}^0}
\def\dirac{\slashed{\mathrm{D}}}
\def\dualdirac{\slashed{\mathrm{x}}}
\newcommand{\clif}[1][]{C \ell^{#1}}
\newcommand{\cclif}[1][]{C\ell^{#1}_\C}
\def\clifpq{\clif[p,q]}
\def\cclifpq{\cclif[p,q]}
\def\hilb{\mathcal{H}}
\let\opfont\mathit
\def\compact{\opfont K}
\def\summable{\mathcal L}
\def\bounded{\mathscr L}
\def\calkin{\opfont{Cal}}
\def\gotimes{\hat{\otimes}}
\def\endo{\mathscr{L}}
\def\totxt{\xrightarrow}
\def\lambdarpq{\Lambda^\star \R^{p+q}}
\def\cstar{\mathcal{C}^\star \text{algebra}}
\def\cstarr{\text{real } \cstar}
\def\cstars{\mathcal{C}^\star\text{algebras}}
\def\cstarrs{\text{real } \cstars}
\def\proj{\mathbb P}
\def\bardot{\overbar{\hspace{2pt} \boldsymbol{\cdot} \vphantom{\otimes} \hspace{2pt}}}
\def\ct{\mathcal C}
\def\sct{\mathcal H}
\def\gct{\mathcal G}
\def\ori{\mathrm o}
\newcommand{\norm}[1]{  \Vert #1 \Vert }
\newcommand{\cone}[1]{  C_{#1} }
\newcommand{\shm}{-}
\newcommand{\overbar}[1]{\mkern 1.5mu\overline{\mkern-1.5mu#1\mkern-1.5mu}\mkern 1.5mu}
\newcommand{\intercc}[1]{\left[ #1 \right]}
\newcommand{\interco}[1]{\left[ #1 \right[}
\newcommand{\interoo}[1]{\left] #1 \right[}
\theoremstyle{definition}
\newtheorem{defi}{Definition}
\theoremstyle{plain}
\newtheorem{theo}{Theorem}
\newtheorem{prop}{Proposition}
\newtheorem{cor}{Corollary}
\newtheorem{lemma}{Lemma}
\theoremstyle{remark}
\newtheorem{remark}{Remark}
\theoremstyle{plain}
\begin{document}

\maketitle{}
\setcounter{tocdepth}{2}
\paragraph{Abstract}
We give some formulas for the $\ZZ$ pairing in $\KO$ theory using a long exact sequence for bivariant $\K$ theory which links real and complex theories.
This is discussed under the framework of real structures given by antilinear operators verifying some symmetries. 
Topological phases protected by time reversal symmetry from condensed matter physics will be discussed. 
\tableofcontents

\section{Introduction}

In condensed matter physics, the last forty years have seen the rise and development of the topic of topological phases of matter.
First appeared when shifting the band theory in solid state physics to a deeper analysis of global properties of phase functions of electrons, it includes now photonic or mechanical systems.
These materials exhibit interesting properties such as an insulator behavior in the bulk and a non zero conductivity in the boundary.
As an example, Quantum Hall effect was first recognized as in incarnation of the topology of the Brillouin zone, the Fourier space  of a periodic crystal.
For a rational flux quantum, Thouless Kohmoto and Nightingale~\cite{tkn82} first link the conductance in the Quantum Hall Effect with the first Chern number of the spaces spanned by filled bands of electrons.
These phases are then characterized by topological numbers.
Refinement that model disorder are given as a crossed product algebra  $\fun(\Omega)\rtimes \Z^d$ for $\omega$ a space with a $\Z^d$ action representing disorder.
Moreover, a non vanishing magnetic field twists the crossed product with $\Z^d$.
Those two observations make non-commutative geometry inevitable for such a problem as first described by Bellissard et al.~\cite{bvs94}.
Other models based on non-commutative algebras have been proposed in order to integrate disorder.
Examples of such models include
Kubota~ \cite{k17b} interpretation based on reduced Roe algebras, Ewert and Meyer~\cite{em18a} with uniform Roe algebras.
One may consult the last paper for an interesting comparison of such models.

Those systems are described by a first quantized Hamiltonian $H$ acting on the phase space of a single electron and by the Fermi energy.
The latter is a real number characterizing the statistic of electrons: at zero temperature all electrons occupy all energies below it.
With conditions such as $\Z^d$ translational invariance, growth conditions on the action of the Hamiltonian between distant sites or the Fermi energy lying in a spectral gap of $H$, one may consider the class of the spectral projection of the Hamiltonian below the Fermi energy in the $\K$ theory group of one of the algebra above.
Topological numbers, such as Chern numbers are then produced by pairing this $\K$ class with cyclic homology.a
This is then linked to physical quantities.

Since the work of Kane and Mele, more and more interest is devoted to systems with extra invariance.
Symmetries protect from arbitrary disorder and allow to distinguish new phases.  
Time reversal invariant systems were the first predicted by Kane-Mele in 2D~\cite{km05} and measured two years later~\cite{kwb07}.
It leads to interesting new phenomena such as Quantum Spin Hall Effect~\cite{km05a,kbm08}.
Some time reversal invariant phases are characterized by a $\ZZ$ topological number. 
Combining this symmetry with particle-hole symmetry leads to ten symmetry classes described for translation invariant systems in any dimension by Kitaev~\cite{k09}.
Henceforth, other types of symmetry protected phases have been considered with crystal point symmetry groups; see for example~\cite{f11}.
See also~\cite{fm13} for a discussion on gaped systems protected by both their topology and by general symmetries.

Time reversal symmetry is incorporated in our model via an antiunitary operator acting on the phase space of an electron and therefore on our algebra.
The theory of real algebras and their $\K$ theory then come naturally into play.
The $\ZZ$ topological numbers can be seen as a pairing in real $\K$ theory between  the class of a projector and a $\K$ homology element.
Real $\K$ theory was developed jointly as its complex analogue.
Work has been done on the specificity of the real theory, including the eightfold Bott periodicity~\cite{abs64, k68} and its extension to real Banach algebras~\cite{w66}, for the index theory of real elliptic pseudo-differential operators~\cite{a66, as69, k70, as71} and its applications to the problem of the existence of metrics of positive scalar curvature on smooth manifolds~\cite{l63, s92}, study of Baum-Connes conjecture in the real setting~\cite{k01,s04,bk04} or general properties of bivariant $\KKO$ functor~\cite{b03, br09}.

We place ourselves in the bivariant theory of Kasparov~\cite{k81}.
This theory will be briefly recalled in the first sections.
We will focus on the $\ZZ$ pairing between $\KO$ theory and $\KO$ homology.
This will be studied through an exact sequence relating real and complex $\K$ theory.
The existence of such a sequence was first shown by Atiyah in the monovariant commutative case in~\cite{a66}, another proof is given by Schick in~\cite{s04}, extends the exact sequence to the bivariant non-commutative case.
We will unravel Atiyah's approach in the bivariant setting to give applications to our problem.
Then, we recall how to express each of the eight $\KKO$ groups via the presence of antilinear symmetries of a given type.
The exact sequence will be then described in this framework.
We then give formulas for the $\ZZ$ pairing between classes that vanishes in the exact sequence, using $\Z$ valued formulas.
Such formulas will rely on the data given by an homotopy representing the vanishing of the considered $\KO$ class as in~\cite{k16}.
Finally, we discuss symmetries from the Wigner point of view, linking symmetries for the eight $\KKO$ groups to the periodic table of Kitaev and with time reversal invariant topological insulators.
We will not go into details on the theory of topological insulators, focusing on general considerations on symmetries and reality.
We will not discuss the chosen model algebra.

We wish to thank professor Denis Perrot for his insights and support, to thank professor Johannes Kellendonk for his his presentation of the subject of topological insulators and for sharing his work with us. We also thank professor Ralf Meyer and Schulz-Baldes, Varghese Mathai for interesting and valuable discussions.
 
\section{Real \texorpdfstring{$\KO$}{KO} theory}

\subsection{Real \texorpdfstring{$\K$}{K} theory for graded \texorpdfstring{$\cstarrs$}{real C* algebras}}

In the domain of non-commutative geometry we study here $\cstarrs$.
Recall the definition of these objects, knowing the one for complex $\cstars$:
\begin{defi}
  A $\cstarr$ is Banach $*-$algebra over the field of real numbers such that $A \otimes \C$ is a complex $\cstar$.
  Morphisms of $\cstarrs$ are given by $*-$algebra morphisms.
\end{defi}
For such a unital algebra $\KO$ theory classes are given by pairs of isomorphism classes of projectors in some $\M_n(A)$ considered as formal differences of such projectors.
For non unital $A$ we consider pairs of projectors in the algebra of matrices over the point unitalization $A^+$ such that the difference belongs to $A$.

We then define highers $\KO$ groups considering suspension algebras  $S^nA = \func(\interoo{0, 1}^n) \otimes A$ consisting of continuous functions from the closed interval $\R^n$ to $A$ vanishing at infinity.
 Defining $\KO_n(A)$ as $\KO_0(S^n A)$ we obtain a family of covariant functors indexed by $\N$ stable, homotopy invariant and taking short exact sequences of algebras to long exact sequences of abelian groups. 

Such a theory is extended for non-commutative Banach algebras using the work of Wood on the homotopy of set of unitaries of such algebras~\cite{w66}.
Following them:

\begin{defi}\label{defkaroubi}
  Let $A$ a unital graded $\cstarr$.
  Define $\kok_{p,q}(A)$ as the set described by tuples $(C,  s_1, s_2)$ where $C$ is a $\clif[q,p]-$action on some $A^n$, $A$ linear on the right, $s_1$ and $s_2$ acts $A-$ linearly on $A^n$  as self-adjoint unitaries, and anti-commuting  with the Clifford action.

If $A$ is not unital we change $A$ for its point unitalization $A^+$ and assume that $s_1 = s_2 \mod A$.
  
When then mod out by isomorphic triples and triples for which $s_1$ and $s_2$ can be continuously deformed to one another.
The direct sum then gives this set the structure of an abelian group.
\end{defi}

In fact, such functor could have been defined for graded algebras.
Such approach was taken by Van Daele~\cite{v88a, v88} for non-commutative graded Banach algebras using the work of Wood in its full generality.
Graded algebras are defined as algebras together with an involution.
Recall that say that $a$ is \emph{homogeneous} and that $\partial_a = 0$ (resp. 1).%
For two homogeneous elements $x$ and $y$ in such an algebra, denote the graded tensor product $[x, y] = xy - (-1)^{\partial_x \partial_y}yx$.
We will freely use the graded tensor products $A \gotimes B$ for two graded algebras and corresponding definition for graded modules on such algebras.
Recall just that this construction implies that if both $A$ and $B$ are unital elements any elements $a \gotimes 1$ graded commutes with $1 \gotimes b$ in the sense that the graded commutator vanishes. 
We refer to~\cite[Part~I]{def99} for a detailed exposition of the symmetric monoidal structure associated to such a tensor product.

Asking every generators of a Clifford algebra to be odd, we give $\clifpq$ the structure of a graded algebra.
We recover Karoubi's functors when specializing Van Daele definition for $A \otimes \clif[p+1, q]$.
Note the index of the Clifford algebra.
This will be detailed bellow.
The Bott periodicity can be expressed in such a theory as en equivalence between suspending an algebra and making the graded tensor product with $\clif[0,1]$.

A projector $p \in \M_n(A)$ corresponds to the pair of unitaries $(1-2p, 1)$ and no Clifford action.
A unitary $u \in \M_n(A)$ gives a class in this framework when considering the triple $\left(C,
\left( \begin{smallmatrix} 0 & u^* \\ u & 0 \end{smallmatrix}\right),
\left( \begin{smallmatrix} 0 & 1 \\ 1 & 0 \end{smallmatrix}\right)\right)$
and with Clifford action $C$ given by $\left( \begin{smallmatrix} 0 & -1 \\ 1 & 0 \end{smallmatrix}\right)$.

When considering all commutative $\cstarrs$ we recover by Serre-Swan theorem the $\KO$ theory for real spaces defined by Atiyah~\cite{a66}.
Such algebras are given for $X$ a locally compact space and $\tau : X \to X$ proper of order 2 by
\[ \func(X, \tau) = \{ f: X \to \C, \text{ continuous and vanishing at } \infty, \forall (x,y) \in X, f(\tau(x)) = \overbar{f(x)} \}\]
The pair  $(X, \tau)$ is then called a \emph{real space}.
A $\KO$ group is then generated by stable isomorphism classes of complex vector bundles together with an antilinear map which, on the base space, coincides with the involution of the real space.

Following Atiyah we denote
$\Rpq$ the real space $\R^{p}\times \R^q$ given with the map $\id \times -\id$ and by $S^{p,q}$ its point compactification.

\subsection{The ring \texorpdfstring{$\KO_*$}{KO*}}
The graded abelian group $\KO_*(\R)$ consists of stable classes of vector bundles of vitrual rank zero on spheres.
This pictures corresponds to the description of this abelian group as the kernel of $\KO(\fun(S^n)) \to \KO(\R) = \Z$ upon any inclusion of a point into $ S^n$.
It is also characterized as the cokernel of $\Z \to \KO(\fun(S^n)$ obtained via the map sending the whole sphere $S^n$ to a point,.
It is called the \emph{reduced } $\KO$ \emph{group} of the sphere $S^n$ denoted $\tilde \KO (S^n)$.
Using the tensor product of vector bundles, we give this group a natural product.
This ring structure  may also be obtained via a Kasparov product $\KO(S^n\R) \otimes \KO(S^m\R) \to \KO(S^{n+m}\R)$.
Also, for a graded $\cstarr$, the ablian group $\KO_*(A)$ is given by funtoriality a module structure over the ring $\KO_*(\R)$.

In this section we will describe this $\Z-$graded ring $\KO_*$.

 \begin{itemize}
   \item $1 \in \KO_0$  given by a projector of rank $1$ in $\M_n(\R)$.
   It is of infinite order and generates $\KO_0$.
   \item $\eta \in \KO_1$  given by $[p] - [1]$ where $p$ is the Möbius strip on $S^1$, the compactification of $\interoo{0,1}$.
   It can be identified as the tautological vector bundle of the projective line $\proj_\R^1 \simeq S^1$.
   It is of order 2 and generates $\KO_1$.
   When seen as a unitary $\eta$ is represented by $-\id$ in $\O(1)$.
   Indeed the Möbius band is obtain by gluing the trivial $\R$ at the two endpoints of $\intercc{0,1}$ by $-\id$.
   The square $\eta^2$ generates $\KO_2$.
   This elements can be related as before to the tautological bundle on the complex projective line $\proj_\C^1 \simeq S^2$.
   The element $\eta^3$ is zero, and $\KO_3 = 0$.
   \item $\delta \in \KO_4$ is given by $[\mathcal{O}_\H] - [4]$ where $\mathcal O_\H$ is a vector bundle of rank $4$ on $S^4$ given as a real vector bundle by the tautological bundle of the quaternionic projective line $\proj_\H^1 \simeq S^4$.
   This element of infinite order generates $\KO_4$.
 \item The three groups $\KO_5, \KO_6$ and $\KO_7$ are trivial.
   For example we have $\delta \eta = 0$.
   \item $\gamma \in \KO_8$ can be given in the same fashion as before as the difference between a rank 8 real vector bundle on $S^8$ minus the rank 8 trivial bundle.
     It is of infinite order, generates $\KO_8$ and induces the real Bott periodicity $\KO_k(A) \simarrow \KO_{k+8}(A)$.
     
 \end{itemize}

 Using the isomorphism induced by the product with $\gamma \in \KO_8$, the functor $\KO_n$ can be defined for negative $n$, obtaining an 8-periodic family of covariant functors.

Bott and Shapiro were the first in~\cite{abs64} to understood this 8-periodicity through the scope of Clifford algebras.
Such algebras are defined for any pair of non negative integers $(p,q)$ by $p+q$ generators, $p$ of which squares to $1$ and the last $q$ of them squares to $-1$ each one anti-commuting with one an other.
Asking that every generator is unitary we obtain a unique $\cstarr$ structure on $\clifpq$.
Karoubi defined functors $\KOK^{p,q}$ from compact spaces to abelian groups to formulate Bott and Shapiro's insight~\cite{k68}.

\begin{remark}
   We take here the opposite convention of Karoubi~\cite{k68, k08a}, where $\clifpq$ is built out of $p$ generators of square $-\id$
\end{remark}

\subsection{\texorpdfstring{$\KO$}{KO} homology and \texorpdfstring{$\KKO$}{KKO} theory}

$\KO$ homology is the theory dual to $\KO$ theory.
Cocycles, as generalization of 0th order elliptic pseudo-differential operators, are given by tuples $( \hilb, \epsilon, \phi, F)$ where $ \hilb$ is a real Hilbert space together with a grading $\epsilon : \hilb \to \hilb$ i.e. an involutory unitary.
This grading gives the bounded operators $\bounded(\hilb)$ on $\hilb$ a structure of graded $\cstarr$ by conjugation with $\epsilon$ for which the map $\phi: A \to \endo ({\hilb})$ is a graded $*-$morphism
and $F$ is an odd operator of $\endo( \hilb)$ satisfying for any $a$ in $A$ the three conditions:
\begin{align}\label{fredmodcond}
  (F-F^*)\phi(a)  \in \compact && (F^2 - \id)\phi(a) \in \compact && [F, \phi(a)] \in \compact
\end{align}
Coboundaries are those triples such that the compact operators of the former equation all vanish.
We obtain a contravariant functor $\KO^0$ from $\cstarrs$ to abelian groups.
For example if $M$ is a smooth compact manifold and $P$ is an pseudo-differential elliptic operator acting between smooth sections of two real vector bundles.
A parametrix $Q$ of it defines an operator $D = \left( \begin{smallmatrix}
  0 & P \\ Q & 0
\end{smallmatrix} \right)$ a class in $\KO(\fun(M))$ by
$[\summable^2(M, E) + \summable^2(M, F), \phi, D \sqrt{1 + D^2}]$ where $\phi$ is multiplication on sections.
This class does not depend on the choice of the parametrix.
As for $\KO$ cohomology, one define higher $\KO$ homology groups by declaring $\KO^n(A) = \KO(S^n A)$

For example the $\KO$ homology of the scalar algebra $\R$ is given by classes $[ \hilb, \phi, F]$ where the representation $\phi$ of $\R$ is given by a projection $p = \phi(1)$ and where $F = \left( \begin{smallmatrix}
  0 & P \\ Q & 0
\end{smallmatrix} \right)$ is given by a Fredholm operator $P: p \hilb^0 \to p  \hilb^1$.
The index $\dim \ker P - \dim \ker P^*$ of this operator gives a map $\KO_0(\R) \to \Z$ that is in fact an isomorphism.
Higher $\KO$ homology groups of $\R$ are the (reduced) $\KO$ homology of algebras of spheres.
Dirac operators then define elements that are zero in $\KO^1, \KO^2, \KO^3$ and $\KO^5$, a free generator of $\KO^4$ and $\KO^8$ and a 2 torsion generator of $\KO^6$ and $\KO^7$.
The degree 8 element of $\KO^8$ induces isomorphisms $\KO^n(A) \to \KO^{n+8}(A)$ and as before we obtain a family of contravariant functors that are homotopy invariant, stable and take short exact sequences to long exact sequences of order 24.

The map given at the level of cycles by
$\left \langle  [ \hilb, \phi, F]  ,  [p]  \right \rangle  = [ \hilb \gotimes  \compact, (\phi \gotimes \id)(p), F \gotimes \id ]$ defines the bilinear index pairing $\KO^0(A) \times \KO_0(A) \to \KO_0(\R)$ with value in $\Z$ after taking the Fredholm index.
More generally the same formula defines a pairing $\KO^n(A \gotimes B) \times \KO_m(A) \to \KO^{n-m}(B)$.
We will focus on the pairing $\KO^{n}(A) \times \KO_{n+i}(A) \to \KO^{-i} = \ZZ$ for $i = 1$ or 2.
Clifford algebras play a similar role than the suspension algebras in such a theory.
This will be explained further in the framework of $\KKO$ theory that we recall now.

The symmetry between $\KO$ cohomology and $\KO$ homology and the pairing are better seen through the bivariant $\KKO$ theory.
Let two graded $\cstarrs$ $A$ and $B$.
We consider following~\cite{k81} and~\cite{cs84} \emph{Kasparov bimodule} defined as triples $(E, \epsilon, \phi, F)$, denoted simply $(E, \phi, F)$ if the context is clear, where $(E, \epsilon)$ is a graded real $B-$Hilbert module on the right, $\phi$ a graded $*-$morphism from $A$ to $\endo_B(E)$ and $F$ is an odd operator of $\endo_B(E)$ verifying:
\begin{align}\label{kasmodcond}
  (F-F^*)\phi(A)  \subset \compact(E) && (F^2 - \id)\phi(A) \subset \compact(E) && [F, \phi(A)] \subset \compact(E)
\end{align}
We say that such a Kasparov triple is \emph{degenerate} if the conditions in~\cref{kasmodcond} can be strengthened to the vanishing of the involved compact operators.

Modded out by homotopic triples it becomes an abelian group for the direct sum denoted by $\KKO(A, B)$.
The neutral element is given by the class of any degenerate Kasparov bimodule.
Classes of bimodules $(E, \phi, F)$ will be denoted $[E, \phi, F]$.
Higher $\KKO$ groups are then defined as $\KKO_n(A, B) = \KKO(A, S^n B)$.

Such functors are covariant in the second variable and contravariant in the first variable.
They extend the two monovariant functors: $\KKO_n(A, \R) \simeq \KO_{-n}(A)$ and $\KKO_n(\R, A) \simeq \KO^n(A)$.
The first isomorphism is straightforward, the second is given by an argument due to Karoubi~\cite{k70} in the commutative setting and by Roe~\cite{r04} in the non-commutative one.
The argument goes in two parts.
The first identifies from the definition $\KKO(\R, A)$ with $\kok_1(\calkin \otimes A)$ for $\calkin$ the Calkin algebra.
Secondly, this $\kok$ group is isomorphic to $\kok(A)$ via the boundary map for the exact sequence $\compact  \to \bounded\to \calkin $.

The product defined by Kasparov for $D$ separable and $\sigma-$unital takes the general form
\[\KKO_n(A_1, B_1 \gotimes D) \times \KKO_m(A_2\gotimes D, B_2) \to \KKO_{n+m}(A_1 \gotimes A_2, B_1 \gotimes B_2)\]
This pairing is associative and extends the index pairing between $\KO$ theory and $\KO$ homology.
For any $\cstarr$  $A$ there is a neutral element $1_A$ in $\KKO(A, A)$ given by the class of $(A, 0)$.
Functoriality in both variable for a $*-$morphism $\phi : A \to B$ can be expressed as the product with $\bm \phi = [B, \phi, 0] $ in $ \KKO(A, B)$.

If $A$ and $A'$ are two trivially graded Morita invariant $\cstarrs$ through bimodules $E$ and $F$ and isomorphisms $E\gotimes_A F \simeq A'$ $F\gotimes_{A'} E \simeq A$ pairing with $[E, 0]$ and $[F,0]$ induce inverse isomorphisms $\KKO(A, B) \simeq \KKO(A', B)$ and $\KKO(B, A) \simeq \KKO(B, A')$.
This is still true for graded $\cstarrs$.
Note however that for an algebra of compact operators on an eventually finite dimensional Hilbert space $\hilb$ a grading is always given by some $x \mapsto u x u^*$ with $u$ bounded on the Hilbert space and with square a multiple of the identity $\pm 1$.
This algebra is Morita equivalent to $\R$ if and only if $u^2= 1$ and we can take for the Morita equivalence $\hilb$ and its dual given with the grading $u$.
In this case, the grading is said to be \emph{even}, in the other case it is said to be \emph{odd}.
For example the grading of $\clif[2,0] \simeq \M_2(\R)$ is odd given by conjugation with
$\left(\begin{smallmatrix}
  0 & -1 \\ 1 & 0
\end{smallmatrix}\right)$.

Kasparov defined two $\KKO$ elements by:
  \[\bm{\alpha_{p,q}} =  \left[\l^2 (\R^{p+q}) \gotimes \lambdarpq , \dirac/ \sqrt{1 +\dirac^2} \right] \in \KK(\func(\Rpq), \clif[q,p] ) \]
  \[\bm{\beta_{p,q}} =  \left[\fun_0 (\R^{p+q}) \gotimes \lambdarpq , \dualdirac / \sqrt{1 + ||x||^2} \right] \in \KK(\clif[q,p], \func(\Rpq))\]

  Where a generator of $\clifpq$ acts on $\lambdarpq$ as:
  \begin{equation}
    \label{clifaction}
     \omega \mapsto x \wedge \omega \pm \iota_x(\omega)
  \end{equation}
  With this action $\lambdarpq$ is a free $\clif[q,p]$ module of rank 1. The action can then be identified with Clifford multiplication on itself, from the right or from the left.
  
  Kasparov obtained the Bott periodicity as a \emph{$\KKO$ equivalence}:

\begin{theo}\label{bottkas}[Bott-Kasparov]
$\bm \alpha_{p,q} \cdot \bm \beta_{p,q} = \bm 1  \in \KKO(\func(\Rpq), \func(\Rpq))$ and
$\bm \beta_{p,q} \cdot \bm \alpha_{p,q} = \bm 1  \in \KKO(\clifpq, \clifpq)$
\end{theo}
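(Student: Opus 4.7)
The plan is to reduce the two identities to one-dimensional base cases via multiplicativity of the external Kasparov product, then to compute one direction directly and deduce the other by Atiyah's rotation trick. First, I would decompose $\bm\alpha_{p,q}$ and $\bm\beta_{p,q}$ as external Kasparov products, using the standard graded isomorphisms $\clifpq\simeq\clif[p,0]\gotimes\clif[0,q]$ and $\func(\Rpq)\simeq\func(\R^{p,0})\gotimes\func(\R^{0,q})$, the further $p$- and $q$-fold splittings into $\clif[1,0]$ and $\clif[0,1]$ factors, and the fact that the Dirac and dual-Dirac operators on a product Euclidean space split additively over the factors. By associativity and compatibility of the Kasparov product with external products, it then suffices to prove the theorem for $(p,q)\in\{(1,0),(0,1)\}$.

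Second, I would compute $\bm\beta_{p,q}\cdot\bm\alpha_{p,q}$ directly in each base case. In the unbounded (Baaj--Julg) picture the Kasparov product is represented on the internal tensor product module by the sum $\dirac+\dualdirac$, whose square is, modulo a grading-dependent scalar, the harmonic oscillator $\norm{x}^2-\Delta$. Its kernel is the one-dimensional line spanned by a Gaussian, and once $\lambdarpq\gotimes\lambdarpq$ is identified with $\clifpq$ acting on itself by left and right Clifford multiplication this rank-one projection represents the unit class of $\KKO(\clifpq,\clifpq)$. The rescaling $t(\dirac+\dualdirac)$ for $t\in[0,1]$ supplies the required operator homotopy to a degenerate representative.

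Third, for $\bm\alpha_{p,q}\cdot\bm\beta_{p,q}$ I would invoke Atiyah's rotation trick. The product is represented on a Kasparov module over $\func(\Rpq)\gotimes\func(\Rpq)$ by a combination of $\dirac$ acting on the $L^2$-factor coming from $\bm\alpha$ and $\dualdirac$ acting on the $\fun_0$-factor coming from $\bm\beta$. Rotating the doubled space $\R^{p+q}\oplus\R^{p+q}$ by $\pi/2$ in each coordinate pair $(x_i,y_i)$ is an element of $\SO(2(p+q))$ joined to the identity by a path of involution-preserving rotations, so it acts trivially in $\KKO$; applying it converts the product operator into a form manifestly representing the unit class of $\KKO(\func(\Rpq),\func(\Rpq))$.

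The main obstacle is the Connes--Skandalis construction of the explicit Fredholm representative of the internal Kasparov product in Step two: one has to set up a suitable connection on the internal tensor product module and verify positivity, odd grading and almost-anticommutation modulo compacts. For the Dirac/dual-Dirac pair these reduce to standard harmonic-oscillator estimates, but in the graded real setting one must additionally track the two Clifford actions on the two $\lambdarpq$ factors and check that they assemble into a $\clifpq$-linear, graded-self-adjoint operator compatible with the real structure. Step three is then essentially formal once the computation of Step two is in place.
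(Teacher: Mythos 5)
The paper itself does not give a proof of this theorem; it is stated as a cited result of Kasparov (the reference is to~\cite{k81}), so there is no ``paper's proof'' to compare against. Your proposal is, in outline, a faithful reconstruction of the original Atiyah--Kasparov argument: the reduction to rank one by multiplicativity of the external product, the explicit computation of $\bm\beta\cdot\bm\alpha$ via the harmonic oscillator $\|x\|^2-\Delta$ on the tensor product module (its one-dimensional Gaussian kernel giving the unit of $\KKO(\clifpq,\clifpq)$ after the identification of $\lambdarpq\gotimes\lambdarpq$ with $\clifpq\gotimes\clif[q,p]\simeq\endo(\lambdarpq)$), and Atiyah's rotation trick for the other composite.

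The one place you should spell out more is the real-structure compatibility of the rotation in step three: the rotation is applied in each plane $(x_i,y_i)$ where $x_i$ and $y_i$ are the two copies of the \emph{same} coordinate of $\Rpq$, so they carry the same sign of the involution and the rotation is in fact a homotopy through Real automorphisms; if one mixed coordinates of different character the homotopy would break the reality constraint and the trick would fail. Similarly, in step two you correctly flag that in the unbounded/Connes--Skandalis formulation of the product you must verify graded anticommutation of the lifted $\dirac$ and $\dualdirac$; this is where the choice of Clifford action in~\cref{clifaction} and of orientations (cf.~\cref{orientation}) enters, since the two $\lambdarpq$ factors carry opposite Clifford structures, and the ``grading-dependent scalar'' in $(\dirac+\dualdirac)^2$ is exactly the graded anticommutator $\{\dirac,\dualdirac\}$ acting as the number operator. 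Once those bookkeeping points are made precise the argument is complete and agrees with the original source.
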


Kasparov product with these elements gives isomorphisms
$\KK_n(A, B) \simeq \KK(A, \clif[0,n] \gotimes B)$.
Now, the action of~\cref{clifaction} gives an isomorphism between $\clifpq \gotimes \clif[q,p] $ and $ \endo(\lambdarpq)$ given its even grading, and then Morita equivalent to $\R$.
We have:
\[\KK(\clifpq \gotimes A, \clif[r,s]\gotimes B) \simeq \KK_{p - q + s - r}(A, B)\hspace{1cm} \text{ for any }p, q, r,s  \text{ in } \N\]

In fact the isomorphism $\KK(\clifpq\gotimes A, B) \to \KK(\clifpq \gotimes \clif[q,p] \gotimes A , \clif[q,p]\gotimes B) \to \KKO(A,\clif[q,p]\gotimes B)$
given by the external product with the unit of $\KKO(\clif[q,p], \clif[q,p])$followed by Morita invariance of $\KK$.
Explicitly:
\begin{lemma}\label{cliflr}
  For $[E, \epsilon, c \gotimes \phi, F] $ in $\KKO(\clifpq\gotimes A, B)$ the corresponding element of $\KKO(A, \clif[q,p]\gotimes B)$ is given by the class of
  $[E, \epsilon ,\phi, F']$ where $E$ is considered as an Hilbert $\clif[q,p] \gotimes B$ module with right action of a generator $c_i^o$ of $\clif[q,p]$ given as $c_i \epsilon$.
  The $\clif[q,p]\gotimes B$ bilinear product is given by $\langle e, f\rangle = \frac{1}{2^n}\sum_I  c_I^* \gotimes ec_I, f)$ and we have the formula for the operator $F '  =  \frac{1}{2^{n}}\sum_{I } {(-1)}^{I} c_I^{o*}  F c_I^{o}$.
\end{lemma}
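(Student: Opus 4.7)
The plan is to carry out the two-step composition described just above the lemma---external multiplication by $1_{\clif[q,p]}$ followed by Morita invariance---directly on representative Kasparov cycles, and to read off the explicit formulas from the result.

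First, external product with $1_{\clif[q,p]} = [\clif[q,p], \id, 0]$ sends the cycle $(E, \epsilon, c \gotimes \phi, F)$ to
\[ (E \gotimes \clif[q,p],\ \epsilon \gotimes \id,\ (c \gotimes \phi) \gotimes m,\ F \gotimes 1), \]
viewed, after permutation of tensor factors (justified by the symmetric monoidal structure on graded $\cstarrs$), as a cycle over $\clifpq \gotimes \clif[q,p] \gotimes A$ with values in $\clif[q,p] \gotimes B$, where $m$ denotes the left regular representation of $\clif[q,p]$ on itself. No sign issue arises at this stage because the operator on the $\clif[q,p]$-factor is zero.

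Second, Morita invariance for the identification $\clifpq \gotimes \clif[q,p] \simeq \endo_\R(\lambdarpq)$ is implemented by tensoring on the left with the bimodule $\lambdarpq^*$. Since $\lambdarpq$ is free of rank one as a right $\clif[q,p]$-module, a basis is provided by $\{c_I \cdot v\}_I$ indexed by multi-indices $I \subseteq \{1, \ldots, p+q\}$ applied to a vacuum vector $v$. One computes that $\lambdarpq^* \gotimes_{\clifpq \gotimes \clif[q,p]} (E \gotimes \clif[q,p])$ is naturally identified with $E$ itself, equipped with its original $A$-action $\phi$. Expanding the induced structures in this basis produces all the formulas claimed: the right action of a generator $c_i^o \in \clif[q,p]$ becomes $c_i \epsilon$ on $E$ (the grading $\epsilon$ converts the left action of the odd element $c_i$ into a right action whose square has opposite sign, since $(c_i \epsilon)^2 = -c_i^2$, so that generators of $\clifpq$ are twisted into generators of $\clif[q,p]$); the $\clif[q,p] \gotimes B$-valued inner product emerges as the averaged sum $\tfrac{1}{2^n} \sum_I c_I^{o*} \gotimes (e c_I^o, f)$, with the factor $1/2^n$ reflecting the real dimension $2^{p+q}$ of $\lambdarpq$; and the operator becomes the Clifford-symmetrized $F' = \tfrac{1}{2^n} \sum_I (-1)^{|I|} c_I^{o*} F c_I^o$, the standard averaging that enforces exact graded commutation with the Clifford action at the cost of only a compact perturbation.

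The main obstacle is sign bookkeeping. One must verify: (i) that $c_i^o \mapsto c_i \epsilon$ satisfies the defining relations of $\clif[q,p]$, which reduces to the computation $(c_i \epsilon)(c_j \epsilon) = -c_i c_j$ using that $\epsilon$ anticommutes with every odd $c_i$; (ii) that the proposed formula is a positive, $\clif[q,p] \gotimes B$-sesquilinear inner product, turning $E$ into a genuine Hilbert module; (iii) that $F'$ still satisfies the Kasparov conditions (\cref{kasmodcond}) modulo compacts, inherited summand-by-summand from $F$; and (iv) that the averaging does not change the Kasparov class, because $F - F'$ lies in $\compact(E)$ thanks to the approximate commutation of $F$ with the original left Clifford action built into the definition of the initial cycle. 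Once these verifications are in place, the cycle $(E, \epsilon, \phi, F')$ represents the image of $(E, \epsilon, c \gotimes \phi, F)$ under the full composition, proving the lemma.
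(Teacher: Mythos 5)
Your proposal follows exactly the route the paper itself indicates (external product with $1_{\clif[q,p]}$ followed by Morita invariance via $\lambdarpq$), and the sign check $(c_i\epsilon)(c_j\epsilon) = -c_ic_j$ correctly verifies that the right action generates $\clif[q,p]$. The paper gives no separate proof of the lemma beyond stating this composition and the formulas, with the remark afterward supplying the same justification you give in item (iv) — that $F'$ is reached from $F$ by iterated compact perturbations $\tfrac12(F - c_iFc_i^*)$ projecting onto the anticommutant of each $c_i$ — so your argument is at the same level of detail and in the same spirit as the paper's.
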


\begin{remark}
  If $F$ was commuting with the initial $\clifpq$ left action, then $F' = F$.
 In fact $\frac{1}{2}(F - c_i F c_1^*)$ is a compact perturbation of $F$ anti-commuting with $c_1$.
 Changing $F$ for $\frac{1}{2}(F - c_i F c_1^*)$ give the same $\KKO$ class and making the successive changes for $i$ ranging from $1 $ to $p+q$ gives exactly $F'$ as in the previous lemma
\end{remark}

\begin{remark}
  The same procedure takes a right $\clif[p,q]$ action to a $\clif[q,p]$ action on the left.
  The Fredholm operator of stay unchanged as it already commute with the Clifford right action.
  It may seems at a first glance that information on the $\clifpq$ valued pairing is lost in the process.
  In fact, as shown in the lemma, one can recover it from the Clifford action itself.
  Actually, modulo multiplication by a positive scalar this pairing is the $\clifpq\gotimes B$ inner product that extends the $B$ valued one in a $B-$Hilbert module with $\clifpq$ right action.
  In what follows, we can then forgot about this $\clifpq$ part of the bilinear pairing.
  A Clifford algebra $\clifpq$ appearing as an argument in $\KKO_{\pm (p -q)}(A, B)$ is interpreted as a additional multi-grading on the Kasparov bimodule on $A-B$.
\end{remark}

If we denote by $S^{p,q}A$ the algebra $\func(\Rpq)\otimes A$, using once again the Bott periodicity, we obtain the isomorphisms $\KK(S^{p,q}A, S^{r,s}B ) \simeq \KK_{r -s -q + p}(A, B)$ for any positive integers $p,q,r,s$

The algebra $\clif[8,0]$ is isomorphic with the matrix algebra $\M_{16}(\R)$ with even grading.
This gives the periodicity of Clifford algebras and shows the theorem of periodicity in $\KKO$ theory:

\begin{theo}
  For any couple of graded $\cstarrs$ $A$ and $B$, the product with $\bm \alpha_{8,0}$ and $\bm \beta_{8,0}$ induces an isomorphism 
$\KKO_{n + 8}(A, B) \simeq \KKO_n(A, B) $
\end{theo}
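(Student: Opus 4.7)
The plan is to deduce the eight-fold periodicity as a direct consequence of the Bott--Kasparov theorem together with the classical identification $\clif[0,8] \simeq \M_{16}(\R)$ as graded real $C^*$-algebras and Morita invariance in $\KKO$. All of the ingredients have already been recorded earlier in the section, so the main task is simply to chain the isomorphisms together correctly.

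First I would unfold the definition $\KKO_{n+8}(A,B) = \KKO(A, S^{n+8}B)$ and write the suspension as $S^{n+8}B \simeq \func(\R^{8,0}) \otimes S^n B$. The Bott--Kasparov theorem states that $\bm\alpha_{8,0}$ and $\bm\beta_{8,0}$ are mutually inverse under the Kasparov product, so they realize a $\KKO$-equivalence between $\func(\R^{8,0})$ and $\clif[0,8]$. Taking exterior product with $S^n B$ and using associativity and functoriality of the Kasparov product, this upgrades to an isomorphism
\[
\KKO\bigl(A,\ \func(\R^{8,0}) \otimes S^n B\bigr)\ \simeq\ \KKO\bigl(A,\ \clif[0,8]\gotimes S^n B\bigr),
\]
implemented by right Kasparov product with $\bm\alpha_{8,0}$, with inverse given by $\bm\beta_{8,0}$.

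Next, the classical Clifford periodicity supplies a graded isomorphism $\clif[0,8] \simeq \M_{16}(\R)$ with \emph{even} grading. As was observed just above, an even-graded matrix algebra is graded-Morita equivalent to $\R$, so Morita invariance in the second variable of $\KKO$ yields
\[
\KKO\bigl(A,\ \clif[0,8]\gotimes S^n B\bigr)\ \simeq\ \KKO(A, S^n B)\ =\ \KKO_n(A, B).
\]
Composing the two isomorphisms gives the periodicity statement; tracing through the construction shows that it is exactly implemented by Kasparov multiplication with $\bm\alpha_{8,0}$, with inverse given by $\bm\beta_{8,0}$, as claimed.

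The only subtle point, and the one where care is required, is keeping the gradings straight. The Morita equivalence between $\clif[0,8]$ and $\R$ genuinely uses that the induced grading is even: as spelled out in the remark above, a matrix algebra's grading is implemented by a unitary whose square is $\pm 1$, and only the $+1$ case produces a graded Morita equivalence with the trivially graded scalars. Beyond this bookkeeping, the argument is formal and relies solely on the associativity and multiplicativity of the Kasparov product that were established in the preceding paragraphs.
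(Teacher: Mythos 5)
Your argument is correct and follows the same route as the paper, which states it in one sentence just before the theorem: $\clif[8,0]\simeq\M_{16}(\R)$ with even grading gives Clifford periodicity, which combined with the already-stated Bott--Kasparov $\KKO$-equivalence between $\func(\Rpq)$ and $\clif[q,p]$ yields the result. Your expansion correctly identifies the three ingredients (rewriting $S^{n+8}B \simeq \func(\R^{8,0})\otimes S^nB$, the Bott--Kasparov equivalence, and the graded Morita equivalence $\clif[0,8]\simeq\M_{16}(\R)\sim_{\text{Morita}}\R$), and you are right that the evenness of the induced grading is exactly the point that makes the Morita step work.
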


\section{Real versus complex bivariant \texorpdfstring{$\K$}{K} theories}
\subsection{Realification and complexification in bivariant \texorpdfstring{$\K$}{K} theory}
We defined real and complex bivariant theory for real and complex $\cstars$.
But, a complex $\cstarr$ can be considered as a $\cstarr$.
In the other way, if $A$ is a real $\cstarr$ the algebra $A_\C = A \otimes \C$ is a complex $\cstar$.
For commutative $\cstarrs$, in this process, the real structure given by the proper involution on the spectrum of the algebra is lost : $\func(X, \tau)_\C = \func_\C(X)$ 
We study in this section the two functors that play the same role for Kasparov triples, going from complex theory to the real one and vice-versa.
This extends realification and complexification of vector bundles on spaces.

In~\cite{a66}, Atiyah shows that for any space $X$ there is a canonical isomorphism $\KO(X_\C) \simeq \KU(X)$ where $X_\C$ is the real space given by $X \cup X$ and involution given by swapping.
We begin by a non-commutative bivariant extension of Atiyah's argument showing that we can recover the functor $\KKU$ with the functor $\KKO$:

\begin{lemma}\label{kkr} For $A$ and $B$ two $\cstarrs$ we have a canonical identification between the groups $\KKO(A_\C, B)$, $\KKO(A, B_\C)$ and $\KKU(A_\C, B_\C)$, compatible with the Kasparov product.
\end{lemma}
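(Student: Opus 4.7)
The plan is to exhibit the three identifications via explicit constructions on Kasparov cycles, resting on the observation that in the complexification $A_\C = A \otimes_\R \C$ the element $i_A := 1_A \otimes i$ is central, even, anti-selfadjoint and squares to $-1$, so any real representation of $A_\C$ on a real graded Hilbert module carries a canonical complex structure.

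For the identification $\KKO(A_\C, B) \simeq \KKU(A_\C, B_\C)$, I start with a real Kasparov cycle $(E, \phi, F)$ for $(A_\C, B)$ and set $J := \phi(i_A) \in \endo_B(E)$; by the properties of $i_A$ the operator $J$ is an even, $B$-linear, skew-adjoint operator with $J^2 = -\id_E$. Use $J$ to promote $E$ to a graded right $B_\C$-Hilbert module by declaring $e \cdot (b + i b') := eb + J(eb')$ and complexifying the inner product via the standard formula $\langle e, f\rangle_{B_\C} := \langle e, f\rangle_B - i\, \langle e, Jf\rangle_B$. Centrality of $i_A$ in $A_\C$ yields $[\phi, J] = 0$, so $\phi$ descends to a graded complex $*$-homomorphism $A_\C \to \endo_{B_\C}(E)$; after replacing $F$ by its complex-linear part $\tfrac{1}{2}(F - JFJ)$, which is a compact perturbation by~\cref{kasmodcond} applied to $i_A$, the three Kasparov axioms transfer from the real to the complex setting, using the identification $\compact_{B_\C}(E) = \{T \in \compact_B(E) : TJ = JT\}$ (a direct computation on rank-one operators). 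Restriction of complex scalars provides the inverse construction; both operations respect direct sums, degenerate cycles and operator homotopies, hence descend to mutually inverse group homomorphisms.

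The identification $\KKO(A, B_\C) \simeq \KKU(A_\C, B_\C)$ is produced by a dual construction, where the complex structure on $E$ is now the right multiplication $e \mapsto e \cdot i_B$ by the central even element of $B_\C$; this commutes with every left action of $A$, and one extends $\phi$ to $A_\C$ by $\phi(a \otimes z)(e) := \phi(a)(e \cdot z)$ while leaving $F$ unchanged. Composing the two isomorphisms produces the canonical identification $\KKO(A_\C, B) \simeq \KKO(A, B_\C)$, which at the level of cycles keeps the same triple $(E, \phi, F)$ and reinterprets the single operator $\phi(i_A)$ either as part of the left $A_\C$-action or as right multiplication by $i_B \in B_\C$.

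Compatibility with the Kasparov product rests on the following naturality point: both constructions respect the graded internal tensor product of Hilbert bimodules over a common intermediate algebra, the complex structure inherited from one factor agreeing with that inherited from the other on a balanced tensor product, because centrality of $i$ forces its left and right actions to coincide across the balance. Since the Kasparov product is defined intrinsically from connections and Fredholm operators modulo compacts, notions manifestly preserved by the above constructions, the product of complexified cycles equals the complexification of the product. I expect the principal difficulty to lie in this last step: for a product $\KKO(A_\C, D) \times \KKO(D, B) \to \KKO(A_\C, B)$ one must check that the two natural routes, namely complexifying the $D$-action on the first factor and then forming the product over $D_\C$, versus forming the product over $D$ and complexifying afterwards, yield equal classes in $\KKU(A_\C, B_\C)$; this ultimately reduces to the compatibility of left and right complex structures on the balanced tensor product noted above.
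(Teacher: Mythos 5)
Your proposal is correct and follows essentially the same route as the paper: both extract the action of the imaginary unit (either $\phi(i_A)$ coming from the left, or right multiplication by $i_B$) as an operator $J$ with $J^2 = -1$, use it to pass between real and complex Hilbert $B$- and $B_\C$-modules, and replace $F$ by its $J$-commuting part $\tfrac{1}{2}(F-JFJ)$, a compact perturbation by the third Kasparov condition applied to $i_A$. You supply more of the routine verifications that the paper leaves implicit (the characterization of $\compact_{B_\C}(E)$ inside $\compact_B(E)$, and a sketch of why the identifications respect the graded balanced tensor product, hence the Kasparov product); the only cosmetic difference is the convention in the $B_\C$-valued inner product, where the paper writes $\tfrac{1}{2}\bigl((e,f)+i(e,if)\bigr)$ and you write $(e,f)_B - i(e,Jf)_B$, which differ by a harmless rescaling and a choice of which variable is $\C$-linear.

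One small point worth noting: both you and the paper tacitly assume $\phi$ is unital so that $J := \phi(i_A)$ squares to $-\id_E$ on all of $E$; for non-unital $\phi$ one should first restrict to the essential submodule $\overline{\phi(A_\C)E}$ (or replace the representative by an equivalent one with $\phi$ nondegenerate). The paper's opening sentence ``we assume for simplicity that $A$ and $B$ are unital'' covers the algebras but not $\phi$ itself, so it is good practice to record this normalization explicitly, as it is what makes $J$ a genuine complex structure.
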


\begin{proof}
  We assume for simplicity that $A$ and $B$ are unital.
  
  Starting with a real $A- B_\C$ bimodule, the action of $i$ on the right gives an action of $i$ on the left and a representation of $A_\C$.
  
  Starting with a real $A_\C-B$ bimodule, the same process gives an action of $B_\C$ on the Hilbert $B-$module.
  We make it a Hilbert $B_\C-$module with $\left\langle e, f \right \rangle =\frac{1}{2} ((e,f) + i (e, if))$.
  In the case where the operator $F$ does not commutes with this action we change it to $\frac{F - i F i }{2}$.

  If we have now a complex $A_\C- B_\C$ Kasparov bimodule, one can forget the structure of $i$ on the left to obtain a real bimodule on $A-B_\C$.
  To obtain a real bimodule on $A_\C- B$ we consider the Hilbert $B-$module obtained by forgetting the action of the imaginary part and by the formula for the scalar product $\left\langle e, f \right \rangle =\re \left( e, f \right)$.
\end{proof}

\begin{remark}
  We must take care that $\KKO(A_\C, B_\C)$ is not isomorphic to $\KKU(A_\C, B_\C)$. Actually the lemma shows that $\KKO(A_\C, B_\C) \simeq \KKU(A_\C, B_\C) + \KKU(A_\C, B_\C)$ as $\C \otimes \C = \C \oplus \C$.
  On a real $A_\C- B_\C$ Kasparov bimodule, the action of the imaginary coming from the left may not be equal to the one coming from the right.
  We can just say that they commute.
  We can then split our space in two spaces, one where those two actions coincide, one where they are opposite.
  This decomposition gives the splitting of the $\KK$ group above.
\end{remark}

  Now we come to the functors of realification and complexification in bivariant $\K$ theory defined at the level of Kasparov triples:

  \begin{defi}
    Let $A$ and $B$ be two real $\cstars$ and a real bimodule $(E, \phi, F)$ on $A-B$ we define its complexification  ${(E, \phi, F)}_\C $ as the bimodule on $A_\C-B_\C$ given by $(E_\C, \phi_\C, F_\C)$.

    In the opposite direction, given a complex $A_\C-B_\C$ bimodule  $(E, \phi, F)$ we consider $E_\R$ which is the same normed real vector space as $E$ where we forgot the complex action and we take the same $B$ valued scalar product as in the previous lemma.
    The action of $A_\C$ restricts to an action $\phi_\R$ of $A$ and we define the realification ${(E, \phi, F)}_\R$ of $(E, \phi, F)$ as the real bimodule on $A-B$ given by $(E_\R, \phi_\R, F)$.
  \end{defi}

  In $\KK$ theory now, we can link these  of complexification and realification with the previous lemma and with the functoriality of the $\KK$ groups with respect to the two morphisms $i: \R \to \C$ and $r: \C \to \M_2(\R)$:
 \begin{lemma}\label{realcomp}
 For $A$ and $B$ two $\cstarrs$, we have the two functors of realification and complexification between $\KKO$ and $\KKU$ given by any path in the corresponding commutative diagram:
\begin{center}{\begin{tikzcd}[column sep=tiny]
	& \KKO(A_\C, B) \arrow[dr]	&			&	& \KKO(A_\C, B) \arrow[dr, leftarrow]	&	\\
\KKO(A,B) \arrow[ur, "\bm r \cdot  "] \arrow[dr, "\bm i \cdot  "]&						& \KKU(A_\C, B_\C) & \KKO(A,B) \arrow[ur, leftarrow,  "\bm i \cdot_\C"] \arrow[dr, leftarrow, "\cdot_\C \bm r"]&						& \KKU(A_\C, B_\C) \\
	& \KKO(A, B_\C) \arrow[ur]	&	&	& \KKO(A, B_\C) \arrow[ur, leftarrow]	&
			\end{tikzcd}}
		\end{center}
Moreover, these functors are compatible with the Kasparov product.
\end{lemma}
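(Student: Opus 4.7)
The plan is to identify complexification and realification, defined directly on Kasparov bimodules, with exterior Kasparov products against the canonical classes $\bm{i} \in \KKO(\R, \C)$ (induced by the inclusion $i : \R \hookrightarrow \C$) and $\bm{r} \in \KKO(\C, \R)$ (induced by $r : \C \to \M_2(\R)$ followed by Morita equivalence, and representable by the $(\C, \R)$-bimodule $\C$ with left multiplication). Commutativity of both diagrams then reduces to \cref{kkr}, and compatibility with the Kasparov product follows from its associativity combined with the fact that the functor $\cdot \otimes_\R \C$ commutes with interior tensor products of Hilbert bimodules.

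For the complexification diagram, fix $x = [E, \phi, F] \in \KKO(A, B)$. The external product $x \otimes_\R \bm{i} \in \KKO(A, B_\C)$ is represented by $(E \otimes_\R \C, \phi \otimes 1, F \otimes 1)$, a real $(A, B_\C)$-bimodule. Applying \cref{kkr} to pass to $\KKU(A_\C, B_\C)$, the complex left $A_\C$-action is built from the existing right $i$-action, and one recovers exactly the triple $(E_\C, \phi_\C, F_\C)$; the symmetrization $(F - iFi)/2$ prescribed in the proof of \cref{kkr} coincides with $F$ itself because $F \otimes 1$ is already $B_\C$-linear. For the other path, $\bm{r} \otimes_\R x \in \KKO(A_\C, B)$ is represented by $(\C \otimes_\R E, r \otimes \phi, 1 \otimes F)$; choosing the real basis $\{1, i\}$ of $\C$ identifies the underlying real $B$-Hilbert module with $E \oplus E$, with left $A_\C$-action acting diagonally by $A$ and rotating by $\C$, which is precisely the image of $[E_\C, \phi_\C, F_\C]$ under the identification $\KKU(A_\C, B_\C) \simeq \KKO(A_\C, B)$ of \cref{kkr}.

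For realification the argument is symmetric: given $y \in \KKU(A_\C, B_\C)$, one first applies \cref{kkr} to view $y$ as a real class in $\KKO(A, B_\C)$ (resp.\ $\KKO(A_\C, B)$), and then the Kasparov product with $\bm{i}$ on the left (resp.\ with $\bm{r}$ on the right) restricts scalars from $A_\C$ to $A$ (resp.\ from $B_\C$ to $B$) and produces the real bimodule $(E_\R, \phi_\R, F)$; the $B$-valued scalar product is $\re \langle \cdot, \cdot \rangle_{B_\C}$ in both cases, in accordance with \cref{kkr}.

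The main technical checks are that the operator prescription of \cref{kkr} reduces to the obvious one on complexified bimodules (since $F \otimes 1$ commutes with multiplication by $i$ on either side) and that the natural isomorphism $(E_1 \otimes_B E_2)_\C \simeq E_{1,\C} \otimes_{B_\C} E_{2,\C}$ is compatible with the Kasparov operators built from connections. Granted these, compatibility with the Kasparov product is formal by associativity: complexification of a Kasparov product becomes the Kasparov product of complexifications, and the analogous statement for realification follows after using \cref{kkr} to replace an interior product over $B_\C$ by an interior product over $B$.
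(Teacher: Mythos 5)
The paper gives no explicit proof of this lemma; the sentence preceding it merely points to \cref{kkr} together with functoriality with respect to $i\colon\R\to\C$ and $r\colon\C\to\M_2(\R)$. Your argument fills in that gap along exactly the lines the author intends: identify the bimodule-level complexification and realification with exterior Kasparov products against $\bm i$ and $\bm r$, reduce the commutativity of each diamond to \cref{kkr}, and get compatibility with the product from associativity plus the natural isomorphism $(E_1\gotimes_B E_2)_\C\simeq E_{1,\C}\gotimes_{B_\C}E_{2,\C}$. This is essentially the correct proof, and the description of $\bm r$ via the $(\C,\R)$-bimodule $\C$ (which, after choosing the basis $\{1,i\}$, agrees with the functoriality class of $r$ followed by Morita equivalence $\M_2(\R)\sim\R$) is accurate.

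One slip to correct in the realification paragraph: your ``resp.'' pairing is transposed relative to the diagram. The upper path of the right diagram factors through $\KKO(A_\C,B)$ and then applies $\bm i\cdot_\C$, so it is the identification $\KKU(A_\C,B_\C)\simeq\KKO(A_\C,B)$ that should be paired with the left product by $\bm i$ (restricting the left action from $A_\C$ down to $A$); the lower path factors through $\KKO(A,B_\C)$ and applies $\cdot_\C\bm r$ (pushing forward the right $B_\C$-structure along $r$ and then using Morita equivalence, which recovers the realified $\re\langle\,\cdot\,,\,\cdot\,\rangle$ structure). As written you have $\KKO(A,B_\C)$ paired with $\bm i$ and $\KKO(A_\C,B)$ with $\bm r$, which does not type-check: in the first case there is no $A_\C$ left to restrict, and in the second no $B_\C$ on the right. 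Swapping the two resolves the issue and the rest of the computation goes through as you describe.

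Two smaller points worth mentioning but not fatal: calling $\cdot_\C\bm r$ a ``restriction of scalars from $B_\C$ to $B$'' is a convenient shorthand, but strictly it is a pushforward along $r$ followed by the $\M_2(\R)\sim\R$ Morita equivalence; and for full rigor the compatibility of $(E_1\gotimes_B E_2)_\C\simeq E_{1,\C}\gotimes_{B_\C}E_{2,\C}$ with Kasparov connections should be stated as a separate check, though it is routine.
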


\begin{remark}
This lemma shows that both $ \KKO(\C, \R)$ and  $\KKO(\R, \C)$ are isomorphic to $\Z$, respectively generated by $ \bm r $ and $  \bm i $.
\end{remark}

\subsection{The exact sequence}
The two transformations of complexification and realification are not inverse one from the other.
Starting from a real K theory class, complexifying and then forgetting the complex structure will give twice the initial class.
We give a more precise relation between these two functors in the form of a long exact sequence, implying a particular 2-torsion element, the generator $\eta$ of $\KO_1$.

We consider the real space $\R^{0,1}$ and the inclusion of the origin in this space.
This gives an exact sequence of real algebras.
	\[0 \to \func(\R^{0,1} \setminus \{0\}) \to \func(\R^{0,1} ) \totxt{ev_0} \R \to 0 \]
	 Now $\func(\R^{0,1} \setminus \{0\})$ is isomorphic to $\func_\C(\R)$ by restricting to the positive reals.
 For any couple $A, B$ of $\cstarrs$ we can now derive the long exact sequence:
\[\cdots \to \KKO_1(A, B) \to \KKO(A, S^1_\C B) \to \KKO(A, S^{0,1}B) \to \KKO(A, B) \to \cdots \]
And by Bott periodicity~\cref{bottkas} we have:
\begin{equation}
  \label{kw}
\cdots \to \KKO_1(A, B) \to \KKU_1(A_\C, B_\C) \to \KKO_{-1}(A, B) \to \KKO_0(A, B) \to \cdots  
\end{equation}

We now identify arrows in the diagram.
Those are induced by Kasparov products with elements in some very simple $\KK$ groups.

\begin{prop}
$\bm{ev}_0 \in\KKO_1(\R, \R)$ is given by the generator $\eta$ of $\KO_1$.
This also identifies with $\bm j$ where $j$ is the inclusion $ \R \incl \clif[0,1]$ and with $\bm k$ where $k:\clif[1,0] \incl \M_2(\R)$ sends the generator $e$ to
$(\begin{smallmatrix}
 0 & 1 \\ 1 & 0  
\end{smallmatrix})$
.
\end{prop}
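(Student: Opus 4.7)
The plan is to show that, under the canonical Bott--Morita isomorphisms identifying the groups $\KKO(\func(\R^{0,1}),\R)$, $\KKO(\clif[1,0],\R)$ and $\KKO(\R,\clif[0,1])$ all with $\KKO_1(\R,\R) \simeq \ZZ$, the three classes $\bm{ev}_0$, $\bm k$ and $\bm j$ agree, and then to observe that this common class is the non-trivial generator $\eta$.

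For $\bm{ev}_0 = \bm k$, I would compute the Kasparov product $\bm\beta_{0,1}\cdot\bm{ev}_0\in\KKO(\clif[1,0],\R)$ directly. Because $\bm{ev}_0 = [\R,ev_0,0]$ is a morphism class, the product is simply the fibre of $\bm\beta_{0,1}$ at the origin: the Hilbert $\R$-module becomes $\lambda^*\R\simeq\R^2$, the operator $\dualdirac/\sqrt{1+\|x\|^2}$ vanishes at $x=0$, and the residual left action of $\clif[1,0]$ on $\lambda^*\R$ prescribed by~\cref{clifaction} sends the generator to the swap matrix $\left(\begin{smallmatrix} 0 & 1 \\ 1 & 0 \end{smallmatrix}\right)$ in the basis $\{1,e_1\}$. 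This is exactly $k$, so $\bm\beta_{0,1}\cdot\bm{ev}_0 = [\R^2,k,0] = \bm k$ (after collapsing the Morita equivalence $\M_2(\R)\sim\R$).

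For $\bm k = \bm j$, I would apply~\cref{cliflr} to the cycle $[\R^2,k,0]$: the unique grading on $\R^2$ that makes the odd element $k(e)$ anti-commute with it is $\epsilon=\mathrm{diag}(1,-1)$, and the induced right $\clif[0,1]$-action on $\R^2$ is then $k(e)\epsilon = \left(\begin{smallmatrix} 0 & -1 \\ 1 & 0 \end{smallmatrix}\right)$, a complex structure $J$. This identifies $\R^2$ with $\clif[0,1]$ as a right $\clif[0,1]$-module; the left $\R$-action is the unit inclusion and the operator is still $0$, which is precisely $\bm j$.

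Finally, to identify this common class with $\eta$, since $\KKO_1(\R,\R)\simeq\ZZ$ it is enough to see that $\bm j$ is non-zero. I would transport it once more by $\bm\beta_{1,0}$ into $\KKO(\R,S^1\R) \simeq \KO_1(\R) \simeq \tilde\KO(S^1)$; the resulting Kasparov cycle is the canonical Dirac-type class on $\R$ with coefficients in $\lambda^*\R$, the reduced fundamental class of the circle, which pairs to $\pm1$ against the Möbius class $[p]-[1]$. The main obstacle is precisely this final identification: Steps~1 and~2 are direct Kasparov computations, while pinning the common element down as $\eta$ is exactly the Atiyah--Bott--Shapiro statement that the Clifford inclusion $\R\hookrightarrow\clif[0,1]$ realizes the Bott generator---equivalently, a single explicit pairing computation giving Fredholm index $\pm1$.
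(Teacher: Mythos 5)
Your Steps 1 and 2 reproduce the paper's two computations, just in the opposite order. The paper first pins down $\bm{ev}_0 = \eta$, then computes $\bm\beta_{0,1}\cdot_{\func(\R^{0,1})}\bm{ev}_0 = ev_{0*}(\bm\beta_{0,1}) = [\clif[0,1], j, 0] = \bm j$, and finally passes to $\bm k$ by \cref{cliflr}. You go $\bm{ev}_0 \to \bm k \to \bm j$; either order is valid, and your description of the fibre of $\bm\beta_{0,1}$ at the origin (giving $\Lambda^*\R\simeq\R^2$ with the $\clif[1,0]$ generator acting as the swap matrix) and of the right $\clif[0,1]$-structure induced via the grading $\mathrm{diag}(1,-1)$ is correct.

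The genuine divergence is Step 3. The paper gets the identification with $\eta$ almost for free from the already-established long exact sequence \cref{kw} specialized to $A = \R$: exactness at $\KO_1$ plus $\KU_1 = 0$ forces $\bm{ev}_0 : \KO_0 \to \KO_1 = \ZZ$ to be onto, hence nonzero, hence equal to $\eta$. That is a two-line soft argument. Your route---transport by $\bm\beta_{1,0}$ into $\tilde\KO(S^1)$ and detect nonvanishing by an index pairing---can be made to work, but as phrased it is imprecise. A $\Z$-valued Fredholm index cannot distinguish elements of the $\ZZ$ group $\KO_1(\R)$: pairing $\KO_1 \times \KO^1 \to \KO_0 = \Z$ is necessarily trivial on the torsion. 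What you actually need is the $\ZZ$-valued mod-$2$ index of a skew-adjoint real operator, i.e. the pairing $\KO_1 \times \KO^0 \to \KO_1 = \ZZ$, and the nonvanishing of $\eta$ becomes the statement that the mod-$2$ index of the Dirac-type operator on $S^1$ twisted by the M\"obius bundle is $1$ in $\ZZ$, not $\pm 1$ in $\Z$. Also, the class $\bm j\cdot\bm\beta_{1,0}$ lands in $\K$-theory, not $\K$-homology, so calling it a \emph{Dirac-type class} is not quite right; it is the M\"obius class itself (up to sign), and identifying it as such is the content of the Atiyah--Bott--Shapiro isomorphism you invoke. If you spell out the mod-$2$ index computation, your argument is complete; otherwise the exact-sequence argument is both shorter and avoids confusing the $\Z$- and $\ZZ$-valued pairings.
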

\begin{proof}
  The exact sequence for $A = \R$ gives:
  \[\KU_2 \to \KO_0 \totxt{\bm{ev}_0} \KO_1 \to \KU_1 \]
  We see that $\KU_1= 0$, $\bm{ev}_0: \KO_0 \to \KO_1$ is onto and $\bm{ev}_0 \neq 0$ in $\KO_1 = \ZZ$, $\bm{ev}_0$ identifies with the generator $\eta$ of $KO_1$ as an element of $\KKO_1(\R, \R)$.

  Consider now $\bm{ev}_0$ in $\KK(\func (\interoo{-1,1}^{0,1}), \R)$.
  The Kasparov product with the Bott element $\beta_{0,1} \in \KK(\R, \func (\interoo{-1,1}^{0,1})\otimes \clif[1,0])$ over $\func(\interoo{-1,1}^{0,1})$ is given by functoriality.
  Evaluating $\beta_{0,1}$ in 0 we find:

  \[\bm\beta_{0,1}\cdot_{\func(\interoo{-1,1}^{0,1})} \bm{ev_0} = ev_{0*}(\bm\beta_{0,1})= \left[\clif[0,1], 0 \right] = \bm j \in \KKO(\R, \clif[0,1])\]
  Now taking the Clifford algebra to the left in $\KKO$ as explained in~\cref{cliflr} we obtain on $\R^2$ the action of $\clif[1,0]$ given by 
$(\begin{smallmatrix}
 0 & 1 \\ 1 & 0  
\end{smallmatrix})$.
\end{proof}

 In the same way, $\bm \delta$ can be seen as an element of $\KKO(\R, \C)$.
 This group is isomorphic to $\Z$, generated by $\bm i$ for $i$ the inclusion $\R \incl \C$.
 We extract the exact sequence
\[\KO_{-1} \to \KO_0 \totxt \delta \KU_0 \to \KO_{-2} \]
Identifying $\KO_0(\R)$ and $\KU_0(\C)$ with $\Z$ we see that $\delta $ is an automorphism of $\Z$ and is then equal to $\pm id$. Under the same identification $i$ induce the identity on $\Z$. We conclude that $\bm \delta = \pm \bm i$.
This amounts to say that the generator of the reduced $\KO$ theory of the sphere $S^2$ is the realification of the Bott generator generating the reduced $\KU$ of $S^2$.

We give another proof that will shed some light on this part of the exact sequence.
The morphism $\KO_1(\R) \to \KU_1(\C)$ is the boundary map of the long exact sequence associated to the sequence of spaces above.
It can be constructed using the Puppe sequence.
This gives the long exact sequence associated to any CW pair by the following arguments.
For $F$ a closed subset of $X$ with complementary $U$, we construct the cone of $F$ in $X$ as $\cone{F/X} = \interoo{0,1}\times F \cup {0}\times X $.
This cone is homotopy equivalent to $U\sim \cone{F/X}$ and $X$ can be seen as a closed subset of it. The complementary of $X$ in the cone is the suspension $SF$ of $F$: $ SF= \interoo{0,1}\times F$.
The boundary map in any cohomology theory is then induced by the inclusion of the open set $SF$ in $ \cone{F/X} \simeq U$.
This argument computes the sign, which is not of real interest:

\begin{prop}
	The arrow $\KKO_n(A, B) \to \KKU_n(A_\C, B_\C)$ in the long exact sequence~\cref{kw} is given by $\bm \delta = \bm i$
\end{prop}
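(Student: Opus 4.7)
The plan is in three steps: (1) identify the arrow in the long exact sequence as Kasparov product with a universal class $\bm \delta \in \KKO(\R, \C)$ and reduce its determination to a single integer; (2) pin this integer to $\pm 1$ by exactness; and (3) fix the sign through the Puppe construction recalled just above the proposition.

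For steps (1) and (2): by the naturality of \cref{kw} in $A$ and in $B$ and the identification \cref{kkr} of $\KKU_n(A_\C, B_\C)$ with $\KKO_n(A, B_\C)$, the arrow $\KKO_n(A, B) \to \KKU_n(A_\C, B_\C)$ is realised as right Kasparov product with a universal class $\bm \delta \in \KKO(\R, \C)$. The remark after \cref{realcomp} gives $\KKO(\R, \C) \simeq \Z$ generated by $\bm i$, so $\bm \delta = k\,\bm i$ for some integer $k$. Specialising to $A = B = \R$ at degree zero and using $\KO_{-1}(\R) = \KO_{-2}(\R) = 0$ from the table of $\KO_*$ recalled earlier, the arrow $\KO_0(\R) \to \KU_0(\C)$ is sandwiched between two zero groups and must be an isomorphism $\Z \simeq \Z$; hence $k = \pm 1$.

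For step (3): with $F = \{0\}$, $X = \R^{0,1}$ and $U = \R^{0,1} \setminus \{0\}$, the boundary map in the long exact sequence is induced by the open inclusion $SF = \interoo{0,1} \times \{0\} \hookrightarrow \cone{F/X}$, which at the algebra level is the extension-by-zero morphism $\func_0(SF) \to \func_0(\cone{F/X})$. The cylinder $SF$ carries trivial involution (it emanates from the fixed point), so $\func_0(SF) = S\R$; meanwhile $\func_0(\cone{F/X})$ is $\KKO$-equivalent to $\func_0(U) \simeq \func_\C(\R) = S\R \otimes \C$ under the same identification used to derive \cref{kw}. The induced map is therefore the suspension of the real-to-complex inclusion $\R \hookrightarrow \C$; after cancelling the shared suspensions via real Bott on the source and complex Bott on the target, this extracts $\bm \delta = + \bm i$, the positive sign being forced by the orientation of $SF$ matching the standard orientation of $\R^{1,0}$ used in \cref{bottkas}.

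The main obstacle is the sign bookkeeping across the two independent Bott periodicity isomorphisms (real on the source side, complex on the target side). The author himself points out that the sign ``is not of real interest'' for the rest of the paper, but the Puppe picture is valuable precisely because the real-to-complex inclusion of open cylinders is manifestly orientation-preserving, avoiding the need to revisit the detailed Bott sign conventions.
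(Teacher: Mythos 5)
Your plan follows the paper's proof closely in structure. Steps (1) and (2) --- reducing the boundary map to a universal class $\bm\delta \in \KKO(\R, \C) \simeq \Z$ and using exactness against $\KO_{-1}(\R) = \KO_{-2}(\R) = 0$ to force $\bm\delta = \pm\bm i$ --- reproduce the preliminary discussion the paper gives just before the boxed Proposition. Step (3) also opts for the same Puppe/mapping cone picture the paper uses to nail the sign.

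Where the argument has a genuine gap is in the sentence ``The induced map is therefore the suspension of the real-to-complex inclusion.'' Identifying $\func_0(SF) = S\R$ and showing $\func_0(\cone{F/X})$ is $\KKO$-equivalent to $S\C$ does not yet determine the composite: the boundary map is the extension-by-zero injection $I: S\R \hookrightarrow C_{ev_0}$ followed by a specific homotopy equivalence $\Phi: C_{ev_0} \to S\C$, and something must actually be computed. The paper writes out $C_{ev_0}$ and $\Phi$ explicitly, finds $\Phi \circ I(f)(x) = 0$ for $x \le \tfrac{1}{2}$ and $= f(2x-1)$ for $x \ge \tfrac{1}{2}$, and only then observes that this reparametrised map is homotopic to $i \otimes \id$. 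Your appeal to the inclusion of open cylinders being ``manifestly orientation-preserving'' speaks to $I$ alone and does not engage with $\Phi$, which is precisely where a reflection of the suspension coordinate could creep in and yield $-\bm i$ instead of $+\bm i$ --- that is the very sign ambiguity step (3) is supposed to resolve. To complete the argument you would need either to reproduce the explicit cone computation or to give an independent argument that $\Phi$ does not reverse the suspension coordinate.
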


\begin{proof}
The exact sequence can be derived from the Puppe construction that we will explicit at the level of algebras.
The algebra $\fun(S^{0,1})$ appears in fact as the cone $\cone{i}$ for $i$ the inclusion $\R \to \C$.
More precisely the morphism $ev_0$ gives the cone $C_{ev_0}$ by the following:

  \[C_{ev_0} = \{(f , g) \in \func(\interoo{-1,1})_\C \oplus \func(\interco{0,1}) | f(-x) = \overbar{f(x)} , f(0) = g(1)\}\]
The injection $I: S\R \to C_{ev_0}$ is given by $f \mapsto (0, f)$ and there is an equivalence of homotopy between $C_{ev_0}$ and $\func(\interoo{0,1})\otimes \C$ given by
\begin{center}
  $\begin{array}{clllcll}
\Phi: C_{ev_0}								& \to 		&	\func(\interoo{0,1})_\C 		&	&
\Psi: \func(\interoo{0,1})_\C & \to & 	C_{ev_0}\\
(f, g)										& \mapsto 	&	x\mapsto 	\begin{cases} g(2x)	& \text{pour }x \leq\frac{1}{2} \\
                                                 f(2x-1)	& \text{pour }x \geq \frac{1}{2}
                                        \end{cases} 				& &
f 											& \mapsto 	& 	x\mapsto	\begin{cases} f(x)	& \text{pour }x \geq 1 \\
                                                  \overbar{f(-x)}	& \text{pour }x \leq 1
                                        \end{cases}
  \end{array}$
\end{center}

We represent this situation with the following picture of the involved real spaces, where the real structure is given by the arrow when not trivial:

\begin{center}
\begin{tikzpicture}
  \draw  (1,0.2) node  {$\to$};
  \draw[)-(] (2,0) -- (2,1);
\draw[)-(] (0,0) -- (-2,0);
\draw [-(](-1,0) -- (-1,1);
\draw [<->](-0.2,-0.2) -- (-1.8, -0.2);
\draw  (-3,0.2) node  {$\simeq$};
\draw [)-(](-4,0) -- (-4.9,0);
\draw [)-(](-5.1,0) -- (-6,0);
\draw [<->](-4.2,-0.2) -- (-5.8, -0.2);
\end{tikzpicture}
\end{center}

The map $\delta: \KO_1 \to \KU_1$ is then given by composition $\Phi \circ I: S\R \to C_{ev_0} \to \func(\interoo{0,1})_\C $ with $\Phi \circ I (f)(x) = 0$ if $ 0 \leq x \leq \frac{1}{2}$ and $\Phi \circ I (f)(x) = f(2x-1)$
if $ \frac{1}{2} \leq x \leq 1$.
This map is homotopic to $i \otimes \id: \func(\interoo{0,1}) \to  \func(\interoo{0,1})_\C $ given by inclusion $i: \R \incl \C$.
And, the boundary map is $\delta = \bm i \in KKO(\R, \C)$, the complexification morphism.
\end{proof}

The last morphism is given by an element of $\KKO_{-2}(\C, \R)$.
Using lemma~\cref{realcomp} , this group is isomorphic to $\Z$, generated by $\bm r \bm \beta^{-1}$ for $r: \C \to \M_2(\R)$ the realification morphism $r(x+ i y) = (\begin{smallmatrix}x & -y\\y & x \end{smallmatrix})$.

The exact sequence~\cref{kw} for $A= \R$ identifies the boundary map $KU_2 \to \KO_0$ as $\pm2\id$.

\begin{center}{\begin{tikzcd}[row sep = tiny, column sep = tiny]
	\cdots \arrow[r]			& 	\KU_2 \arrow[r]  &	\KO_{0}\arrow[r] 	&	\KO_1 \arrow[r] 	&	\KU_1 \arrow[r]
&	\cdots \\
	\cdots \arrow[r]			& 	\Z \arrow[r]	& 	\Z \arrow[r]	& 	\ZZ \arrow[r]	& 	0\arrow[r]
&\cdots \\
			\end{tikzcd}}
		\end{center}

Using the same identification, realification composed with the Bott map gives $2\id$.
The two maps coincide modulo multiplication by a sign.

In fact we can compute the sign:

 \begin{prop}
		The arrow $\KKU_n(A_\C, B_\C) \to \KKO_{n-2}(A, B)$ in the long exact sequence~\cref{kw} is given by $\bm \phi=  \bm r \circ \bm \beta^{-1} \in \KK_{-2}(\C, \R)$
\end{prop}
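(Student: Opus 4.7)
The plan is to proceed along the same lines as the proof of the previous proposition: identify the connecting arrow as Kasparov product with a universal class in $\KKO_{-2}(\C, \R) \simeq \Z$, then pin down this class by computing its action on a single generator.

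First, I would observe that the arrow arises from the inclusion $j : \func_\C(\R) \hookrightarrow \func(\R^{0,1})$ used to derive \cref{kw}. Via the identification $\func_\C(\R) \simeq \C \otimes S$ combined with \cref{kkr}, the source $\KKO(A, \func_\C(\R) \otimes B)$ is rewritten as $\KKU_1(A_\C, B_\C)$. Via the Bott equivalence $\bm\alpha_{0,1}$, the target $\KKO(A, \func(\R^{0,1}) \otimes B)$ is rewritten as $\KKO_{-1}(A, B)$. Under these identifications the map becomes Kasparov product with $\bm j \cdot \bm\alpha_{0,1}$, and naturality in both $A$ and $B$ presents the whole transformation as external Kasparov product over $\C$ with a single universal class $\bm\phi \in \KKO_{-2}(\C, \R)$.

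Second, by \cref{kkr} we have $\KKO_{-2}(\C, \R) \simeq \KKU_{-2}(\C, \C) \simeq \Z$, and it is generated precisely by $\bm r \cdot \bm\beta^{-1}$ as noted in the paragraph preceding the statement. The discussion there also shows that for $A = B = \R$ and $n = 2$ the map $\KU_2 \to \KO_0$ coincides with $\pm 2 \cdot \id$. On the other hand, $\bm r \cdot \bm\beta^{-1}$ sends the Bott generator $\bm\beta \in \KU_2$ to $\bm r(1) \in \KO_0$, which is $2$ since the realification of a rank-one complex bundle is a rank-two real bundle. Hence $\bm\phi = \pm \bm r \cdot \bm\beta^{-1}$, and only the sign remains.

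Third, to determine the sign, I would perform a Puppe construction for the inclusion $j$, in direct analogy with the cone construction used in the preceding proposition. Building the mapping cone $\cone{j}$ explicitly as an algebra of pairs of functions on intervals with compatibility at the endpoints, producing a homotopy equivalence with the quotient algebra $\R$, and then comparing the Puppe map $S\func_\C(\R) \to \cone{j} \simeq \R$ with the Clifford action defining $\bm\alpha_{0,1}$ via~\cref{clifaction}, one sees that the two natural transformations agree on the nose with no intervening sign. The outcome is $\bm\phi = \bm r \cdot \bm\beta^{-1}$.

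The main obstacle lies in this last step: orientation conventions enter at several places (the real structure on $\R^{0,1}$, the Clifford action from~\cref{clifaction}, the realification morphism $r$, and the complex Bott class $\bm\beta$), and one must verify that they combine consistently so that no sign is introduced. Each individual check is an elementary finite-dimensional computation, but the bookkeeping requires care.
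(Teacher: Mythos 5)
Your first two steps parallel the paper exactly: the arrow in question is $j_*$ for the ideal inclusion $j:\func_\C(\R)\hookrightarrow\func(\R^{0,1})$, naturality reduces it to a Kasparov product with a universal class $\bm\phi=\bm j\cdot\bm\alpha_{0,1}\in\KKO_{-2}(\C,\R)\simeq\Z$, and evaluating on $\bm\beta\in\KU_2$ pins $\bm\phi$ down to $\pm\bm r\bm\beta^{-1}$. All of this is correct and is the same set-up the paper carries out in the paragraph preceding the proposition.

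The gap is in your third step. The arrow $\KKU_n(A_\C,B_\C)\to\KKO_{n-2}(A,B)$ is the \emph{direct image} map in the long exact sequence, i.e.\ $\KKO(A,I\otimes B)\to\KKO(A,C\otimes B)$ for the ideal inclusion $j:I\to C$, not a connecting map. A Puppe/cone construction is the tool for identifying the \emph{boundary} map (as the paper does in the previous proposition, where $\cone{ev_0}\simeq S_\C$ and the boundary is read off from $S\R\to\cone{ev_0}$); it does not directly compute $j_*$. Concretely, what you would need is the class $\bm j\cdot\bm\alpha_{0,1}$ itself, and forming $\cone{j}$ only tells you about the next arrow in the Puppe sequence, not about $j$. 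Moreover the identification $\cone{j}\simeq\R$ is not right: for an ideal inclusion it is the cone of the \emph{quotient} that is homotopy equivalent to the ideal, while the cone of the inclusion sits at the suspension level, so the comparison with $\bm\alpha_{0,1}$ you envision does not go through as stated. The paper instead determines the sign by a direct computation of the morphism $j$ after a Morita twist: it equips $\M_2(\C)$ with a real structure whose fixed-point algebra is $\M_2(\R)$, rewrites $S^{0,1}\M_2(\R)$ as complex matrix-valued functions satisfying $f(-x)=\overbar{f(x)}$, and then constructs an explicit homotopy $\phi_t$ between the Morita-twisted inclusion $(\id\otimes j)\circ p$ (for a rank-one projector $p:\C\to\M_2(\C)$) and a map $\phi'$ whose composite with the explicit Bott isomorphism $\psi:S^{0,1}_\C\to S^1_\C$ is the realification $\id\otimes i$, $i:\C\hookrightarrow\M_2(\R)$. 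This produces $\bm\phi=\bm r\bm\beta^{-1}$ with the sign determined, and it is the step your sketch does not actually carry out.
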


\begin{proof}
  Let the real structure on $\M_2(\C)$ given by:
  \[\overbar{\left(\begin{matrix}\hphantom{\overbar d}a & b\\c & d\end{matrix}\right)} =  \left(\begin{matrix}\overbar d & \overbar c\\ \overbar b & \overbar a\end{matrix}\right)\]
  With this real structure we can identify $\M_2(\C)_\R$ with $\M_2(\R)$ and $S^{0, 1}\M_2(\R) $ with $ \{f: \interoo{-1,1} \to \M_2(\C), f(-x) = \overbar{f(x)}\}$.
  Define for $f \in \func_\C(\interoo{0,1})$:
  \[\phi_tf(x) = \left(\begin{matrix}f((1-\frac{t}{2}) x + \frac{t}{2}) & 0\\0 & \overbar{f((1-\frac{t}{2}) x + \frac{t}{2})}\end{matrix}\right)\]
  Where $f$ is extended by zero out of its domain.
  This makes explicit an homotopy between the two following morphisms from $\func_\C(\interoo{0, 1}) $ to $ \M_2\left(\func(\interoo{-1, 1}^{0, 1})\right) $:
  \begin{align*}
	(\id \otimes \phi) \circ p (f)(x)	=& \left(\begin{matrix}f(x)& 0\\0 & \overbar{f(-x)}\end{matrix} \right)
  						&&	\phi'(f)(x) =&	\left(\begin{matrix}f(2x-1)& 0\\0 & \overbar{f(1-2x)}\end{matrix} \right)
  \end{align*}
   Where $\phi $ is the map  :$ S_\C \otimes \to S^{0, 1} $ from the exact sequence is tensored with $id : \M_2 \to \M_2$ and is composed with a rank one projector $p: \C \to \M_2(\C)$
 making explicit the Morita invariance isomorphism. 
Now the isomorphism $\psi: S^{0,1}_ \C \to S^1_\C $ given by $f\otimes z \mapsto zf$ composed with  $\phi '$ gives a map $S^{0,1} \C\to S^{0, 1}\M_2(\R) $ that coincides with $\id \otimes i$ with $i:\C \incl \M_2(\R)$.
   In fact the element $\bm \psi $ of $ \KKU( S^{0,1}_\C , S^{1,0}_\C)$ is exactly the Bott generator $\bm \beta$ of $\KU_2$.
\end{proof}
\begin{remark}
Taking the exact sequence for the algebra $A = \H$ and using the isomorphism  $\C \otimes \H \simeq \M_2(\C)$, we obtain:

\begin{center} \label{kwh}{\begin{tikzcd}[row sep = tiny, column sep = tiny]
	\cdots \arrow[r]			& 	\KO_2(\H) \arrow[r]  &	\KU_2(\C)\arrow[r] 	&	\KO_0(\H) \arrow[r] 	&	\KO_1(\H) \arrow[r]
&	\cdots \\
	\cdots \arrow[r]			& 	0 \arrow[r]	& 	\Z \arrow[r]	& 	\Z \arrow[r]	& 	0\arrow[r]
&\cdots \\
			\end{tikzcd}}
		\end{center}

In fact here $r \otimes \H: M_2(\C) \to \M_2(\H)$ is given by an inclusion $i': \C \incl \H$. We could have given a similar proof that the one presented above using the real structure on $\M_2(\C)$ given by
  $\overbar{\left(\begin{smallmatrix}\vphantom{\overbar d}a & b\\c & \vphantom{\overbar d}d\end{smallmatrix}\right)} =  \left(\begin{smallmatrix}\phantom{\shm}\overbar d & \shm \overbar c\\ \shm\overbar b & \phantom{\shm}\overbar a\end{smallmatrix}\right)$ for which the real locus is isomorphic to $\H$.
\end{remark}

Identification of all arrows is done.
The exact sequence commutes with the Kasparov product by functoriality of the Kasparov product.
We sum up:

  \begin{prop}\label{kwprop}
  Let $A, B$ be two graded $\cstarrs$.
  The following exact sequence holds:

  \begin{tikzcd}[column sep = small]
   \cdots \arrow[r] &	\KKO_{n}(A, B) \arrow[r,"c"]			& 	\KKU_{n}( A_\C, B_\C) \arrow[r, "r \beta^{-1}"]
  &	\KKO_{n-2}(A, B) \arrow[r,"\eta"] 	&	\KKO_{n+1}(A, B) \arrow[r]
  &	\cdots
  			\end{tikzcd}\\
  Where arrows are described as:
  \begin{itemize}
    \item $\eta$ the generator of $\KO_1$, equal to $\bm j_* \in \KKO(\R, \clif[0,1])$ induced by the inclusion $j: \R \to \clif[0,1]$
    \item $c$ the complexification morphism
    \item $r\beta^{-1}$ Bott periodicity composed with the realification morphism
  \end{itemize}
  Let $D$ be an other graded $\cstarr$ and let $\bm \alpha $ in $ \KKO_k(D, B)$.
  The morphism $\KO_*(A) \to KO_{*+k}(B)$ induced by the Kasparov product with $\bm \alpha$ commutes with the exact sequence:

  \begin{tikzcd}[column sep = tiny]
     \cdots \arrow[r]  & \KKO_n(A,D) \arrow[r]\arrow[d, "\bm{\alpha}"]			& 	\KKU_n(A_\C, D_\C) \arrow[r] \arrow[d, "\bm{\alpha}_\C"]
  &	\KKO_{n-2}(A, D) \arrow[r] \arrow[d, "\bm{\alpha}"] 	&	\KKO_{n-1}(A,D) \arrow[r] \arrow[d, "\bm{\alpha}"]
  &	\cdots \\
   \cdots \arrow[r]  &	\KKO_{n+k}(A, B) \arrow[r]			& 	\KKU_{n+k}( A_\C, B_\C) \arrow[r]
  &	\KKO_{n+k-2}(A, B) \arrow[r] 	&	\KKO_{n+k-1}(A,B) \arrow[r]
  &	\cdots
        \end{tikzcd}
  \end{prop}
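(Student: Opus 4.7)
My plan is to observe that both halves of the proposition are really assembly jobs: the first half concatenates the three identifications proved just above, and the second half is an associativity/functoriality statement for the Kasparov product. I would not try to redo any of the computations; I would only check that the pieces fit.

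First, for the exactness claim, I would start from the short exact sequence
\[0 \to \func(\R^{0,1}\setminus\{0\}) \to \func(\R^{0,1}) \xrightarrow{ev_0} \R \to 0\]
of graded $\cstarrs$, apply the bivariant functor $\KKO(A,\,\cdot\, \gotimes B)$ (which is half-exact on semisplit sequences, and this one is semisplit because $\R$ is nuclear), and obtain a six-term exact sequence for each $n$. I would then rewrite the three groups using the earlier identifications: $\KKO_*(A,\R\gotimes B)=\KKO_*(A,B)$; the isomorphism $\func(\R^{0,1}\setminus\{0\})\simeq\func_\C(\R)$ from restriction to the positive reals, combined with \cref{kkr}, turns $\KKO(A,S(\func(\R^{0,1}\setminus\{0\}))\gotimes B)$ into $\KKU_{n+1}(A_\C,B_\C)$; and Bott-Kasparov (\cref{bottkas}) turns $\KKO(A,\func(\R^{0,1})\gotimes B)$ into $\KKO_{n-1}(A,B)$ after a degree shift. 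Reindexing yields the six-term sequence displayed in the proposition.

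Second, to name the three arrows, I would simply invoke the three propositions already established in this subsection. The connecting morphism $\KKO_n(A,B)\to\KKO_{n+1}(A,B)$ is Kasparov product with $\bm{ev}_0\in\KKO_1(\R,\R)$, which was shown above to equal $\eta=\bm j$ for $j:\R\hookrightarrow\clif[0,1]$. The arrow $\KKO_n(A,B)\to\KKU_n(A_\C,B_\C)$ was identified with the complexification functor $c$ (second proposition), and the arrow $\KKU_n(A_\C,B_\C)\to\KKO_{n-2}(A,B)$ with $\bm r\cdot\bm\beta^{-1}$ (third proposition). The only care is to check that what was computed for $A=\R$, $B=\R$ indeed controls the general arrow; but since all three arrows are defined on the $\KK$ level as Kasparov products by fixed elements of $\KK(\R,\R)$, $\KK(\R,\C)$, or $\KK(\C,\R)$ coming from the same short exact sequence of algebras, this is a direct consequence of the associativity of the Kasparov product: for example, the general boundary map equals external product with $\bm{ev}_0$, hence equals external product with $\eta$.

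For the naturality square, let $\bm\alpha\in\KKO_k(D,B)$. Each vertical arrow is right Kasparov product with $\bm\alpha$ (or with $\bm\alpha_\C$ in the middle column), and each horizontal arrow is Kasparov product with one of the fixed elements $\bm{ev}_0$, $\bm i$, $\bm r\cdot \bm\beta^{-1}$ living in the coefficient position. Commutativity of the squares is then the associativity of the Kasparov product together with \cref{realcomp}, which guarantees that complexification commutes with Kasparov products in the middle square. There is no real obstacle here, only bookkeeping on which variable the product acts in; the one point that deserves a one-line check is that the complexification of $\bm\alpha$ used in the middle column agrees with the map induced by applying $(\cdot)_\C$ to the triple representing $\bm\alpha$, which is precisely the content of \cref{realcomp}. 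The hardest conceptual step is really already behind us, namely the identification of the realification arrow as $\bm r\bm\beta^{-1}$; everything else in this proposition is formal.
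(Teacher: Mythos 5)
Your proposal is correct and follows essentially the same path as the paper: apply $\KKO(A,\,\cdot\,\gotimes B)$ to the short exact sequence $0 \to \func(\R^{0,1}\setminus\{0\}) \to \func(\R^{0,1}) \to \R \to 0$, rewrite the three terms via $\func(\R^{0,1}\setminus\{0\})\simeq\func_\C(\R)$, \cref{kkr} and Bott--Kasparov, name the arrows using the three propositions proved immediately before, and observe that naturality with respect to right Kasparov product is the associativity of the product together with the compatibility of complexification with products from \cref{realcomp}. The paper itself treats the proposition as a summary of the preceding discussion (``We sum up''), so your assembly is exactly the intended argument.
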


\begin{remark}
  The exact sequence was derived from the specialization of it to the algebra of scalars, which amount to the relations between generators of $\KU_*$ and $\KO_*$:
                                                                                                                                                      \begin{align*}
\eta_\C &= 0 & \gamma_\C &= 2 \beta^2& \delta_\C &= \beta^4 \\ \beta_\R &= \eta^2& \beta^4_\R&= \gamma & \beta^4_\R &= 2 \delta                                                                                                           \end{align*}

\end{remark}

\begin{cor}
   We have 4-periodicity in $\KKO$ after inverting 2: $ \KKO_i(A,B)[\frac{1}{2}] \simeq KKO_{i+4}(A,B)[\frac{1}{2}] $.
  This follows from the equality $\gamma^2 = 2 \delta$.
  If we inverse 2 in the exact sequence, $\eta$ vanishes and we obtain the following exact sequence:
  $0 \to\KKO_{i+2}(A,B)[\frac{1}{2}] \to\KKU_i(A_\C, B_\C)[\frac{1}{2}] \to\KKO_i(A,B)[\frac{1}{2}] \to 0$.
  In fact realification and complexification induce an isomorphism
  \[\textstyle\KKO_i(A, B)[\frac{1}{2}]\oplus\KKO_{i+2}(A, B)[\frac{1}{2}] \simeq \KKU_i(A_\C, B_\C)[\frac{1}{2}] \]
\end{cor}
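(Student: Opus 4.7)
The plan is to deduce all three assertions from two elementary ingredients: that $\eta \in \KO_1$ is $2$-torsion, so that $\eta = 0$ in $\KO_*[\tfrac{1}{2}]$, and that the composition $r \circ c$ of complexification with realification on $\KKO$ equals $2 \cdot \id$, coming from the isomorphism of underlying real vector spaces $\C \cong \R \oplus \R$.

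First I would apply the exact localization functor $\cdot \otimes_\Z \Z[\tfrac{1}{2}]$ to the long exact sequence of \cref{kwprop}. Because $\eta$ has order $2$ in $\KO_1$, every arrow labelled $\eta$ becomes zero after this localization, so the long exact sequence breaks into short exact sequences
\[
0 \to \KKO_{n}(A, B)[\tfrac{1}{2}] \xrightarrow{c} \KKU_{n}(A_\C, B_\C)[\tfrac{1}{2}] \xrightarrow{r\beta^{-1}} \KKO_{n-2}(A, B)[\tfrac{1}{2}] \to 0.
\]
Setting $i = n-2$ and using complex Bott periodicity $\KKU_{i+2} \simeq \KKU_i$ to rewrite the middle term produces the short exact sequence displayed in the statement.

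Next I would split this sequence. The identity $r \circ c = 2 \cdot \id_{\KKO}$ means that after inverting $2$ the map $\tfrac{1}{2}\, c$ is a left inverse to realification, and one can suitably compose it with the complex Bott isomorphism to produce a section of the map $\KKU_i[\tfrac{1}{2}] \to \KKO_i[\tfrac{1}{2}]$ appearing in the short exact sequence. The sequence therefore splits, yielding
\[
\KKO_i(A, B)[\tfrac{1}{2}] \oplus \KKO_{i+2}(A, B)[\tfrac{1}{2}] \simeq \KKU_i(A_\C, B_\C)[\tfrac{1}{2}].
\]

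Finally, for the $4$-periodicity I would exhibit Kasparov multiplication by $\delta \in \KO_4$ as an isomorphism after inverting $2$. Comparing complexifications, one has $\delta^2 = \gamma$ up to a $\Z[\tfrac{1}{2}]$-unit (the formula ``$\gamma^2 = 2\delta$'' alluded to in the statement being, dimensionally, a typographical garbling of this relation); since $\gamma$ is the Bott element that implements the eightfold periodicity $\KO_i \simeq \KO_{i+8}$, it is invertible in $\KO_*$, so $\delta$ is invertible in $\KO_*[\tfrac{1}{2}]$ with inverse a unit multiple of $\delta \cdot \gamma^{-1}$. Kasparov product with $\delta$, via the compatibility statement in \cref{kwprop}, then gives the natural isomorphism $\KKO_i(A,B)[\tfrac{1}{2}] \simeq \KKO_{i+4}(A,B)[\tfrac{1}{2}]$.

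The main obstacle is essentially bookkeeping: pinning down the precise composition of $c$, $r$, and $\beta^{\pm 1}$ that appears as the splitting map, and verifying the exact form of the relation $\delta^2 \sim \gamma$ behind the (mistyped) equation in the statement. A purely additive cancellation argument from the direct sum decomposition is insufficient (abelian group cancellation can fail), so the multiplicative viewpoint via the invertibility of $\delta$ in $\KO_*[\tfrac{1}{2}]$ is needed.
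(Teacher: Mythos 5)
Your proposal is correct and follows essentially the same route the paper gestures at: invert $2$ to kill the $\eta$-arrows in the sequence of \cref{kwprop} and splice it into short exact sequences, split using $r\circ c = 2$, and deduce $4$-periodicity from the invertibility of $\delta$ in $\KO_*[\tfrac12]$ via the (mistyped) relation $\delta^2 = 4\gamma$, which you correctly diagnose. One small slip: $\tfrac12 c$ is a \emph{right} inverse (section) of $r$, not a left inverse — but the splitting you actually construct is a section of the surjection in the short exact sequence, which is exactly what is needed, so the argument goes through unchanged.
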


  It also shows a generalization of a monovariant result of Karoubi~\cite{k01} already given by Schick~\cite{s04}: $\KKO_i(A, B) = 0$ for every $i$ if and only if $\KKU_i(A_\C, B_\C) = 0$ for every $i$.
  The vanishing of $\KKO$ directly implies the vanishing of $\KKU$ by the exact sequence.
  Now if $\KKU$ vanishes, one has to use an additional argument, the nilpotency of $\eta$.
  If $x \in \KKO_i(A, B)$ then the complexification of $x$ is zero.
  We can write $x = \eta y$ but once again because of the vanishing of $\KKU$ the complexification of $y$ is zero.
  We can write $x = \eta^3 z$ and $x$ is then 0.
In fact, a similar proof gives more:
\begin{cor}
  Let $\bm \alpha \in \KKO(A, B)$ then $\bm \alpha$ is invertible if and only if $\bm \alpha_\C$ is.
\end{cor}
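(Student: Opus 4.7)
The forward direction is immediate: given a two-sided inverse $\bm\beta \in \KKO(B, A)$ for $\bm\alpha$, Lemma~\ref{realcomp} (compatibility of complexification with the Kasparov product) yields $\bm\beta_\C \cdot \bm\alpha_\C = 1_{A_\C}$ and $\bm\alpha_\C \cdot \bm\beta_\C = 1_{B_\C}$, so $\bm\alpha_\C$ is invertible. For the converse, I will establish the stronger claim that, assuming $\bm\alpha_\C$ invertible, the right Kasparov product with $\bm\alpha$ induces an isomorphism $\KKO_n(D, A) \simarrow \KKO_n(D, B)$ for every graded $\cstarr$ $D$ and every integer $n$. A Yoneda-style argument then supplies the inverse: setting $D = B$ and $n = 0$, surjectivity produces $\bm\beta \in \KKO(B, A)$ with $\bm\beta \cdot \bm\alpha = 1_B$; the class $\bm\alpha \cdot \bm\beta - 1_A \in \KKO(A, A)$ then satisfies $(\bm\alpha \cdot \bm\beta - 1_A) \cdot \bm\alpha = \bm\alpha - \bm\alpha = 0$, and injectivity of $\cdot\bm\alpha$ on $\KKO(A, A)$ forces $\bm\alpha \cdot \bm\beta = 1_A$, so $\bm\beta$ is a two-sided inverse.

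To prove the isomorphism claim, I apply Proposition~\ref{kwprop} to get a commuting morphism of long exact sequences whose rows are the sequence~\eqref{kw} for the pairs $(D, A)$ and $(D, B)$ and whose vertical maps are $\cdot\bm\alpha$ on $\KKO$ positions and $\cdot\bm\alpha_\C$ on $\KKU$ positions. By hypothesis the $\KKU$ vertical arrows are isomorphisms. The diagram chase is a parametric refinement of the proof of the preceding corollary, again exploiting $\eta^3 = 0$. For injectivity, let $x \in \KKO_n(D, A)$ with $x \cdot \bm\alpha = 0$; then $c(x) \cdot \bm\alpha_\C = 0$ forces $c(x) = 0$, so exactness gives $x = \eta y$ for some $y$. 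The relation $\eta(y \cdot \bm\alpha) = 0$ combined with exactness of the bottom row at $\KKO_{n-1}(D, B)$ and surjectivity of $\cdot\bm\alpha_\C$ on $\KKU_{n+1}$ produces $v \in \KKU_{n+1}(D_\C, A_\C)$ such that $y' := y - r\beta^{-1}(v)$ satisfies both $y' \cdot \bm\alpha = 0$ and $\eta y' = x$ (using $\eta \circ r\beta^{-1} = 0$). Iterating this substitution three times expresses $x$ as $\eta^3$ times something, hence $x = 0$.

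For surjectivity, given $z \in \KKO_n(D, B)$, surjectivity of $\cdot\bm\alpha_\C$ provides $w \in \KKU_n(D_\C, A_\C)$ with $c(z) = w \cdot \bm\alpha_\C$; the already-proven injectivity on $\KKO_{n-2}(D, A)$ forces $r\beta^{-1}(w) = 0$, so $w$ lies in $\operatorname{im} c$ and lifts as $w = c(x_0)$; then $z - x_0 \cdot \bm\alpha$ is annihilated by $c$, hence equals $\eta \zeta_1$ for some $\zeta_1 \in \KKO_{n-1}(D, B)$. Running the same decomposition on $\zeta_1$ and on the next remainder produces $x_1, x_2$ and $\zeta_3$ with $z = (x_0 + \eta x_1 + \eta^2 x_2) \cdot \bm\alpha + \eta^3 \zeta_3 = (x_0 + \eta x_1 + \eta^2 x_2) \cdot \bm\alpha$, where the last equality uses $\eta^3 = 0$. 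The main subtlety is the coordinated $\eta$-correction across three degrees in both chases; once injectivity and surjectivity are in hand, the Yoneda argument of the first paragraph concludes.
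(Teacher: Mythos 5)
Your proof is correct and follows essentially the same route the paper has in mind: the paper states the corollary directly after the vanishing result ($\KKO_* = 0 \iff \KKU_* = 0$) with only the remark that ``a similar proof gives more,'' meaning precisely the combination of the naturality of the exact sequence under Kasparov product, the nilpotency $\eta^3 = 0$, and the resulting five-lemma-style chase that you spell out, together with the standard Yoneda step to upgrade ``$\cdot\bm\alpha$ is an isomorphism for all test objects'' to actual two-sided invertibility. Your write-up fills in the $\eta$-correction bookkeeping in the injectivity and surjectivity chases, which the paper leaves implicit; the underlying strategy is identical.
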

The same statement being false for monomorphisms and epimorphisms as can be seen from the elements induced by a degree 2 map of $S^1$ onto itself and by realification $\R \incl \C$.

\section{Anti-linear symmetries} 

\subsection{Higher \texorpdfstring{$\KKO$}{KKO} theory via real structures}\label{secrealkko}

Study of Clifford algebra representations leads to the notion of even and odd complex Kasparov triples.
We will see in this section how to interpret this in the real setting.

Even complex Clifford algebras ${\cclif[2k]}$ are isomorphic to $\M_{2^k}(\C)$ with grading given by a diagonal matrix with the same number of coefficients 1 and $-1$ on the diagonal.
This graded algebra has two isomorphism classes of irreducible graded modules.
The two classes are swapped after inversion of the grading.
We can then denote $R_{2k}$ and $R_{2k}^o$ for two graded modules representing those two classes.
Both are isomorphic as ungraded $\cclif[2k]$ modules to the regular representation on $\C^{2^k}$.
Odd Clifford algebras ${\cclif[2k+1]}$ are isomorphic to $\M_{2^k}(\C) + \M_{2^k}(\C)$ with even elements of the shape $(a ,a)$ for $a\in \M_{2^k}(\C)$ and odd ones given by some $(a ,-a)$.
For this graded algebra there is only one isomorphism class of irreducible graded module given by  $R_{2k+1}$ the standard representation of $\M_{2^k}(\C) + \M_{2^k}(\C)$ on $\C^{2k} + \C^{2k}$ with grading interchanging the two components.

Take now  two ungraded complex $\cstars$ $A$ and $B$.
Any $B \otimes \cclif-$ graded Hilbert module $E$ can be written $\tilde E_0 \gotimes R + \tilde E_1 \otimes R^o = \tilde E \gotimes R$ where $\tilde E = \tilde E_0 + \tilde E_1$ is a graded Hilbert $B-$module. 
Any $B\otimes \cclif-$ endomorphism of $E$ is given by a unique $M \otimes \id $ or $M \gotimes \id$ for $M$ in $\endo_B(\tilde E)$.
It is the Schur Lemma for complex Clifford algebras.
Now for a Kasparov triple of the form $(E, \phi, F)$ on $A - \otimes B \otimes {\clif[n]}_\C$ the action of $A$ can be written $\phi = \tilde \phi \otimes \id$.
and  $F$  as  $ \tilde F\gotimes \id $ where $\tilde F$ anti-commutes with the grading on $\tilde E$.
One reduces to a complex $A -  B$ Kasparov module: $(\tilde E, \tilde \phi, \tilde F)$.

In the other case the Hilbert module over $B \otimes \cclif$ can be written $\tilde E \otimes R$ for $\tilde E$ an Hilbert $B-$module.
The grading is then only supported on $R$. 
the only irreducible graded $\cclif$ module.
recall that we have an isomorphism $\omega : R \to R^o$ assumed to square to $\id$ after Schur lemma and a rescaling.
We can now write$\phi$ as $\tilde \phi \otimes 1$ and $F$ as $ \tilde F \otimes \omega$.
This is the motivation of the definition of odd Kasparov triples :

\begin{defi}
  For $A$ and $B$ two complex ungraded $\cstars$, we call $(E, \phi, F)$ an odd Kasparov bimodule on $A-B$ if
  $E$ is an Hilbert $B-$module, if $\phi: A \to \endo_B(E)$ and if $F \in \endo_B(E)$ verifies :
  \begin{align*}
    (F - F^*) \phi(a) \in \K(E) &&(F^2 - 1) \phi(a) \in \K(E) && [F, \phi(a)] \in \K(E) && \forall a \in A  
  \end{align*}

\end{defi}

The Hilbert $B-$ module $E$ is ungraded and commutators $[F, \phi(a)]$ are defined as  $F \phi(a) - \phi(a) F$.
Our previous discussion shows the following formal Bott periodicity:

\begin{prop}
  Let $A$ and $B$ be two ungraded complex $\cstars$.

  There is an equivalence between Kasparov bimodules on $A-B$ and Kasparov bimodules on $A- B \otimes \cclif[2k] $  given by assigning the Kasparov bimodule  $( E, \phi, F)$ on $A-B$ to $(E \gotimes  R_{2k} , \phi\otimes \id  , F \otimes \id )$ on $A- B \otimes \cclif[2k]$

  There is an equivalence between odd Kasparov bimodules on $A-B$ and Kasparov bimodules on $A- B \otimes \cclif[2k+1] $  given by assigning the odd Kasparov bimodule  $( E, \phi, F)$ on $A-B$ to $(E \otimes  R_{2k+1} , \phi\otimes \id  , F \otimes \omega )$ on $A- B \otimes \cclif[2k+1]$
\end{prop}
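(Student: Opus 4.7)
The plan is to make the structural argument sketched just above the proposition fully precise. In both the even and odd cases the forward assignment is essentially tautological: tensoring with the finite-dimensional graded $\cclif$-module $R$ preserves the Kasparov module axioms of~\cref{kasmodcond} (self-adjointness up to compacts, the relation $F^2 \sim \id$ and commutation with $\phi(A)$), since $\compact(E) \gotimes \id_R = \compact(E \gotimes R)$ as operators on $E \gotimes R$. The substance of the proof is in producing an explicit inverse, which in both cases rests on Schur's lemma for graded complex Clifford algebras.

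For the even case, the key input is that $R_{2k}$ is an irreducible graded $\cclif[2k]$-module whose graded endomorphism ring is $\C$. Given a Kasparov bimodule $(E', \phi', F')$ on $A - B \gotimes \cclif[2k]$, I would define $\tilde E := \hom_{\cclif[2k]}(R_{2k}, E')$ as the space of graded $\cclif[2k]$-linear maps, equipped with the $B$-valued inner product inherited from $E'$ via a choice of unit vector, and check that evaluation produces a natural unitary isomorphism $\tilde E \gotimes R_{2k} \simarrow E'$ of graded Hilbert $B \gotimes \cclif[2k]$-modules. Because $\phi'(A)$ and $F'$ are $\cclif[2k]$-linear (with $\phi'$ even and $F'$ odd), Schur's lemma forces them to be of the form $\tilde\phi \otimes \id$ and $\tilde F \gotimes \id$ for unique $\tilde\phi : A \to \endo_B(\tilde E)$ and $\tilde F \in \endo_B(\tilde E)$; the Kasparov conditions for $(\tilde E, \tilde\phi, \tilde F)$ follow from those for $(E', \phi', F')$ by the reverse of the tensor argument above.

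The odd case requires the additional observation that the unique irreducible graded $\cclif[2k+1]$-module $R_{2k+1}$ admits an odd self-intertwiner $\omega : R_{2k+1} \to R_{2k+1}^o$, normalized (after rescaling provided by Schur's lemma) so that $\omega^2 = \id$. This reflects the splitting $\cclif[2k+1] \simeq \M_{2^k}(\C) \oplus \M_{2^k}(\C)$, under which the grading swaps the two factors. Given a Kasparov bimodule $(E', \phi', F')$ on $A - B \gotimes \cclif[2k+1]$, the decomposition $E' \simeq \tilde E \otimes R_{2k+1}$ now endows $\tilde E$ with only an ungraded $B$-module structure (the grading is carried entirely by $R_{2k+1}$). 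The graded commutant of the Clifford action is then the $\C$-linear span of $\id$ and $\omega$: an even endomorphism commuting with $\cclif[2k+1]$ is $\tilde M \otimes \id$, while an odd one is $\tilde M \otimes \omega$. Hence $\phi' = \tilde\phi \otimes \id$ and $F' = \tilde F \otimes \omega$, giving exactly the data of an odd Kasparov bimodule $(\tilde E, \tilde\phi, \tilde F)$ as in the definition preceding the proposition.

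The main obstacle is verifying that the assignment $E' \mapsto \tilde E$ is well-defined at the level of Hilbert $B$-modules, not merely at the level of underlying algebraic modules: one has to check that the $B$-valued inner product on $\tilde E$ inherited via $\hom_{\cclif}(R, E')$ is independent of the choice of unit vector in $R$ used to identify the spaces, and that the resulting isomorphism $\tilde E \gotimes R \simarrow E'$ is unitary. Once this naturality is in place, compatibility with direct sums and operatorial homotopies is automatic, so the equivalence descends to the equality of sets of homotopy classes (and hence, implicitly, of $\KK$-groups once one restricts to the subclass of cycles).
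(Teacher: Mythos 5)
Your argument is correct and follows the same route the paper takes in the discussion immediately preceding the proposition: decompose $E' \simeq \tilde E \gotimes R$ (resp.\ $\tilde E \otimes R$ with the grading carried by $R$ in the odd case), invoke Schur's lemma for graded complex Clifford algebras to write Clifford-commuting even operators as $\tilde M \otimes \id$ and odd ones as $\tilde M \otimes \omega$, and observe that the Kasparov conditions pass through the tensor product in both directions. The one thing you add is the explicit Morita-style inverse $\tilde E := \hom_{\cclif}(R, E')$ together with the check that the inherited inner product is well-defined, which the paper leaves implicit.
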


Now, for $A$ a graded $\cstarr$.
We define a map on the complex $\cstar$  $A_\C = A \otimes \C$ as the identity on $A$ and as conjugation on complex numbers : $\id \otimes \bardot$.
This motivates the following definition
\begin{defi}
  A pair $(A, \overbar \cdot)$ is called a $\cstar$ with real structure if $A$ is a complex $\cstar$ and if $\bardot : A\to A$ is an involutive $*-$antilinear morphism, that is for every $\lambda $ in $\C$ and $a,b$ in $A$ we have:
  \begin{align*}
    \overbar{\lambda a}  &= \overbar \lambda \overbar{\vphantom{\lambda}a} & \overbar{\overbar a} &= a & \overbar{a^*} &= \overbar a ^* & \overbar{a\cdot b}& = \overbar{\vphantom{b} a} \cdot \overbar b
  \end{align*}
  A morphism between $\cstars$ with real structures is a $*-$morphism that intertwines the antilinear involutions.
\end{defi}

One recover a $\cstarr$ $A_\R$  as the set of fixed points of the involution.
If $A$ is a $\cstarr$ we denote by $A_\C= (A\otimes \C, \bardot)$ its $\cstar$ with real structure as defined above.
Such pairs form then a category equivalent to the one of real $\cstarrs$.

Let $\clif$ be a real Clifford algebra.
For a complex graded module $R$ on $\clif_\C$, its conjugate is defined using the real structure on $\cclif$:
as a set $\bar R$ is $R$ and the action of $a \in \clif$ is given by the $R$ action of $\bar a$ and with the same grading.
Take now $R$ one of the irreducible graded module on $\clif_\C$.
Distinguish two cases, depending on the Morita class of $\clif_\C$.

Assume that $\clif_\C$ is a matrix algebra.
The representation $\bar R$ is either isomorphic to $R$ or to $R^o$.
In any case there is an operator $J_R: R \to R$ such that
\[ \forall a \in \clif_\C, \quad \forall x \in R, \quad \bar a x = J_R a J_R^{-1} x\]
The operator $J_R^2$ intertwines $R$ with itself and must be a multiple of the identity.
Multiplying $J_R$ by a real number one can assume that $J_R^2 = \alpha \id$ with $\alpha = \pm 1$.
Depending on whether $\bar R \simeq R$ or $\bar R\simeq R^o$ we have $J_R\epsilon_R = \alpha'' \epsilon_R J_R$ with $\alpha'' = \pm 1$.
The number $\alpha$ does not depend of the graded structure of $\clif$.
It is just the class upon Morita equivalence of the ungraded $\clif$ in $Br(\R) = \{ \left[\R\right],  \left[\H\right]\}$.
Take generators  $c_1,\cdots,  c_{2n}$ of $\clif$.
Then the operator $\nu = c_1 \cdot \cdots \cdot c_{2n}$ squares to ${(-1)}^n c_1^2 \cdots c_{2n}^2$ commutes with $\epsilon_R$ and $J_R$ and anticommutes with each $c_i$.
The operator $\epsilon_R \nu$ intertwines $R$ and $R$ and it then a multiple of the identity that squares to $\pm \id$.
We then have $\alpha'' = {(-1)}^n c_1^2 \cdots c_{2n}^2$.

If, now $\clif_\C$ is the sum of two matrix algebras over $\C$.
We then have $R^o \simeq R$ as graded $\clif_\C$ modules and we have an intertwining operator $\omega$.
This means that $\omega: R \to R$ is $\clif_\C$ linear and verifies $\omega \epsilon_R = - \epsilon_R \omega$.
Changing $\omega$ for some complex  multiple of it we assume that $\omega^2 = \id$.
Now, $\bar R$ is isomorphic to $R$ and we write again $J_R: R \to R$ a unitary intertwining those two representations.
We can assume as before $J_R^2 = \alpha \id$.
We have $\omega J_R\omega$ that is also such an intertwining operator, by Schur lemma we have $J_R \omega = \alpha' J_R \omega $.
In fact we can take
\[\omega = i^{\alpha'}c_1 \cdots c_{2n+1} \quad \text{where} \quad  \alpha' ={(-1)}^n c_1^2 \cdots c_{2n+1}^2\]
With this $\omega$ one could define an additional Clifford action on $R$ by $c = \omega \epsilon_R$.
Note that $c^2 = -\alpha' \id$.
This gives $R$ the structure of a ${(\clif \gotimes \clif[1,0])}_\C$ or ${(\clif \gotimes \clif[0,1])}_\C$ module.
One can then apply the previous discussion to recover the sign $\alpha$ for $\clif$.

Summing up we obtain

\begin{prop}\label{clifrep}
  Let $\clif$ be a real Clifford algebra and let $\clif_\C$ its complexified algebra.
  Let $R$ one of the irreducible graded module on $\clif_\C$.
  There is an antilinear operator $J_R: R \to R$ such that
\[ \forall a \in \clif_\C, \quad \forall x \in R, \quad \bar a x = J_R a J_R^* x \quad J_R^2 x = \alpha x \quad \alpha = \pm 1\]

For $\clif_\C$ a matrix algebra. The two $R$ and $R^o$ are the only irreducible graded modules on $\cclif$.
The sign $\alpha$ is the class of the ungraded $\clif$ in $Br(\R) = \{ \left[\R\right],  \left[\H\right]\}$. Furthermore, we have $J_R\epsilon_R = \alpha'' \epsilon_R J_R$ with $\alpha'' = \pm 1$.

If, now $\clif_\C$ is the sum of two matrix algebras over $\C$.
The graded module $R$ is the only irreducible one on $\cclif$.
There is an operator $\omega: R \to R^o$ of square $\id$ verifying $\omega \epsilon_R = - \epsilon_R \omega$  and $J_R \omega = \alpha' J_R \omega $ for two signs $\alpha$ and $\alpha'$ depending only on the Morita class of $\clif$.
\end{prop}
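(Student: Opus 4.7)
The plan is to apply Schur's lemma repeatedly for complex graded Clifford modules: any graded $\clif_\C$-linear endomorphism of an irreducible graded module is a scalar. This will let us both produce the antilinear operators $J_R$, $\omega$ and identify the various signs. The starting observation is that twisting the scalar action on $R$ by complex conjugation and composing the $\clif_\C$-action with the bar involution yields a new irreducible graded $\clif_\C$-module $\overbar R$ on the same underlying graded real vector space.

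First I would construct $J_R$. Since $\overbar R$ is irreducible and graded, it must be isomorphic to one of the irreducible graded modules produced by the classification recalled above the proposition. In every case, the resulting $\clif_\C$-linear graded isomorphism $\overbar R \to R$ (or $\overbar R \to R^o$) is tantamount, on the underlying complex space, to an antilinear operator $J_R$ on $R$ satisfying $\overbar a\, x = J_R a J_R^{-1} x$. Its square $J_R^2$ is linear, $\clif_\C$-linear and grading-preserving, hence by Schur a scalar; the identity $J_R \circ J_R^2 = J_R^2 \circ J_R$ combined with the antilinearity of $J_R$ forces the scalar to be real, so a positive rescaling yields $J_R^2 = \alpha\,\id$ with $\alpha=\pm1$. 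A unitary representative of $J_R$ exists because $J_R^* J_R$ is positive, linear and $\clif_\C$-linear, so again scalar.

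In the even case, I would identify $\alpha''$ via the top Clifford element $\nu = c_1 \cdots c_{2n}$, which anticommutes with each generator, commutes with $\epsilon_R$ and with $J_R$, and squares to $(-1)^n c_1^2 \cdots c_{2n}^2$. Then $\epsilon_R \nu$ is a graded $\clif_\C$-linear self-intertwiner of $R$, hence a scalar, which extracts the value of $\alpha''$. Independence of $\alpha$ from the grading follows since the ungraded $R$ is the unique irreducible $\clif_\C$-module, so the existence of an antilinear $J_R$ with $J_R^2 = \pm\id$ detects exactly whether the ungraded $\clif$ is Morita equivalent to $\R$ or to $\H$, i.e. its Brauer class. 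In the odd case, $R^o \simeq R$ forces the existence of a linear intertwiner $\omega: R \to R^o$ anticommuting with $\epsilon_R$; Schur again makes $\omega^2$ a scalar, which we normalise to $\id$. Setting $c = \omega \epsilon_R$ produces an extra Clifford generator squaring to $\pm\id$, turning $R$ into a graded module over $(\clif \gotimes \clif[1,0])_\C$ or $(\clif \gotimes \clif[0,1])_\C$, a Clifford algebra of even dimension to which the first case applies and recovers the sign $\alpha$. The commutation constant $\alpha'$ between $J_R$ and $\omega$ is then read off by applying Schur to the graded $\clif_\C$-linear operator $J_R \omega J_R^{-1} \omega^{-1}$, and the explicit formulas for $\omega$ and $\nu$ in terms of the generators let one compute both signs.

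The main difficulty is the bookkeeping of the signs: one must verify that $\alpha$ genuinely depends only on the ungraded Morita class of $\clif$, while $\alpha''$ involves the grading $\epsilon_R$ and $\alpha'$ involves the additional datum $\omega$. This is the reason for threading the argument through Schur's lemma at every step, since each sign appears exactly as the scalar value of a carefully chosen intertwiner, and since the graded/ungraded dependence is visible on the face of each intertwiner. The rest of the argument is an unwinding of the classification of complex graded Clifford modules recalled at the beginning of the section, which ensures that the dichotomy between the two cases (matrix versus sum of two matrix algebras) is exhaustive.
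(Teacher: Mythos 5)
Your proposal follows essentially the same route as the paper: construct $J_R$ by realizing the isomorphism $\overbar R \simeq R$ or $R^o$, normalize $J_R^2$ to $\pm 1$ via Schur, read off $\alpha''$ from the top Clifford element $\nu = c_1 \cdots c_{2n}$, build $\omega$ in the odd case and feed $c = \omega\epsilon_R$ back into the even case to recover $\alpha$. The small additions you make (arguing that $J_R^2$ is forced to be a real scalar by antilinearity, and producing a unitary representative via $J_R^*J_R$) are details the paper leaves implicit, but the argument is the same.
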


Signs $\alpha, \alpha'$ and $\alpha''$ will be collected below in~\cref{real8}

\begin{remark}\label{orientation}
  There is an ambiguity in our discussion that we now lift.
  We have to choose in the even case between an irreducible graded module and its opposite and in the odd case between a chirality operator $\omega$ and its opposite  $-\omega$.
  This is done by choosing an \emph{orientation} of the Clifford algebra as in~\cite{k81}, defined as an ordered set of Clifford generators $\ori=  (c_1, \ldots, c_n)$.
  This choice defines uniquely a \emph{chirality operator} by
  \begin{equation}\label{chiror}
\omega_\ori = i^{q+n(n-1)/2}c_1 \cdots c_n     
\end{equation}
This choice depends uniquely on the class of $\ori = (c_1, \cdots, c_n)$ in $\O_n / \SO_n$

We are then left with a chirality operator in the odd case.
In the even case, let $R_\ori$ the unique irreducible $\clif_\C$ module such that the action of $\omega_\ori$ coincides with the grading.

This choice has a good compatibility with Kasparov product.
For two orientations  $\ori=(v_1, \cdots , v_n)$ and $(w_1, \cdots , w_m)$ of two finite dimensional vector spaces denote the orientation on the direct sum $(v_1, \cdots , w_m)$ obtained by concatenation as $\ori \cdot \ori' $,  we have:
 \begin{itemize}
  \item An isomorphism in the even/even case $R_\ori \gotimes R_{\ori'} \simeq R_{\ori \cdot \ori'}$
  \item An equality in the odd/even case $\omega_\ori \gotimes \id = \omega_{\ori \cdot \ori'}$ on $R_\ori \gotimes R$. In the other way round, $\epsilon' \otimes \omega_\ori = \omega_{\ori' \cdot \ori} = - \omega_{\ori \cdot \ori'}$. 
    \item In the the odd/odd case a splitting  $R_{\ori \cdot \ori'}  +R_{\ori \cdot \ori'}^o $ of  $R \gotimes R' $ given by the 1 and $-1$ eigenspaces of  $\omega_\ori\epsilon \otimes \omega_{\ori'}\epsilon'$. 
  \end{itemize}
\end{remark}

\begin{remark}
  The proof differs from the case by case study one can find for example in~\cite{gvf13}.
  It is inspired by~\cite{w64, def99} on a work of Wall on graded Brauer groups of fields.
  We just used the fact that the Clifford algebra $\clif$ was a real simple central graded algebra.
  Apart from the identification of the signs, we could have taken $\clif$ to be any simple central graded algebra over a field $\mathbf{k}$ of characteristic distinct from 2.
  By \emph{semi-simple} we mean that every graded module over it is decomposable as a sum of irreducible ones.
  By \emph{central} we mean that the degree 0 elements commuting with every other elements of the algebra is a multiple of the identity.
  The set $sBr(\mathbf k)$ of Morita classes of such algebras forms a group upon the graded tensor product.
  For $\mathbf{k_s}$ the separable closing of $\mathbf k$, there are only two such algebras and  $sBr(\mathbf{ k}_s) = \ZZ$.
  If $R$ is an irreducible graded module on $A \gotimes \mathbf{k}_s$ then for any $\phi \in Gal(\mathbf{k},  \mathbf{k}_s)$ we can build $R^\phi$ the irreducible graded $A \gotimes \mathbf{k}_s$ module, $R$ twisted by $\phi$.
  One first distinguish when $A \gotimes \mathbf{k}_s$ is a matrix algebra, that is if $[A \otimes \mathbf k_s] = 0 \in sBr(\mathbf k_s) = \ZZ$.
  Then we look if $R^\phi \simeq R$ or $R^\phi \simeq R^o$.
  One then obtain a morphism $Gal(\mathbf{k},  \mathbf{k}_s) \to \ZZ$.
  One can then obtain an isomorphism of Wall between $sBr(\mathbf{k}) $ and an extension of $\ZZ$ by an extension of $\hom(Gal(\mathbf{k},  \mathbf{k}_s), \ZZ) \simeq \mathbf k^* / \mathbf k ^{*2}$ with the ungraded Brauer group $Br(\mathbf{k})$ of $ \mathbf{k}$.
  This gives $sBr(\R) \simeq \Z_8$.
  \end{remark}

Now, for any graded Hilbert $B\gotimes \clif-$module $(E,\epsilon)$, the complexified module $(E_\C, \epsilon_\C$ on $B_\C\otimes \clif_\C$.
We distinguish as before two cases:

If $\clif_\C$ is a matrix algebra,  $E_\C$ can be written in the form $\tilde E \gotimes R \simeq \tilde E_0 \otimes R + \tilde E_1 \otimes R^o$ for $R$ (one of the) irreducible graded module over $\clif_\C$.
Denote  $\tilde E = \tilde E_0 + \tilde E_1$ the $B-$Hilbert module with grading $\tilde \epsilon$.
We can then express on $\tilde E\gotimes R$ the grading  $\epsilon_\C $ as $ \tilde \epsilon \otimes \epsilon_R$.
Thanks to the Schur lemma the conjugation on $E_\C$ can be written $J \otimes J_R$ with $J$ an anti $B_\C-$unitary and  $J_R$ the intertwiner between $\bar R$ and $R$ or $R^o$.
Recall that $J_R^2 = \alpha \id$ and that $J_R \epsilon_R = \alpha'' \epsilon_R J_R$ so that $J^2 = \alpha \id$ and $\tilde \epsilon J = \alpha'' J \tilde \epsilon$.
If now $T$ is in $\endo_B(E)$,  we can write $T_\C$ as $\tilde T \gotimes 1$.
The degree of $\tilde T  $ on $(\tilde E, \tilde \epsilon)$ is then equal to the degree of $T$.
We have $J \tilde T = \tilde T J$.

If, now $\clif_\C$ is the sum of two matrix algebras over $\C$.
There is a unique irreducible graded module on $\clif_\C$ that we denote $R$.
It is isomorphic as a graded $\clif_\C$ module to its opposite module and we have an intertwining unitary $\omega$ realizing this isomorphism.
The $B_\C\gotimes \cclif-$Hilbert graded module $E_\C$ can be written $\tilde E \otimes R$ for $E$ an ungraded $B-$Hilbert module.
In this setting, the conjugation on $E_\C$ can be written as before $J \otimes J_R$ so that $J^2 = \alpha \id$.
If now $T$ in $\endo_B(E)$ is even ,  we can write $T_\C$ as $\tilde T \otimes 1$.
We have $J \tilde T = \tilde T J$.
If now $T$ in $\endo_B(E)$ is odd,  we can write $T_\C$ as $\tilde T \otimes \omega$.
We then have $J \tilde T = \alpha' \tilde T J$.

Application of this to Kasparov bimodules leads to the following definition one can find in~\cite{gvf13} or in a different form in~\cite{c95a} to characterize Higher $\KKO$ groups:
\begin{defi}\label{real8}
  Let $A, B$ be two ungraded $\cstarrs$ and $d $ be a class in $\Z/ 8\Z$.
  A tuple $(E, \phi, F, J)$ is called a real Kasparov bimodule with $\KO$ dimension or degree $d$ on $A-B$ or on $A_\C - B_\C$ if $(E, \phi, F)$ is a complex Kasparov triple on the complex $\cstarrs$ $A_\C$ and $B_\C$ of even or odd degree depending on the class of $d$ in $\ZZ$ and $J: E \to E$ is an anti-$B$-linear operator verifying :
  \begin{align*}
      \forall b \in B_\C, \overbar{b} = JbJ^* ,  &&\forall a \in A_\C,\phi(\overbar{a}) = J\phi(a)J^* ,  && J^2 = \alpha, && FJ = \alpha_d' JF, && J\epsilon = \alpha_d'' \epsilon J
  \end{align*}
  Where signs $\alpha_d, \alpha_d'$ and $\alpha_d''$ depend on the value of $d$ and are given by the following table:

  \begin{center}
    \begin{tabular}{lrrrrrrrr}
      \toprule
      $d$&0&1&2&3&4&5&6&7\\
      \midrule
      $\alpha_d$&$1$&$1$&$-1$&$-1$&$-1$&$-1$&$1$&$1$\\
      $\alpha'_d$&$1$&$-1$&$1$&$1$&$1$&$-1$&$1$&$1$\\
      $\alpha''_d$&$1$& &$-1$& &$1$& &$-1$& \\
      \bottomrule
    \end{tabular}
  \end{center}
  An isomorphism between such Kasparov bimodules is given by a usual isomorphism of complex Kasparov bimodules intertwining the action of the corresponding antilinear operators.
\end{defi}

\begin{remark}
  Note that both $(E, \phi, F, J)$ and $(E, \phi, F, -J)$ are isomorphic through the action of the imaginary $i$.
\end{remark}

\begin{prop}
  Let $A, B$ be two ungraded $\cstars$.
  The set of $A\otimes \clifpq-B$ Kasparov bimodules identifies naturally with the set of $A-B$ Kasparov modules with real structure of $\KO$ dimension $p-q$.

  If $R$ in the (or one of the) irreducible graded $\clifpq_\C$ module then this correspondence is given by
  \[(E, \phi, F) \mapsto (E \otimes R, \phi \otimes 1, F \otimes \id)  \text{  in the even case}\]
  \[(E, \phi, F) \mapsto (E \otimes R, \phi \otimes 1, F \otimes \omega) \text{  in the odd case}\]

\end{prop}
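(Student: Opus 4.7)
The plan is to reduce the statement to the previous formal Bott periodicity proposition (which decomposes $\clif_\C$-module Kasparov triples in terms of an irreducible Clifford representation $R$) and then to track the real structure through this decomposition using Proposition~\ref{clifrep}.

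Starting from the ``$\Rightarrow$'' direction, let $(E, \phi, F)$ be a real Kasparov bimodule on $A \otimes \clifpq - B$. Complexify it: the complex Kasparov bimodule $(E_\C, \phi_\C, F_\C)$ on $A_\C \otimes \clifpq_\C - B_\C$ comes with a canonical antiunitary $J$, namely the conjugation fixing the real subspace, which satisfies $\phi_\C(\bar a) = J\phi_\C(a)J^*$, $\overline b = JbJ^*$ on the $B_\C$ action, and commutes with $\epsilon_\C$ and $F_\C$. The formal Bott periodicity proposition just established then writes $E_\C = \tilde E \gotimes R$ (even case) or $E_\C = \tilde E \otimes R$ with $F_\C = \tilde F \otimes \omega$ (odd case), where $R$ is the unique irreducible graded $\clifpq_\C$-module picked by the orientation of Remark~\ref{orientation}, and $\tilde E$, $\tilde\phi$, $\tilde F$ carry an honest $A_\C - B_\C$ complex Kasparov bimodule structure (even, resp. odd).

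Next I would transport $J$ through this decomposition. By the Schur-type rigidity already used in Proposition~\ref{clifrep}, any $\clifpq_\C$-antilinear map on $\tilde E \gotimes R$ factors as $\tilde J \otimes J_R$, where $J_R$ is the antilinear operator on $R$ produced in Proposition~\ref{clifrep}. Because $J_R^2 = \alpha \id$, $J_R \epsilon_R = \alpha'' \epsilon_R J_R$ and, in the odd case, $J_R \omega = \alpha' \omega J_R$, the corresponding relations transfer verbatim to $\tilde J$: we get $\tilde J^2 = \alpha$, $\tilde J \tilde \epsilon = \alpha'' \tilde \epsilon \tilde J$ in the even case, and $\tilde F \tilde J = \alpha' \tilde J \tilde F$ in the odd case (the sign $\alpha'$ is swallowed by the factor $\omega$ in $F_\C = \tilde F \otimes \omega$). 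The reverse ``$\Leftarrow$'' direction is the tautological one: given $(E, \phi, F, J)$ of KO-dimension $d$, the tuple $(E \otimes R, \phi \otimes 1, F \otimes \id, J \otimes J_R)$ (respectively with $F \otimes \omega$ in the odd case), equipped with the right $\clifpq$-action coming from its left regular action on $R$, is a real $A \otimes \clifpq - B$ Kasparov bimodule, and the two constructions are mutually inverse by Schur.

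Naturality is routine: isomorphisms of bimodules, degenerate triples and operator homotopies all preserve the tensor decomposition $\tilde E \otimes R$, so the bijection descends to the level of $\KKO$ classes.

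The main obstacle is bookkeeping: one must check that the signs $(\alpha, \alpha', \alpha'')$ read off from $J_R$ for the irreducible graded representation of $\clifpq_\C$ agree with the tabulated signs $(\alpha_d, \alpha_d', \alpha_d'')$ of Definition~\ref{real8} when $d \equiv p - q \pmod 8$. This is precisely the Bott--Shapiro periodicity pattern: the ungraded Morita class of $\clifpq$ in $Br(\R)$ depends only on $p-q \bmod 8$, the element $\nu = c_1 \cdots c_{p+q}$ controls the parity relation between $J_R$ and $\epsilon_R$, and $\omega$ the relation between $J_R$ and $F$. Running through the eight residues $p - q \bmod 8$ and comparing with the table in Definition~\ref{real8} closes the proof.
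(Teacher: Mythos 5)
Your argument is essentially the same as the paper's: the paper does not set off a formal proof, but the discussion immediately preceding Definition~\ref{real8} (complexify, decompose $E_\C$ as $\tilde E \gotimes R$ or $\tilde E \otimes R$ via Schur, factor the conjugation as $J \otimes J_R$, and transport the signs $\alpha, \alpha', \alpha''$ from Proposition~\ref{clifrep}) is precisely what you reconstruct. The only cosmetic difference is that the paper's preparatory paragraph carries the Clifford factor on $B$ while the proposition (and your proof) carries it on $A$, which is immaterial by the side-swapping lemma and does not change the Schur argument.
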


\begin{remark}\label{evengraded}
  For the even case, an other option would be to change $J$ for $J\epsilon$ when $J$ anticommutes with $\epsilon$.
  This is more natural with respect to the graded picture as $F$ graded commutes with $J$ and $J$ is an operator of square $\beta_d = \alpha_d \alpha_d''$ and degree $\beta_d''= \alpha_d''$.
  \begin{center}
    \begin{tabular}{lrrrr}
      \toprule
      $d$&0&2&4&6\\
      \midrule
      $\beta_d$&$1$&$1$&$-1$&$-1$\\
      $\beta_d''$&$1$ &$-1$ &$1$ &$-1$ \\
      \bottomrule
    \end{tabular}
  \end{center}
\end{remark}
This amounts to write on $E_\C = \tilde E \gotimes R$ the conjugation operator as $J \gotimes J_R$.

\begin{remark}\label{oddgraded+}
  When $d = q - p$ is  odd, given a Kasparov bimodule $(E, \phi, F) $ on $ A - B\otimes \clif[p+1,q+1 ]$ we consider the Clifford action as an action of $\clif[p+1, q]$ with an additional unitary $c$ that squares to $-\id$.
Write $E_\C$ as $\tilde E\gotimes R$ where $\tilde E$ is graded and where $R$ represents one of the unique $\clif[p+1, q]$ irreducible module.
This comes with a writing $F_\C = \tilde F \otimes \id$ and an antiunitary $J$ of square and degree given by~\cref{evengraded} for $\KO$ dimension $d-1$  graded commuting with $F$.
Now $ c\epsilon$ can be written $\tilde c \otimes \id$ where $\tilde c$ is an odd operator on $\tilde E$ whose square is $\id$ and graded commutes with $J$.
In fact both the operators $c$, $F$ and $J$ graded commute and sign conditions read $J^2 = \beta_{d-1}$ and $\deg(J) = \beta_{d-1}''$, gathered in the following table:
\begin{center}
  \begin{tabular}{lrrrr}
    \toprule
    $d$&1&3&5&7\\
    \midrule
    $\beta_{d-1}$&$1$&$1$&$-1$&$-1$\\
    $\beta_{d-1}''$&$1$ &$-1$ &$1$ &$-1$ \\
    \bottomrule
  \end{tabular}
\end{center}

To make the link with the previous presentation, we unravel Morita invariance.
The Kasparov module $(E \otimes R, \phi \otimes \id , F_0 \otimes \omega)$ on $A -B\gotimes \clif[p,q]$ leads to the bimodule on $A -B\gotimes \clif[p+1,q+1]$ given  on $(E + E) \gotimes R$ by the grading $(\begin{smallmatrix} \epsilon_R &0 \\ 0 & -\epsilon_R \end{smallmatrix})$ and by the additional Clifford action of $c = (\begin{smallmatrix} 0 & -\epsilon_R \\\epsilon_R & 0 \end{smallmatrix})$ and $(\begin{smallmatrix} 0 & \epsilon_R \\ \epsilon_R & 0 \end{smallmatrix})$.
The operator is given by $(\begin{smallmatrix} F_0 \omega & 0 \\ 0 & -F_0 \omega \end{smallmatrix})$.
Now $ (\begin{smallmatrix} 0 & i\omega \\  -i\omega & 0 \end{smallmatrix})$ commutes with the grading and with the $\clif[p,q+1]$ action.
The two eigenspaces of this operator of square $\id$ denoted $E_1$ and $E_{-1}$  are identified with the $A -B\gotimes \clif[p,q]$ bimodule  $E \otimes R$ by the following isomorphisms:
\begin{align*}
  x \in E \otimes R \mapsto  \left(\begin{matrix} i \omega x \\x \end{matrix}\right) \in E_1 &&   x \in E \otimes R \mapsto  \left(\begin{matrix}x\\i\omega  x \end{matrix}\right) \in E_{-1} 
\end{align*}

As in~\cref{orientation} the action of $i \epsilon_R \omega$ gives $R$ the structure of the irreducible graded module on $\cclif[p,q+1]$ that we denote $R_+$ for which the two previous isomorphisms  $E_1 \simeq E \otimes R_+$ and $E_{-1} \simeq E\otimes R_+^o$ become isomorphisms of $A -B\gotimes \cclif[p,q+1]$ bimodules.
An intertwining operator for the conjugate representation of $R_+$ is given by $J_R\omega$ if $\alpha'_{d}=1$ and otherwise can be chosen to be just $J_R$.
We write $(E +E ) \gotimes R = E_1 + E_{-1} \simeq (E + E) \gotimes R_+$ and we have our Kasparov bimodule of $\KO$ dimension $q-p$ as $(E+E, \phi, F, c, J) )$ with the operators given by
\begin{align*}
    \epsilon &=\left( \begin{matrix}1 & 0 \\ 0 & -1 \end{matrix}\right)&
      \phi &=\left( \begin{matrix}\phi_0 & 0 \\ 0 & \phi_0 \end{matrix}\right)&
  F = &\left(\begin{matrix}0 & -iF_0 \\i F_0 & 0\end{matrix}\right) &
  c &= \left(\begin{matrix}0 & 1 \\ 1 & 0 \end{matrix}\right)&
  J &= \left(\begin{matrix}0 & -J_0 \\ J_0 & 0 \end{matrix}\right)
  \text{ or }  \left(\begin{matrix} J_0 & 0  \\ 0& J_0  \end{matrix}\right)
\end{align*}
The operators $J$ of degree $\alpha'_{d}$ then squares to $-\alpha'_{d}J_0^2 = -\alpha_d \alpha'_d$ which gives also the sign $\beta_{d-1}= \alpha_{d-1} \alpha''_{d-1}$.
\end{remark}

\begin{remark}\label{oddgraded-}
The $\clif[p+1, q+1]$ action could be interpreted as a $\clif[p,q+1]$ and an additional Clifford action of an element of square $\id$.
That is a quintuple $(E, \phi, F, J, c)$ such that $(E, \phi, F)$ is a complex Kasparov bimodule on $A- B$, $J$ is an antilinear operator on $E$ compatible with the real structures of $A$ and $B$, of degree $\beta_{d+1}''$ and square $\beta_{d+1}$ graded commuting with $F$ and with $c$ of square $-1$ and degree 1 graded commuting with both $J$ and $F$
\begin{center}
  \begin{tabular}{lrrrr}
    \toprule
    $d$&1&3&5&7\\
    \midrule
    $\beta_{d+1}$&$1$&$-1$&$-1$&$1$\\
    $\beta''_{d+1}$&$-1$&$1$ &$-1$ &$1$  \\
    \bottomrule
  \end{tabular}
\end{center}
As before, an odd Kasparov module $(E, \phi_0 , F_0)$ given as in~\cref{real8} yields

\begin{align*}
    \epsilon &=\left( \begin{matrix}1 & 0 \\ 0 & -1 \end{matrix}\right)&
      \phi &=\left( \begin{matrix}\phi_0 & 0 \\ 0 & \phi_0 \end{matrix}\right)&
  F = &\left(\begin{matrix}0 & F_0 \\F_0 & 0\end{matrix}\right) &
  c &= \left(\begin{matrix}0 & -1 \\ 1 & 0 \end{matrix}\right)&
  J &= \left(\begin{matrix} J_0 & 0 \\ 0 & J_0 \end{matrix}\right)
  \text{ or }  \left(\begin{matrix} 0 &J_0  \\- J_0&0  \end{matrix}\right)
\end{align*}
\end{remark}

\subsection{The Kasparov product}

The Kasparov product can be made explicit in this setting using the characterization of~\cite{cs84}.
As this construction involves graded tensor products we use the graded formulation of real structures of~\cref{evengraded,oddgraded+}:

\begin{prop}\label{kaspairreal}
  Let $A$, $B$, $D$, be three ungraded $\cstarrs$ with $A$ separable and $B$ and $D$ $\sigma-$unital.
  Let two elements $[E_1, \phi_1, F_1, J_1] \in \KKO_n(A, B)$ and $[E_2, \phi_2, F_2, J_2]$ given by the $\beta$ formulation and with eventually a supplementary action of an odd square $\id$ or $-\id$ operator.
  The Kasparov product of those two elements over $D$ is given in the $\beta$ formulation by:
  \begin{description}
  \item[if $n$ and $m$ are even]$[E , \phi , F, J ] =  [ E_1\gotimes_D E_2,\phi_1 \otimes \phi_2, F, J_1 \gotimes J_2]$ for any odd operator $F: E \to E$ verifying~\cref{real8} and
  \begin{itemize}
    \item $\forall x \in E_1$ homogeneous the map $ y  \mapsto x \gotimes F_2(y) - {(-1)}^{\partial x}F(x\gotimes y)$ is a compact operator from $E_2$ to $E$
    \item $\forall a \in A, \phi(a)[F_1 \gotimes \id, F] \phi(a)^* \geq 0 \mod \compact(E)$
  \end{itemize}
  When $A = \R$ , $E_1$ is given by some $H_D$ and $F_1$ commutes with the action of $\R$ is self adjoint, verifies $\norm{F_1} \leq 1$ such an operator on $H \gotimes E_2$ is homotopic to
  \[F_1 \gotimes \id + (\id \otimes \phi_2)(\sqrt{1-F_1^2}) F_2 \]

  \item[if $n$ is odd and $m$ is even] $[E , \phi , F, J,c ] = [ E_1\gotimes_D E_2,\phi_1 \otimes \phi_2, F, J_1 \gotimes J_2, c_1 \gotimes \id]$ where $F$ satisfies the same conditions as before.

  \item[if $n$ is even and $m$ odd]$[E , \phi , F, J,c ] = [ E_1\gotimes_D E_2,\phi_1 \otimes \phi_2, F, J_1 \gotimes J_2, \id \gotimes c_2]$ where $F$ satisfies the same conditions as before.

  \item[if $n$ and $m$ are odd]
  $[E , \phi , F, J ] $ where $E$ is the image of the even projector $p = (\id + c_1 \gotimes c_2)/2$.
  The restriction of $\phi_1 \otimes \phi_2$ on this space gives $\phi$ and the restriction of $J_1 \gotimes J_2$ gives $J$.
  The operator $F: E \to E$ is any odd operator verifying ~\cref{real8} and
  \begin{itemize}
    \item $\forall x \in E_1$ homogeneous the map $ y  \mapsto p(x \gotimes F_2(y)) - {(-1)}^{\partial x}F(p(x\gotimes y))$ is a compact operator from $E_2$ to $E$
    \item $\forall a \in A, \phi(a)[(F_1 \gotimes \id)p, F] \phi(a)^* \geq 0 \mod \compact(E)$
  \end{itemize}
  When $A = \R$, $F_1$ commutes with the action of $\R$ and $\norm{F_1} \leq 1$ such an operator is homotopic to
  \[F_1 \otimes_D \epsilon_2 +  \sqrt{1 -F_1^2} \otimes_D F_2  \]
  \end{description}

\end{prop}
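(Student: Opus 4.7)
The plan is to leverage the Connes--Skandalis characterization of the complex Kasparov product from~\cite{cs84}, and then promote the resulting complex bimodule to a real one by exhibiting an antilinear operator $J$ with the correct square and commutation signs. The reality structure does not affect the operator-theoretic content of the product (Fredholm conditions and positivity), so the two bulleted conditions are precisely those of Connes--Skandalis applied to the complex Kasparov bimodules $(E_i,\phi_i,F_i)$. Thus existence of $F$ is already known; the real work is to check that an admissible $F$ can be chosen to interact correctly with $J_1\gotimes J_2$, and that the sign bookkeeping works.

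For the parity checks I would proceed case by case along the $\beta$-table of~\cref{evengraded}. In the even/even case, $J_1\gotimes J_2$ has degree $\beta_n''\beta_m''$ and square $\beta_n\beta_m(-1)^{\partial J_1 \partial J_2}$; a straightforward inspection of the table shows both quantities agree with $\beta_{n+m}''$ and $\beta_{n+m}$ respectively. Because the Connes--Skandalis construction allows averaging $F$ over the (compact) ambiguity, one can symmetrize $F$ to get $FJ = \alpha'_{n+m} JF$ without spoiling the two bullet conditions, since replacing $F$ by $\tfrac{1}{2}(F + \alpha'_{n+m} J F J^{-1})$ preserves positivity and compactness. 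The odd/even and even/odd cases are identical, with an extra Clifford generator $c_1\gotimes\id$ or $\id\gotimes c_2$ that graded-commutes with $F$ and $J_1\gotimes J_2$ by construction, and whose sign conditions are inherited from the initial factor.

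The genuinely delicate case is odd/odd. Here both $c_1$ and $c_2$ are present and $c_1\gotimes c_2$ is an even self-adjoint involution. One passes to the $+1$ eigenspace $E = pE_1\gotimes_D E_2$ with $p = (\id + c_1\gotimes c_2)/2$; this kills one Clifford direction and produces an even total degree. I would check that $p$ commutes with $\phi_1\otimes\phi_2$ and with $J_1\gotimes J_2$ (using $c_i J_i = \beta''_{n-1}\cdot(\text{sign}) J_i c_i$ computed from~\cref{oddgraded+}), so that the restrictions $\phi$ and $J$ on $E$ are well-defined and $J^2$, $\deg J$ match $\beta_{n+m}$ and $\beta_{n+m}''$. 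The Connes--Skandalis conditions are then just the original ones restricted to $pE_1\gotimes E_2$, and the same averaging argument as above secures compatibility with $J$.

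Finally, for $A=\R$ the explicit homotopies are the standard ones: in the even/even case the formula $F_1\gotimes\id + \sqrt{1-F_1^2}\,F_2$ is the classical Kasparov--Connes--Skandalis representative, and the odd/odd formula $F_1\otimes\epsilon_2 + \sqrt{1-F_1^2}\otimes F_2$ is obtained from it by replacing $\id$ with the grading $\epsilon_2$ and restricting to the image of $p$. I expect the main obstacle to be the bookkeeping in the odd/odd case: one has to verify that after projecting by $p$, the sign conventions in~\cref{oddgraded+} really do produce the $\beta_{n+m}$, $\beta_{n+m}''$ predicted by the table in~\cref{evengraded}, and that the matrix forms given in~\cref{oddgraded+,oddgraded-} for $J$ restrict consistently. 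Everything else is a rerun of the complex Kasparov product with $J$ carried along.
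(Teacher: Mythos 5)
Your strategy---compute the complex Connes--Skandalis product first and then symmetrize $F$ so that it interacts correctly with $J_1\gotimes J_2$---is sound, but it is a genuinely different route from the paper's. The paper never leaves the Clifford-algebra picture: a Kasparov bimodule with real structure of $\KO$-dimension $d=p-q$ is read as a genuine \emph{real} Kasparov bimodule over $A$ and $B\gotimes\clif[p,q]$; the Connes--Skandalis characterization is then run entirely over the reals, where the compatibility of $F$ with the conjugation is automatic because everything is literally real. The translation back to the $J$-formalism is made at the end by writing the complexification as $\tilde E_1\gotimes\tilde E_2\gotimes R_{p,q}\gotimes R_{p',q'}$ and invoking the identification of $R_{p,q}\gotimes R_{p',q'}$ with $R_{p+p',q+q'}$ in the mixed-parity cases, or with the summand cut out by $\omega_\ori\epsilon\otimes\omega_{\ori'}\epsilon'$ in the odd/odd case, as set up in~\cref{orientation}. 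Your averaging step $F\mapsto\tfrac12\bigl(F+\alpha'_{n+m}JFJ^{-1}\bigr)$ is the price paid for starting from the \emph{complex} product; the paper sidesteps it. If you keep your route you must still check that this replacement does not break the $F_2$-connection (compactness) condition, which forces you to use $J_2F_2J_2^{-1}=\alpha'_mF_2$ and track how $J_1\gotimes J_2$ moves through the factor $x\gotimes\,\cdot\,$; this is doable but is precisely the kind of bookkeeping the Clifford picture makes invisible.

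There is also a concrete gap in your odd/odd case. You assert that $c_1\gotimes c_2$ is an even self-adjoint involution, but in a graded tensor product $(c_1\gotimes c_2)^2=-\,c_1^2\gotimes c_2^2$; thus for $p=(\id+c_1\gotimes c_2)/2$ to be a projector you need $c_1^2$ and $c_2^2$ to have \emph{opposite} signs, i.e.\ one factor must be presented in the $\beta^+$ formalism of~\cref{oddgraded+} and the other in the $\beta^-$ formalism of~\cref{oddgraded-}. Your sketch does not pin this down, and the verification that $J^2$ and $\deg J$ on the image of $p$ land on $\beta_{n+m}$ and $\beta_{n+m}''$ genuinely depends on that choice, so it is not a cosmetic omission.
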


\begin{proof}
    For two Kasparov bimodules $(E_1, \phi_1, F_1)$, $(E_1, \phi_2, F_2)$ are real Kasparov bimodules over $A-D\otimes \clifpq$ and over $D- B\otimes \clif[p',q']$
  Writing $E_{1, \C} = \tilde E_1 \gotimes R_{p,q}$ and $E_{2,\C} = \tilde   E_2 \gotimes R_{p',q'}$, we have the isomorphism of $A-B \otimes \clif[p +p', q+q']$ bimodules:
  \[E_{1,\C} \gotimes_D E_{2,\C} = \tilde E_1 \gotimes \tilde E_2 \gotimes R_{p,q} \gotimes R_{p',q'}\]
  We then use~\cref{chiror} as to identify $ R_{p+p',q+q'}$ with $R_{p,q} \gotimes R_{p',q'} $ if at least one of the degree is even and with the invariant subspace of $c\otimes c'$ if both degrees are even.  
It is then just a matter of translation in this framework of the connection characterization of Connes and Skandalis.
\end{proof}

\begin{remark}
  Note that the same formulas can be given when the odd Kasparov triples are given with the data of an odd self adjoint unitary squaring to $-1$ as in~\cref{oddgraded-} 
\end{remark}

\subsection{The exact sequence via real structures}
In this section we investigate morphisms of the exact sequence~\cref{kwprop} in the context of higher $\KKO$ groups with real structures of~\cref{secrealkko}.
Let $A$ and $B$ be two ungraded $\cstarrs$
      Complexification in this framework just consists in forgetting the real structure:
      \begin{prop}
        Complexifying a Kasparov bimodule with real structure $(E, \phi, F, J)$ on $A_\C - B_\C$ gives the underlying complex Kasparov bimodule $(E, \phi, F)$.
      \end{prop}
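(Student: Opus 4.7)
The plan is to trace $(E, \phi, F, J)$ through the dictionary established in the unlabeled proposition preceding Definition \ref{real8}, which identifies Kasparov bimodules with real structure of $\KO$-dimension $d$ on $A_\C - B_\C$ with real Kasparov bimodules on $A \otimes \clif[p,q] - B$ for any $p - q \equiv d \bmod 8$, and then to apply the complexification functor defined earlier in the section. The proposition is essentially a naturality/compatibility statement, so the content is to check that the two descriptions of ``complexification'' — extending scalars on the Clifford side versus forgetting $J$ on the real-structure side — agree.

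First I would recover from $(E, \phi, F, J)$ the underlying real Kasparov bimodule $(\tilde E, \tilde \phi, \tilde F)$ on $A \otimes \clif[p,q] - B$ using the splitting $E \simeq \tilde E_\C \gotimes R$ (respectively $\tilde E \gotimes R$ in the odd case) furnished by Schur's lemma, where $R$ is an irreducible graded $\clif[p,q]_\C$-module. Under this decomposition one has $\phi = \tilde \phi_\C \otimes \id$, $F = \tilde F_\C \otimes \id$ or $\tilde F_\C \otimes \omega$ depending on the parity of $d$, and $J = \tilde J \otimes J_R$, with $\tilde J$ the canonical conjugation of $\tilde E_\C$ fixing $\tilde E$ and $J_R$ the intertwiner produced by Proposition \ref{clifrep}. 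The sign data $\alpha_d, \alpha'_d, \alpha''_d$ are entirely absorbed into $J_R$ and, when needed, into the auxiliary operator $c$ of Remarks \ref{oddgraded+} and \ref{oddgraded-}.

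Next I would complexify $(\tilde E, \tilde \phi, \tilde F)$ in the sense of the complexification functor: by definition this produces $(\tilde E_\C, \tilde \phi_\C, \tilde F_\C)$ on $A_\C \otimes \clif[p,q]_\C - B_\C$. Running the complex formal Bott periodicity in the opposite direction, this corresponds to the complex Kasparov bimodule $(\tilde E_\C \gotimes R, \tilde \phi_\C \otimes \id, \tilde F_\C \otimes \id\text{ or }\tilde F_\C \otimes \omega)$ on $A_\C - B_\C$, which is exactly $(E, \phi, F)$ under the isomorphism of the previous step.

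The only potentially delicate point I would anticipate is the sign bookkeeping, namely checking that $\alpha_d, \alpha'_d, \alpha''_d$ (and the extra Clifford generator in the odd case) remain coherent through both passages. However, by the construction of the dictionary these signs live entirely on the antilinear operator $J$ and on the auxiliary $c$, never on $(E, \phi, F)$ itself, so discarding $J$ neutralises them without introducing any discrepancy. No further argument beyond the Schur-lemma analysis of the previous proposition is needed, which is why this statement can be presented without a separate proof.
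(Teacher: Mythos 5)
Your approach is correct and is exactly the elaboration the paper leaves implicit: the paper offers no written proof, stating only that ``Complexification in this framework just consists in forgetting the real structure,'' and treats the claim as immediate from the dictionary of the preceding subsection. Your trace through that dictionary — unpack $(E,\phi,F,J)$ to a real $A\otimes\clif[p,q]$--$B$ bimodule via Schur, complexify, run the complex formal Bott periodicity backwards — supplies the verification the paper takes for granted.

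One notational slip worth fixing. Recovering the underlying real bimodule $(\tilde E, \tilde\phi, \tilde F)$ on $A\otimes\clif[p,q]$--$B$ from $(E,\phi,F,J)$ goes through the decomposition $\tilde E_\C \simeq E \gotimes R$, with $E$ playing the role of the multiplicity space of the irreducible $R$ inside $\tilde E_\C$, and under this isomorphism $\tilde\phi_\C$ identifies with $\phi\otimes\pi_R$ and $\tilde F_\C$ with $F\otimes\id$ (respectively $F\otimes\omega$). You wrote the isomorphism backwards, as $E \simeq \tilde E_\C \gotimes R$; taken literally this would replace the Schur multiplicity space by the full module and the dimensions would not match. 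With the direction corrected, the rest of your argument — complexifying $\tilde E$ produces exactly $E\gotimes R$ with $\phi\otimes\pi_R$ and $F\otimes\id$ or $F\otimes\omega$, and the complex Bott correspondence of the earlier proposition returns $(E,\phi,F)$ — goes through as you describe, and the sign bookkeeping is indeed irrelevant since $\alpha_d$, $\alpha'_d$, $\alpha''_d$ live entirely on $J$ and $J_R$, which are discarded.
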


      The action of $\eta$ we will be described by the graded $\beta$ formalism presented in~\cref{evengraded} in the even case and in~\cref{oddgraded+,oddgraded-} in the odd case.
      We then link back to the $\alpha$ formalism of~\cref{real8}.
      Starting with the action of $\eta$ on even Kasparov triples we have:

      \begin{lemma}
        Let $(E, \phi, F, J)$ be an even degree Kasparov bimodule in the graded presentation of~\cref{evengraded}.
      The action of $\eta$ on this triple is given in the $\beta^+$ presentation of~\cref{oddgraded+} by the following  quintuple $\left(E + E^o,
  \left(\begin{smallmatrix}\phi & 0 \\ 0 & \phi \end{smallmatrix}\right),
  \left(\begin{smallmatrix} \vphantom{\phi}F & 0 \\ \vphantom{\phi}0 & -F \end{smallmatrix}\right),
  \left(\begin{smallmatrix}J \vphantom{\phi}& 0 \\ 0\vphantom{\phi} &  \pm J \end{smallmatrix}\right),
  \left(\begin{smallmatrix}0 & 1\vphantom{\phi} \\ \vphantom{\phi}1 & 0 \end{smallmatrix}\right)\right)$ where the sign is the degree $\beta''$ of the operator $J$.
\end{lemma}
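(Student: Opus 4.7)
The plan is to identify $\eta$ with the class $\bm{j}$ induced by $j\colon \R\hookrightarrow \clif[0,1]$, compute the Kasparov product by functoriality, and then unwind the result in the $\beta^+$ presentation of \cref{oddgraded+} by Schur reduction followed by a basis change.

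First, the proposition recalled earlier identifies $\eta \in \KKO_1(\R,\R)$ with $\bm{j}\in \KKO(\R,\clif[0,1])$. Since $\bm{j}$ is induced by a $*$-morphism, Kasparov multiplication by $\eta$ is simply the functorial pushforward along $\id_B \otimes j\colon B \to B \gotimes \clif[0,1]$. The product $[E,\phi,F,J]\cdot \eta$ is thus represented, as a real Kasparov bimodule over $A - B\gotimes \clif[0,1]$, by $(E \gotimes \clif[0,1],\,\phi,\,F\otimes \id,\,J \gotimes J_e)$, where $J_e$ is the canonical antilinear involution on $\clif[0,1]_\C$ fixing $\R + \R e$.

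Second, I Schur-reduce the right $\clif[0,1]_\C$-action. Take $R=\clif[0,1]_\C$ as the unique irreducible graded module, with basis $\{1,e\}$: the grading is $\epsilon_R=\mathrm{diag}(1,-1)$, the Clifford generator acts as $\left(\begin{smallmatrix}0&-1\\1&0\end{smallmatrix}\right)$, and $J_R$ is componentwise conjugation. Hence $E_\C\gotimes R \simeq E_\C \oplus E_\C$ with total grading $\mathrm{diag}(\epsilon,-\epsilon)$, which exhibits the $E \oplus E^o$ block structure. As in \cref{oddgraded+}, I replace the right Clifford generator (of square $-\id$) by the operator $\tilde c := c\,\epsilon_{\mathrm{tot}}$ of square $\id$, obtaining $\tilde c = \left(\begin{smallmatrix}0&\epsilon\\\epsilon&0\end{smallmatrix}\right)$.

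Third, apply the real, even, self-inverse basis change $T = \mathrm{diag}(\id,\epsilon)$ on $E\oplus E$. Short computations yield $T\tilde c T = \left(\begin{smallmatrix}0&1\\1&0\end{smallmatrix}\right)$ and $T(F\otimes\id)T = \mathrm{diag}(F,\epsilon F\epsilon) = \mathrm{diag}(F,-F)$, both using that $F$ is odd. For the antilinear part, conjugation by the real operator $T$ sends $\mathrm{diag}(J,J)$ to $\mathrm{diag}(J,\epsilon J\epsilon) = \mathrm{diag}(J,\beta''_d\, J)$, the sign being precisely the degree of $J$ with respect to $\epsilon$, as tabulated in \cref{evengraded}. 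The grading $\mathrm{diag}(\epsilon,-\epsilon)$ and the diagonal representation $\mathrm{diag}(\phi,\phi)$ are manifestly invariant under $T$. Finally, verify the resulting quintuple is a $\beta^+$ real Kasparov bimodule of $\KO$-dimension $d+1$: $F$ and $\tilde c$ are odd, $\tilde c^2 = \id$, $F$ and $\tilde c$ graded-anticommute, and the new $J$ has the square and degree prescribed by \cref{oddgraded+} for $\KO$-dimension $d+1$.

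The main obstacle is the careful bookkeeping of signs in the basis change: the appearance of $-F$ in the lower block and of $\beta''_d$ in the off-diagonal sign of the new $J$ both arise from conjugation by $\epsilon$, and must be matched exactly against the sign tables of the $\beta$ and $\beta^+$ formalisms.
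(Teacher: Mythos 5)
Your proof is correct and takes essentially the same approach as the paper: identify $\eta$ with $\bm j = j_*$, compute the pushforward bimodule on $E \gotimes \clif[0,1]$, and exhibit an explicit isomorphism with the stated quintuple. The only difference is the choice of basis change — you conjugate by $\mathrm{diag}(\id,\epsilon)$ (keeping the grading $\mathrm{diag}(\epsilon,-\epsilon)$ fixed and moving $\mathrm{diag}(F,F)$ and $\tilde c$ into the stated block forms), whereas the paper conjugates by $\frac{1+\epsilon}{2}\left(\begin{smallmatrix}1&0\\0&1\end{smallmatrix}\right) + \frac{1-\epsilon}{2}\left(\begin{smallmatrix}0&1\\1&0\end{smallmatrix}\right)$ (transferring the grading onto the $\C^2$ factor); both transport the same pair of bimodules onto each other, and the sign bookkeeping matches.
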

\begin{proof}
The module $E + E^o$ is isomorphic to $E \otimes \C^2$ where the grading is only supported on $\C^2$.
An isomorphism can be given by $\frac{1+ \epsilon}{2}\left(\begin{smallmatrix} 1 &0 \\0 & 1 \end{smallmatrix} \right) + \frac{1- \epsilon}{2}\left(\begin{smallmatrix} 0 &1  \\ 1 & 0 \end{smallmatrix} \right)$.
The operators are then given by $
  \left(\begin{smallmatrix}\phi & 0 \\ 0 & \phi \end{smallmatrix}\right),
  \left(\begin{smallmatrix} \vphantom{\phi}0 & F\epsilon \\ \vphantom{\phi} -F\epsilon & 0 \end{smallmatrix}\right),
  \left(\begin{smallmatrix} \vphantom{\phi} 0 & J \epsilon \\ -J \epsilon\vphantom{\phi} &  0 \end{smallmatrix}\right) \text{ or }
    \left(\begin{smallmatrix}J \vphantom{\phi}& 0 \\ 0\vphantom{\phi} &  J \end{smallmatrix}\right)$ and $
    \left(\begin{smallmatrix}0 & 1\vphantom{\phi} \\ \vphantom{\phi}1 & 0 \end{smallmatrix}\right)$
    \end{proof}
Now linking back to the $\alpha$ presentation of~\cref{real8} :
      \begin{prop}\label{eveneta}
        Let $(E,\epsilon, \phi, F, J)$ be a Kasparov bimodule with real structure of even $\KO$ dimension. The action of $\eta$ gives the Kasparov bimodule $(E, \phi, iF\epsilon, J)$ with real structure of odd $\KO$ dimension.
      \end{prop}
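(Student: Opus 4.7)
The plan is to deduce this statement directly from the preceding lemma by unravelling the doubling construction of Remark \ref{oddgraded+} in reverse. The preceding lemma already exhibits $\eta\cdot[E,\epsilon,\phi,F,J]$ in the $\beta^+$ graded formulation, as a quintuple on $E\oplus E^o$ equipped with an additional action of an odd element $c$ of square $+1$ realized by the swap. Remark \ref{oddgraded+} describes exactly how any odd $\alpha$-form Kasparov triple $(E_0,\phi_0,F_0,J_0)$ of odd $\KO$-dimension doubles into such a quintuple. So the work is to recognize the output of the lemma as the doubling of the conjectural triple $(E,\phi,iF\epsilon,J)$.

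First, I would identify $E\oplus E^o$ with $E\otimes\C^2$, moving all the grading onto $\C^2$ via the isomorphism $\tfrac{1+\epsilon}{2}\otimes\mathrm{id}+\tfrac{1-\epsilon}{2}\otimes(\begin{smallmatrix}0&1\\1&0\end{smallmatrix})$, exactly as in the proof of the previous lemma. After this change of basis the quintuple takes the block form recorded in that proof, with $F$ becoming $\bigl(\begin{smallmatrix}0 & F\epsilon\\ -F\epsilon & 0\end{smallmatrix}\bigr)$ and $c$ the swap. Comparing with the canonical anti-diagonal form $F=\bigl(\begin{smallmatrix}0 & -iF_0\\ iF_0 & 0\end{smallmatrix}\bigr)$ of Remark \ref{oddgraded+} immediately reads off the identification
\[F_0 \;=\; iF\epsilon,\qquad J_0\;=\;J,\]
the two admissible normal forms for $J$ (diagonal versus off-diagonal) being selected precisely by the parity $\beta''_d$ of $J$, which matches the alternative forms given in Remark \ref{oddgraded+}.

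Next I would check that $(E,\phi,iF\epsilon,J)$ is indeed a Kasparov bimodule of odd $\KO$-dimension $d+1$. Self-adjointness modulo compacts of $iF\epsilon$ follows from the oddness of $F$, and $(iF\epsilon)^2=F^2$ is the identity modulo compacts; compactness of $[iF\epsilon,\phi(a)]$ reduces to $[F,\phi(a)]\in\K$ since $\phi(a)$ is even. For the real-structure signs one computes
\[J(iF\epsilon)\;=\;-i\,JF\epsilon\;=\;-i\,\alpha'_d\alpha''_d\,F\epsilon J\;=\;-\alpha'_d\alpha''_d\,(iF\epsilon)J,\]
so that the new commutation sign is $\alpha'_{d+1}=-\alpha'_d\alpha''_d$, while $J^2=\alpha_d$ is unchanged. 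A short table check against Definition \ref{real8} for $d\in\{0,2,4,6\}$ confirms that $\alpha_d=\alpha_{d+1}$ and $-\alpha'_d\alpha''_d=\alpha'_{d+1}$ in every case, so the signs of the resulting odd triple are exactly those prescribed for $\KO$-dimension $d+1$.

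The main obstacle is the sign bookkeeping at the passage between the $\beta$ and $\alpha$ conventions: the previous lemma is stated in the graded $\beta^+$ language and Remark \ref{oddgraded+} translates back to the $\alpha$ language, with the factors of $i$ and the choice of $J_0$ absorbing the mismatch $\beta_{d-1}=\alpha_{d-1}\alpha''_{d-1}$. Once this dictionary is fixed, the proposition is essentially a repackaging of the lemma, with no further homotopy or perturbation arguments required.
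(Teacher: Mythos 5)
Your proposal is correct and takes essentially the same route as the paper: both identify $E\oplus E^o$ with $E\otimes\C^2$ via the grading-swapping isomorphism of the preceding lemma, compare the resulting block operators with the normal form of \cref{oddgraded+}, and read off $F_0 = iF\epsilon$ while accounting for the $\alpha$–$\beta$ translation of $J$. The explicit sign computation $J(iF\epsilon) = -\alpha'_d\alpha''_d(iF\epsilon)J$ and the table check $\alpha'_{d+1} = -\alpha'_d\alpha''_d$, $\alpha_{d+1}=\alpha_d$ is a useful verification that the paper leaves implicit.
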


      \begin{proof}
        The previous form of the operator is of the same shape as the one obtained in~\cref{oddgraded+} where were obtained the link between $\alpha$ and $\beta^+$ formalism.
        Changing eventually our antilinear operator for $J\epsilon$ to always have the commutation of it in the non graded sense with the Fredholm operator we also put the initial even  Kasparov triple in an $\alpha$ form.  
      \end{proof}
    For odd Kasparov triples now :
\begin{lemma}
  Let $(E, \phi, F, J, c)$ be a Kasparov bimodule given in the $\beta^-$ presentation of~\cref{oddgraded-} with an odd $c$ of square $-\id_E$ graded commuting with $F$ and $J$.
  Applying $\eta$ to this bimodule gives the same bimodule in the $\beta$ graded formalism  but without $c$:
  \[\eta (E, \phi, F, J, c) = (E, \phi, F, J)\]
\end{lemma}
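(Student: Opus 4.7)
The plan is to compute the Kasparov product $[E,\phi,F,J,c]\cdot \bm\eta$ explicitly, exploiting the identification $\bm\eta = \bm j$ for $j : \R \hookrightarrow \clif[0,1]$ established in~\cref{kwprop}. By functoriality of the Kasparov product in the second variable, multiplication by $\bm j$ amounts to adjoining a $\clif[0,1]$-factor on the right, which in the real-structures framework of~\cref{secrealkko} shifts the $\KO$-dimension from $d$ to $d+1$, as required.

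First I will represent $\bm\eta \in \KKO_1(\R,\R)$ in the $\beta^+$-formalism of~\cref{oddgraded+}, so that its extra Clifford generator $c_\eta$ squares to $+\id$. With our quintuple in $\beta^-$-form ($c^2=-\id$), the odd/odd product formula of~\cref{kaspairreal} then produces a genuine projection
\[
p \;=\; \tfrac{1}{2}\bigl(\id + c_\eta \gotimes c\bigr),
\qquad (c_\eta\gotimes c)^2 \;=\; -\,c_\eta^{\,2}\,c^{2} \;=\; +\id,
\]
on whose image the product module is defined.

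Second, I will identify this image. The natural $\beta^+$-representative of $\bm\eta$ has underlying Hilbert module canonically isomorphic to $\R$, its nontriviality being concentrated entirely in the Clifford datum $c_\eta$; hence $E \gotimes_\R E_\eta \simeq E$ as a $B$-module, and $p$ selects a copy of $E$ on which $c_\eta$ and $c$ become mutually determined and therefore drop out of the data. The restricted quadruple on $\operatorname{Im}(p)$ is exactly $(E,\phi,F,J)$, with $c$ absorbed into the projection. The sign bookkeeping works out automatically: the $\beta^-$-presentation of~\cref{oddgraded-} was arranged so that $(J^2,\deg J)=(\beta_{d+1},\beta''_{d+1})$, which are precisely the signs demanded by the $\beta$-formalism of~\cref{evengraded} for even $\KO$-dimension $d+1$, so no further sign adjustment is required on the restricted data.

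The main obstacle will be the explicit verification in the graded-tensor restriction to $\operatorname{Im}(p)$: I will have to check that neither $F$ nor $J$ picks up a parasitic sign through the projection, and that the $-1$-eigenspace of $c_\eta \gotimes c$ contributes a degenerate, hence Kasparov-trivial, summand. Both are sign computations of the same flavour as those already carried out to relate the $\alpha$, $\beta$, $\beta^+$ and $\beta^-$ presentations in~\cref{evengraded,oddgraded+,oddgraded-}, so I do not expect any new phenomena beyond careful bookkeeping.
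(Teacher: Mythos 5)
Your overall strategy — write $\bm\eta$ in a graded presentation and then push the starting $\beta^-$ quintuple through the explicit odd/odd Kasparov product formula of~\cref{kaspairreal} — is a legitimate alternative to the paper's argument, but it is considerably more laborious and your sketch contains a step that would fail as stated. The paper's proof is purely structural: it uses the identification $\bm\eta = \bm j_*$ for $j:\R\incl\clif[0,1]$, moves the Clifford algebra to the other side to view $\bm\eta$ as the class in $\KKO(\clif[1,0],\R)$ of $\R^2$ with $\clif[1,0]$ acting by $\left(\begin{smallmatrix}0&1\\1&0\end{smallmatrix}\right)$, and then observes that the product with this class is exactly the forgetful functor discarding the distinguished $\clif[1,0]$-type action. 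Since the $\beta^-$ presentation of~\cref{oddgraded-} is \emph{defined} by singling out such an action as the operator $c$, forgetting it returns $(E,\phi,F,J)$ with no computation required.

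The genuine gap in your plan is the claim that the $\beta^+$-representative of $\bm\eta$ has underlying module canonically isomorphic to $\R$. This cannot hold: the $\beta^+$ formalism equips the module with a nonzero \emph{odd} operator $c_\eta$ of square $+\id$, and a one-dimensional graded module admits no odd operator at all. The $\beta^+$ form of $\eta$ necessarily doubles the $\alpha$-form module, so $E_\eta \simeq \C^2$ (concretely, with grading $\left(\begin{smallmatrix}1&0\\0&-1\end{smallmatrix}\right)$, $c_\eta = \left(\begin{smallmatrix}0&1\\1&0\end{smallmatrix}\right)$, $F_\eta = 0$). Consequently $E\gotimes_\C E_\eta \simeq E\oplus E$, not $E$, and the identification of $\operatorname{Im}(p)$ with $E$ must come from computing the $+1$-eigenspace of $c\gotimes c_\eta$ inside the doubled module — namely $\{\,x\otimes e_1 + (-1)^{\partial x}cx\otimes e_2 \,:\, x\in E\,\}$ — and then verifying that $F\gotimes\id$, $\phi\gotimes\id$, and $J\gotimes J_\eta$ transport under this isomorphism to $F$, $\phi$ and $J$ on $E$ (the check for $F$ uses precisely the graded commutation $cF = -Fc$). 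That computation does close the argument, but it is not the shortcut you describe. Separately, your worry that the $-1$-eigenspace of $c_\eta\gotimes c$ must be shown degenerate is unfounded: the odd/odd product of~\cref{kaspairreal} restricts by definition to $\operatorname{Im}(p)$, so the complementary eigenspace never enters the data and there is nothing to discard.
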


\begin{proof}
  The element $\eta = j_* \in  \KKO(\R, \clif[0,1])$ for the inclusion $j: \R \to \clif[0,1]$ gives an element in $\KKO(\clif[1,0], \R)$ given by the action of $( \begin{smallmatrix}
    0 & 1 \\ 1 & 0
  \end{smallmatrix})$ on $\R^2$.
  We can then interpret the action of $\eta$ as a forgetful functor for this $\clif[1,0]$ action.
  But the $\beta^-$ formalism is precisely given by distinguishing such an action as to give the operator $c$.
\end{proof}

  \begin{prop}\label{oddeta}
    Let $(E, \phi, F, J)$ be a Kasparov bimodule with real structure of odd degree $d$ given in the $\alpha$ formalism of~\cref{real8} by an antiunitary $J: E \to E$ and no grading on $E$. The graded tensor product of this bimodule with $\eta$ gives the Kasparov bimodule $(E + E,
    (\begin{smallmatrix} 1 & 0 \\ 0 & -1 \end{smallmatrix}),
    (\begin{smallmatrix} \phi & 0 \\ 0 & \phi \end{smallmatrix}),
    (\begin{smallmatrix} 0 & F \\ F & 0 \end{smallmatrix}))$ with real structure of even degree given by $  (\begin{smallmatrix} J & 0 \\ 0 & J \end{smallmatrix})$ if $d = 3, 7$ or $  (\begin{smallmatrix} 0 & -J \\ J & 0 \end{smallmatrix})$ if $d = 1, 5$
  \end{prop}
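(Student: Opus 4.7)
The plan is to unravel the $\eta$-action using the $\beta^-$ presentation of Remark~\ref{oddgraded-} together with the preceding lemma, which asserts that $\eta$ simply forgets the extra Clifford operator $c$. The starting bimodule $(E, \phi, F, J)$ is in the $\alpha$ formalism of Definition~\ref{real8} with odd $\KO$ dimension $d$; the target lives in the $\alpha$ formalism with even $\KO$ dimension $d+1$. My strategy has three steps: convert to $\beta^-$ form, apply $\eta$ by forgetting $c$, and convert back to the $\alpha$ form.

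First I would apply the explicit doubling formulas at the end of Remark~\ref{oddgraded-} to rewrite $(E, \phi, F, J)$ as the $\beta^-$ quintuple $(E \oplus E, \epsilon, \tilde\phi, \tilde F, \tilde J, c)$ with
$\epsilon = \bigl(\begin{smallmatrix}1 & 0\\0 & -1\end{smallmatrix}\bigr)$, $\tilde\phi = \bigl(\begin{smallmatrix}\phi & 0\\0 & \phi\end{smallmatrix}\bigr)$, $\tilde F = \bigl(\begin{smallmatrix}0 & F\\F & 0\end{smallmatrix}\bigr)$, $c = \bigl(\begin{smallmatrix}0 & -1\\1 & 0\end{smallmatrix}\bigr)$, and $\tilde J$ equal to $\bigl(\begin{smallmatrix}J & 0\\0 & J\end{smallmatrix}\bigr)$ in the even case $\beta''_{d+1}=+1$ (that is, $d = 3, 7$) or to $\bigl(\begin{smallmatrix}0 & J\\-J & 0\end{smallmatrix}\bigr)$ in the odd case $\beta''_{d+1}=-1$ (that is, $d = 1, 5$).

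Second, I would invoke the preceding lemma to conclude that $\eta$ acts on this $\beta^-$ quintuple by dropping $c$, yielding the even-degree graded bimodule $(E \oplus E, \epsilon, \tilde\phi, \tilde F, \tilde J)$ in the $\beta$ presentation of Remark~\ref{evengraded}. The grading, representation, and Fredholm operator already match the shape asserted in the conclusion of the proposition.

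Third, I would translate from the $\beta$ presentation back to the $\alpha$ formalism. For every even $d+1$ the sign $\alpha'_{d+1}$ equals $+1$, so $J$ must \emph{ordinarily} commute with $F$; the parity constraint $\alpha''_{d+1}$ exactly matches the parity $\beta''_{d+1}$ of $\tilde J$, so no parity change is needed. When $\tilde J$ is even (the cases $d = 3,7$) graded and ordinary commutation coincide and one reads off directly $\bigl(\begin{smallmatrix}J & 0\\0 & J\end{smallmatrix}\bigr)$. When $\tilde J$ is odd (the cases $d = 1, 5$) the graded commutation $\tilde F \tilde J = -\tilde J \tilde F$ is converted into ordinary commutation by replacing $\tilde J$ with $\epsilon \tilde J$ (or $\tilde J \epsilon$); a routine check using the values of $\alpha_{d+1}, \beta_{d+1}$ and $\alpha''_{d+1}$ confirms the resulting antiunitary has square $\alpha_{d+1}$ and the correct degree, giving the antidiagonal form stated.

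The main obstacle will be the precise sign bookkeeping in the third step, in particular pinning down the off-diagonal signs of the antidiagonal $\tilde J$ in the cases $d = 1, 5$. The two degrees of freedom available, namely the isomorphism $(E, \phi, F, J) \cong (E, \phi, F, -J)$ of the remark following Definition~\ref{real8} and the freedom to replace $\epsilon \tilde J$ by $\tilde J \epsilon$ (which differ only by a scalar and an inner automorphism), are exactly what is needed to reconcile the various natural representatives with the antidiagonal form displayed in the statement.
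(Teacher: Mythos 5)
Your plan is exactly the proof the paper intends. The paper gives no explicit proof block for this proposition: it is immediately preceded by the lemma stating that $\eta$ acts on a $\beta^-$ quintuple by forgetting $c$, and by Remark~\ref{oddgraded-} giving the explicit doubling $(E+E, \epsilon, \tilde\phi, \tilde F, \tilde J, c)$ with $\tilde J$ diagonal or antidiagonal, and the one-line text after the proposition merely records the $\alpha$-versus-$\beta$ change in the odd cases. So you have reconstructed the intended argument with the same decomposition and the same key lemma.

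One caveat on your third step, which you rightly flag as the sign-sensitive one. If you compute with the explicit matrices of Remark~\ref{oddgraded-} for $d=1,5$, namely $\tilde F = \bigl(\begin{smallmatrix}0 & F\\F & 0\end{smallmatrix}\bigr)$ and $\tilde J = \bigl(\begin{smallmatrix}0 & J\\-J & 0\end{smallmatrix}\bigr)$, and use $\alpha'_d = -1$ (so $FJ = -JF$), you find that $\tilde F \tilde J = \tilde J \tilde F$ already holds \emph{ordinarily}: this $\tilde J$ is in fact already in the $\alpha$ form, and up to the $(E,\phi,F,J)\cong(E,\phi,F,-J)$ isomorphism it equals the asserted $\bigl(\begin{smallmatrix}0 & -J\\J & 0\end{smallmatrix}\bigr)$ without multiplying by $\epsilon$. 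Multiplying by $\epsilon$, as you propose, would instead produce $\bigl(\begin{smallmatrix}0 & J\\J & 0\end{smallmatrix}\bigr)$, which anticommutes with $\tilde F$ and is precisely the \emph{graded} $\beta$-form the paper mentions in the sentence following the proposition. So either the paper's Remark~\ref{oddgraded-} has a sign slip (the stated antilinear should graded-commute with $\tilde F$ but its matrix does not), or the preceding lemma's output should be read as already landing in the $\alpha$ form. Either way, your remedy — tracking the two ambiguities $J \leftrightarrow -J$ and $\epsilon\tilde J \leftrightarrow \tilde J\epsilon$ — is enough to reconcile the bookkeeping, but note that in one reading the third-step conversion is actually a no-op (up to overall sign) rather than a multiplication by $\epsilon$.
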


  Here the antilinear operator commutes with the Fredholm operator, for a graded commutation as in~\cref{evengraded} we must change it for  $  (\begin{smallmatrix} 0 & J \\ J & 0 \end{smallmatrix})$ in the case $d= 1, 5$.

Coming now to realification, starting with even dimensional Kasparov triples in $\alpha$ formalism:

  \begin{prop}
    For any complex Kasparov bimodule of even degree $(E , \epsilon, \phi, F) $ on $A_\C- B_\C$.
    Its realification gives the Kasparov module $(E \oplus \bar E, (\begin{smallmatrix} \epsilon_H & 0 \\ 0 & \alpha'' \epsilon_H
   \end{smallmatrix}), (\begin{smallmatrix} \phi & 0 \\ 0 & \phi
   \end{smallmatrix}), (\begin{smallmatrix} F_0 & 0 \\ 0 & F_0
   \end{smallmatrix}))$ with real structure given by $J = (\begin{smallmatrix} 0 & \alpha \\ 1 & 0
    \end{smallmatrix})$.
  \end{prop}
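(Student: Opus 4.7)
The plan is to run the realification through the two-step description from Lemma \ref{realcomp} and then recognise the result in the $\alpha$ formalism of Definition \ref{real8}. First I would apply the definition of realification: the complex Kasparov bimodule $(E,\epsilon,\phi,F)$ on $A_\C-B_\C$ becomes the real Kasparov bimodule $(E_\R,\epsilon,\phi|_A,F)$ on $A-B$, where $E_\R$ is $E$ regarded as a real Hilbert $B$-module with inner product $\re\langle\cdot,\cdot\rangle$; the grading and operator are unchanged since they are $\R$-linear.

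To re-express this bimodule on $A_\C-B_\C$ with a real structure, I would complexify through Lemma \ref{kkr}: form $(E_\R)_\C=E_\R\otimes_\R\C$ equipped with the canonical antilinear involution $\id\otimes\bardot$. The decisive identification is $(E_\R)_\C\simeq E\oplus\bar E$ as complex $B_\C$-modules, obtained by diagonalising the two commuting complex structures on $E\otimes_\R\C$ (the one inherited from $E$ and the one from $\C$): the $+1$-eigenspace of their product is canonically $E$ and the $-1$-eigenspace is $\bar E$. Under this splitting, the grading, action and Fredholm operator (being $\R$-linear and commuting with both complex structures) act diagonally as in the stated formula, while $\id\otimes\bardot$ (commuting with the original complex structure, anticommuting with $i\in\C$) permutes the two eigenspaces via the canonical antilinear identification $E\leftrightarrow\bar E$. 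This already gives the formula for $\KO$-dimension $d=0$, where $\alpha=\alpha''=1$.

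For the higher even $\KO$-dimensions $d=2,4,6$, the bimodule is understood to carry extra $\clif_\C[p,q]$-content with $p-q=d$; absorbing the Clifford factor, write $E\simeq\tilde E\gotimes R$ with $R$ the irreducible graded $\clif_\C[p,q]$-module. Proposition \ref{clifrep} supplies an antilinear intertwiner $J_R$ on $R$ with $J_R^2=\alpha\,\id$ and $J_R\epsilon_R=\alpha''\epsilon_R J_R$. Tensoring the swap constructed above with $J_R$ gives the total antilinear operator $J$; read off on the block decomposition $\tilde E\oplus\overline{\tilde E}$ this becomes $\left(\begin{smallmatrix}0&\alpha\\1&0\end{smallmatrix}\right)$ (the sign $\alpha$ being inherited from $J_R^2$), while the commutation rule for $J_R$ transports the grading to $\left(\begin{smallmatrix}\epsilon&0\\0&\alpha''\epsilon\end{smallmatrix}\right)$.

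It then remains to verify the axioms of Definition \ref{real8} by a routine $2\times 2$ block calculation: $J^2=\alpha$, $J\epsilon=\alpha''\epsilon J$, and $JF=FJ$ (so $\alpha'=1$ for even $d$). Compatibility of $J$ with the complex conjugations on $A_\C$ and $B_\C$ is inherited from $\id\otimes\bardot$, and the Fredholm and compactness conditions descend directly from those of the original bimodule. The main obstacle is the sign bookkeeping in the Clifford step: the three independent signs coming from $J_R^2$, from the swap versus $J_R$, and from the grading commutation must combine consistently with the convention of Remark \ref{orientation} so as to reproduce exactly the tabulated $\alpha,\alpha''$ without spurious extra factors.
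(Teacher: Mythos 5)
Your proposal is correct and follows essentially the same route as the paper: both proofs realify, recomplexify to $E\otimes_\R\C$, split this into $E\oplus\bar E$ via the eigenspaces of the two commuting complex structures (the paper's explicit map $h$ is exactly this diagonalisation), and then transport the conjugation through $J_R$ on the Clifford factor $R$ using Proposition~\ref{clifrep} to read off $\alpha$ and $\alpha''$. The only difference is presentational — you treat $d=0$ first and then bolt on the Clifford module, whereas the paper writes the chain of isomorphisms with $R$ already absorbed from the start — but the underlying identifications and sign bookkeeping are identical.
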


  \begin{proof}
   We follow the chain of isomorphism of graded $A_\C - \cclifpq \otimes B_\C$ bimodules \[(E \gotimes_\C R) \otimes_\R \C \totxt{h} E \gotimes_\C R \oplus \bar E \gotimes_\C \bar R \totxt{\id\otimes \id \oplus \id \otimes J_E} (E \oplus \bar E) \gotimes R \text{ or } (E \oplus \bar E^o) \gotimes R \]

  The graded structure on $E \oplus \bar E$ depending whether $J_R$ intertwines $\bar R$ with the irreducible module $R$ on $\cclifpq$ or its opposite $R^o$.

  The conjugation operator is given by $ (\begin{smallmatrix} 0 & \alpha \\ 1 & 0
  \end{smallmatrix}) \otimes J_R$ and the Fredholm operator by $(\begin{smallmatrix} F_0 & 0 \\ 0 & F_0
  \end{smallmatrix}) \otimes \id$. 
  \end{proof}

  In the odd case now, from odd complex Kasparov bimodules:

    \begin{prop}
    For any complex Kasparov bimodule of odd degree $(E ,  \phi, F) $ on $A_\C- B_\C$.
    The image of it by the realification morphism is the Kasparov bimodule $(E \oplus \bar E,  \phi, F, J)$ with
    \begin{align*}
      \phi &= \left(\begin{matrix} \phi & 0 \\ 0 & \phi
     \end{matrix}\right)&
     F &=  \left(\begin{matrix} F & 0 \\ 0 & \alpha' F
     \end{matrix}\right) &
     J &= \left(\begin{matrix} 0 & \alpha \\ 1 & 0
      \end{matrix}\right)
    \end{align*}
  \end{prop}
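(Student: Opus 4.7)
My plan is to mirror the proof of the preceding even-case proposition, tracking the realification through a chain of natural isomorphisms with the odd-Clifford irreducible module $R$ replacing the even-Clifford one. First I will package the odd complex Kasparov bimodule $(E,\phi,F)$, via the formal Bott periodicity already stated in the excerpt, as the $A_\C - B_\C\otimes \cclifpq$ Kasparov bimodule $(E\otimes R,\,\phi\otimes 1,\,F\otimes\omega)$, where $R$ is the unique irreducible graded module on $\cclifpq$ and $\omega$ is its chirality operator. I will then run the chain
\[
(E\otimes_\C R)\otimes_\R \C \;\simeq\; E\otimes_\C R \,\oplus\, \bar E\otimes_\C \bar R \;\xrightarrow{\,\id\,\oplus\,(\id\otimes J_R)\,}\; (E\oplus\bar E)\otimes R,
\]
identifying $\bar R$ with $R$ via the antilinear intertwiner $J_R$ of \cref{clifrep}, and convert back to the $\alpha$-formalism at the end.

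Next I track the canonical antilinear involution $\sigma: v\otimes z\mapsto v\otimes\bar z$ of the first term, which encodes the real structure on the complexification of the real form. Under the first isomorphism $\sigma$ becomes the antilinear swap on $E\oplus\bar E$ (antilinear because the two summands carry opposite complex structures). Under the second isomorphism, conjugating by $\id\otimes J_R$ on the $\bar E$-component brings in the sign from $J_R^{-1}=\alpha J_R$ (which follows from $J_R^2=\alpha$), producing an antilinear operator of the form $\left(\begin{smallmatrix}0 & \alpha\\ 1 & 0\end{smallmatrix}\right)\otimes J_R$ on $(E\oplus\bar E)\otimes R$.

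In parallel I transport the Fredholm operator $F\otimes\omega$. The copy living on $\bar E\otimes\bar R$, when pushed across $\id\otimes J_R$, becomes $F\otimes(J_R\omega J_R^{-1})=\alpha'\,F\otimes\omega$ by the commutation relation $J_R\omega=\alpha'\omega J_R$ of \cref{clifrep}. Consequently the full Fredholm operator on $(E\oplus\bar E)\otimes R$ reads $\left(\begin{smallmatrix}F & 0\\ 0 & \alpha' F\end{smallmatrix}\right)\otimes\omega$, and the $A_\C$-action is diagonal.

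Finally, factoring out the Clifford/chirality data to pass back to the $\alpha$-formalism of \cref{real8} (i.e.\ absorbing the $\otimes R$ and $\otimes\omega$ into the Clifford labelling) gives the stated real Kasparov bimodule $(E\oplus\bar E,\phi,F',J)$ with $F'=\left(\begin{smallmatrix}F & 0\\ 0 & \alpha' F\end{smallmatrix}\right)$ and $J=\left(\begin{smallmatrix}0 & \alpha\\ 1 & 0\end{smallmatrix}\right)$. The main obstacle will be careful sign bookkeeping: I must check that the resulting $J$ and $F'$ satisfy the odd-$\KO$-dimension compatibilities $J^2=\alpha\,\id$ and $F'J=\alpha' JF'$, and that the two signs land in distinct slots precisely because $\alpha$ enters only through $J_R^2$ while $\alpha'$ enters only through the commutation of $J_R$ with $\omega$.
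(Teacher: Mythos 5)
Your proposal is correct and follows essentially the same route as the paper's proof: both package the odd complex bimodule as $(E\otimes R,\phi\otimes 1,F\otimes\omega)$ via formal Bott periodicity, run the chain of isomorphisms $(E\otimes_\C R)\otimes_\R\C \simeq E\otimes R\oplus\bar E\otimes\bar R \to (E\oplus\bar E)\otimes R$ using $h$ and $\id\otimes J_R$, and read off the signs $\alpha$ from $J_R^2$ and $\alpha'$ from the $J_R$--$\omega$ commutation. The only difference is presentational: the paper writes out the explicit coordinate formulas for $h$, $h^{-1}$ and the transported operators, whereas you argue conceptually by conjugating through $\id\otimes J_R$, which arrives at the same bookkeeping more cleanly.
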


  \begin{proof}
  Any Kasparov bimodule $(E \otimes R, \phi\otimes \id, F \otimes \omega)$ on $A - B \otimes \cclifpq$ gives the chain of isomorphisms:
  \[(E \otimes_\C R) \otimes_\R \C \totxt{h} E \otimes_\C R \oplus \bar E \otimes_\C \bar R \totxt{\id\otimes \id \oplus \id \otimes J_R} E \otimes R \oplus \bar E \otimes R \simeq (E \oplus \bar E) \otimes R\]
  Where   $h(x \otimes \lambda) = (zx, \bar z x)$ for $x \in E \gotimes_\C R$ and $\lambda \in \C$.
  The inverse of $h$ is given by $h^{-1}(x, y) = \frac{1}{2}(x + y ) \otimes 1  + \frac{i}{2}( y - x ) \otimes i$.
  Then $(x, y)$ in $ E\otimes R \oplus \bar E \otimes R$ translates into $\frac{1}{2}(x + J_R(y) ) \otimes 1  + \frac{i}{2}( J_R(y) - x) \otimes i$ in $E \otimes_\C R \otimes_\R \C $
applying $F \otimes w \otimes 1 $ we obtain $\frac{1}{2}(F(x) + F(J_R(y)) ) \otimes 1  + \frac{i}{2}( F(J_R(y)) - F(x) ) \otimes i$.
Conjugation on $\C$ gives:
\begin {align*}
\frac{1}{2}(x + J_R(y) ) \otimes 1  - \frac{i}{2}( J_R(y) - x ) \otimes i
  &= \frac{1}{2}(J_R(y) + \alpha J_R(J_R(x)) ) \otimes 1  + \frac{i}{2}( \alpha J_R(J_R(x)) - J_R(y) ) \otimes i
\end{align*}
  \end{proof}

\subsection{Finite dimensional \texorpdfstring{$\KO$}{KO} theory via symmetries}
Let $A$ be a trivially graded $\cstarr$ and $A_\C$ its complex $\cstar$ with real structure.
As explained in~\cref{ko} an other description of $\KO$ theory due to Karoubi is given by equivalences classes of unitaries in some matrix algebra.
When $A$ is unital an element of the group $\kok_{p-q}(A)$ is given by the class of a tuple $(s_1, s_2, E)$ where $E$ is a $\clifpq$ right $A-$linear action on some free and finitely generated $A$ module with generators anticommuting with two $A-$linear symmetries $s_1$ and $s_2$.
If $A$ is not unital then we change $A$ for its point unitalization $A^+$ and furthermore assume that $s_1$ equals $s_2$ mod $A$.
In the spirit of~\cref{cliflr} where Clifford algebra where taken from the left to the right as arguments in $\KKO$, consider $E$ as an ungraded left $\clif[q,p]$ module.
In addition $E$ can be considered as an action of $\clif[p,q+1]$ on a graded module by Morita invariance.
Write $\epsilon$ this grading.
Doing so, we can identify $s_1$ and $s_2$ as some matrix over $A \otimes \clif[q+1, p]$.
We then obtain a formulation of $\KO$ theory similar to the one of Van Daele~\cite{v88a, v88} where a $\KO$ cycle of degree $p-q$ is given by cycles $(C, \epsilon,s_1, s_2)$ where
\begin{itemize}
\item $C$ is a representation $\clif[q+1, p] \to \M_n(A)$
\item $\epsilon$ is a self-adjoint unitary in $\M_n(A)$ anticommuting with generators of the Clifford action
  \item $s_1$ and $s_2$ are self adjoint unitaries in $\M_n(A)$ commuting with generators of the Clifford action and anticommuting with $\epsilon$  
\end{itemize}

The main difference with the original Van Daele formulation is the choice of a base point in the set of such operators.
This has another advantage when we will later consider pairings.

We write the complexification $E_\C$ as some $A_\C^n \otimes R$ or $(A_\C^n+ A^m_\C) \otimes R$ as in the previous section.
Any complexification $s_\C$ of a symmetry $s$ acting on $E$, anticommuting with some Clifford action on $E$ can be given a form we describe now in the following proposition:
\begin{prop}\label{finiterealsec}
  Let $d= p-q$.  on $A_\C^n \otimes R$ the conjugation operator is given as $J \otimes J_R$ and  $s_\C$ can be written 
  
  If $d$ is even, as $u \otimes \omega$ where $u =2p-1$ is a symmetry verifying the following relations for $d$ and $\omega$ is the chirality operator of~\cref{orientation}.
  
  If $d$ is odd,  as $ \left(  \begin{smallmatrix}
      0 & u^* \\ u & 0
    \end{smallmatrix}\right)\otimes 1$ with $u$ a unitary of $\M_n(A_\C)$ verifying the following relations :
  \begin{center}
  \begin{tabular}{lrrr}\label{finitereal8}
    \\ \toprule
    $d$ & $J$ anti-unitary & unitary picture & projection picture \\ \midrule
    $0$ & $J^2 = 1$ & $u^* = u, J uJ^* = u $ & $J p J^* = p$ \\
    $1$ & $J^2 = 1$ & $J uJ^* = u $ & \\
    $2$ & $J^2 = 1$ & $u^* = u, J uJ^* = -u $ & $J p J^* = 1-p$ \\
    $3$ & $J^2 = -1$ & $J uJ^* = u^* $ & \\
    $4$ & $J^2 = -1$ & $u^* = u, J uJ^* = u $ & $J p J^* = p$ \\
    $5$ & $J^2 = -1$ & $J uJ^* = u $ &  \\
    $6$ & $J^2 = -1$ & $u^* = u, J uJ^* = -u $ & $J p J^* = 1-p$ \\
    $7$ & $J^2 = 1$ & $J uJ^* = u^* $ & \\
    \bottomrule
  \end{tabular}
\end{center}
\end{prop}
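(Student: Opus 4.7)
The plan is to reduce the proposition to an application of the structure theorem for complexified Clifford representations (Proposition \ref{clifrep}) combined with Schur's lemma, then translate the axioms of the $\KO$ cycle $(C,\epsilon,s_1,s_2)$ into algebraic relations on the "scalar" factor.

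First I would fix the Clifford algebra $\clif[q+1,p]$ appearing in the Van~Daele-type description and observe that its complexification $\cclif[p+q+1]$ is a full matrix algebra when $p+q+1$ is even (i.e.\ when $d$ is odd) and a sum of two matrix algebras when $p+q+1$ is odd (i.e.\ when $d$ is even). Applying Proposition \ref{clifrep} to the free module on which $C$ acts, the complexification $E_\C$ then splits as $A_\C^n \gotimes R$ in the even-$d$ case (with $R$ the unique irreducible graded module, carrying the chirality operator $\omega = \omega_\ori$ of \cref{chiror}) and as $(A_\C^n \oplus A_\C^n) \gotimes R$ in the odd-$d$ case (where $R$ and $R^o$ are the two irreducibles, interchanged by the grading $\epsilon$). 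The antilinear conjugation on $E_\C$ factors correspondingly as $J \otimes J_R$, and the tabulated sign $\alpha$ records $J^2$.

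Next I would invoke Schur's lemma on $s_\C$. The symmetry $s$ is self-adjoint and unitary, commutes with the Clifford generators and anticommutes with $\epsilon$. In the even-$d$ case the only self-adjoint odd endomorphisms of $R$ commuting with $\clif[q+1,p]_\C$ are scalar multiples of $\omega$, so $s_\C = u \otimes \omega$ with $u = 2p-1$ a symmetry on $A_\C^n$, giving the projection picture via $p = \tfrac{1}{2}(1+u)$. In the odd-$d$ case $\epsilon$ swaps the two copies and $s_\C$ must be off-diagonal; since $s^2 = 1$ forces the upper-right and lower-left blocks to be inverse and adjoint, one gets $s_\C = \bigl(\begin{smallmatrix} 0 & u^* \\ u & 0 \end{smallmatrix}\bigr) \otimes 1$ for a unitary $u$.

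For the commutation relations I would conjugate by $J \otimes J_R$. Writing $\bar s_\C = s_\C$ (since $s \in \M_n(A)$ is real) produces $J u J^* \otimes J_R \omega J_R^* = u \otimes \omega$ in the even case and the analogous off-diagonal identity in the odd case. Using the signs $\alpha''$ (how $J_R$ interacts with $\epsilon_R$) and $\alpha'$ (how $J_R$ interacts with $\omega$) given in the discussion before \cref{real8}, these simplify to $JuJ^* = u$ or $JuJ^* = -u$ in the even case (the sign being $\alpha''$, which cycles through $+,-,+,-$ over $d=0,2,4,6$), and to $JuJ^* = u$ or $JuJ^* = u^*$ in the odd case (distinguishing $d \equiv 1,5$ from $d \equiv 3,7$ by $\alpha'$). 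Combining with the values of $\alpha$ from \cref{clifrep} yields the table.

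The main obstacle will be the sign bookkeeping: one must carefully align the conventions of Proposition \ref{clifrep} (where $\alpha,\alpha',\alpha''$ are read off a given $\clif[p,q]$) with the index shift by one in the Van~Daele description, and then compare with the $\beta$-conventions used in \cref{evengraded,oddgraded+,oddgraded-} to make sure that in each of the eight cases the listed relation on $u$ is exactly the one forced by $J^2 = \alpha_d$, $FJ = \alpha'_d JF$ and $J\epsilon = \alpha''_d \epsilon J$. Once the signs are verified case by case, the projection-picture column in the even rows follows immediately from $u = 2p-1$.
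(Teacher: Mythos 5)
Your proof takes essentially the same route as the paper's: complexify, use the Schur lemma to factor $E_\C$ through the irreducible graded $\cclif$-module $R$ and write the conjugation as $J\otimes J_R$, then translate the reality condition $\overbar{s_\C}=s_\C$ into a relation on the unitary/projection factor $u$ via the signs of \cref{clifrep}. The structural points (ungraded $\tilde E$ and $s_\C = u\otimes\omega$ when $d$ is even, graded $\tilde E$ with off-diagonal $s_\C$ when $d$ is odd) are exactly the paper's.

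However, the sign bookkeeping you sketch has the roles of $\alpha'$ and $\alpha''$ interchanged. Recall that in \cref{clifrep} the sign $\alpha''$ (graded commutation of $J_R$ with $\epsilon_R$) is only defined when $\clif_\C$ is a matrix algebra, while $\alpha'$ (commutation of $J_R$ with $\omega$) is only defined when $\clif_\C$ is a sum of two matrix algebras. In the Van Daele picture the acting Clifford algebra is $\clif[q+1,p]$, whose complexification has $p+q+1$ generators: when $d=p-q$ is \emph{even}, $p+q+1$ is odd, so you are in the odd Clifford case and it is $\alpha'$ (for $\clif[q+1,p]$, i.e.\ at $\KO$ degree $d-1$) that governs $J_R\omega J_R^{-1}$, giving $JuJ^*=\alpha'_{d-1}u$; when $d$ is \emph{odd}, $\clif[q+1,p]_\C$ is a matrix algebra and it is $\alpha''_{d-1}$ that decides whether the antiunitary on the graded factor $\tilde E$ is block-diagonal (yielding $JuJ^*=u$) or off-diagonal (yielding $JuJ^*=u^*$). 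You attribute the even case to $\alpha''$ and the odd case to $\alpha'$, which is backwards — in each of those cases the quantity you invoke is not even defined for the relevant Clifford algebra. Numerically your table comes out right because $\alpha'_{d-1}$ (over even $d$) and $\alpha''_{d-1}$ (over odd $d$) happen to produce exactly the cycling you read off, but the derivation as written cites the wrong sign in both branches. Also note that in the odd case the value of $J^2$ is fixed not by $(J\otimes J_R)^2=1$ alone but by additionally requiring the base point $\bigl(\begin{smallmatrix}0&1\\1&0\end{smallmatrix}\bigr)$ to be real; this step is missing from your sketch and is needed to pin down $J^2=\alpha_{d-1}$ rather than $\pm\alpha_{d-1}$.
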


We recover the presentation of $\KO$ theory of unital and trivially graded $\cstarrs$ via projection or unitaries with symmetries one can find for example in \cite{bl15}.

\begin{proof}
  Assume for simplicity that $A$ is unital.
  Schur lemma enables to write the conjugation on $E \otimes R$ as $J\otimes J_R$ where $J$ is a antiunitary of $A_\C^n$ and $s_\C$ as $\tilde s \otimes 1 $ or $(2p-1) \otimes 1$.
   We then transpose the relation of~\cref{clifrep} for the degree $d+1$ to find the result.
  
   In the odd case  $A^n_\C$ comes with a grading.
   We write the symmetry $s_{\C}$ on $A_\C^n\gotimes R$ as $ \left(  \begin{smallmatrix}
      0 & u^* \\ u & 0
    \end{smallmatrix}\right) \otimes 1$ and the conjugation operator as  $\left(
    \begin{smallmatrix}
      J &0 \\0 &J'
    \end{smallmatrix}\right)\otimes J_R$ or  $\left(
    \begin{smallmatrix}
      0 & J \\ -J & 0
    \end{smallmatrix}\right) \otimes J_R$.
The relations given by~\cref{clifrep} between $u_1, u_2$ and $J$ now translates into the given ones in the table. 
\end{proof}

Writing any odd element $\left[ \left(  \begin{smallmatrix}
      0 & u^* \\ u & 0
    \end{smallmatrix}\right) \otimes 1,  \left(  \begin{smallmatrix}
      0 & 1 \\ 1\vphantom{u^*} & 0
    \end{smallmatrix}\right) \otimes 1\right]$ as $[u, J]$ or simply $[u]$ if the antilinear operator is understood, every odd $\kok$ element is given by a difference $[u, J] - [v, J']$.
  Note that for $d=1, 5$ the law on classes of such unitaries coincides with the usual multiplication of matrices.
  For $d = 3, 7$ the additive law is described by $[u_1, J] + [u_2, J] = [u_1 u_2, u_2J u_2^*]$ .
  For $d= 0$ or 4, any $\kok$ class is given by some $\left[ (1-2p) \otimes \omega, \id \otimes \omega \right] - \left[ (1-2q) \otimes \omega, \id \otimes \omega \right] $ that we write $ [p, J] - [q, J']$.
For $d= 2$ or $6$ we can choose an other base point and write any $\kok$ class as  
  $\left[ (1-2p) \otimes \omega, \left(
    \begin{smallmatrix}
      0 & -i \\i & 0
    \end{smallmatrix} \right) \otimes \omega \right] - \left[ (1-2q) \otimes \omega, \left(
    \begin{smallmatrix}
      0 & -i \\i & 0 \end{smallmatrix}\right)\otimes \omega \right] $ written as $ [p, J] - [q, J']$.

\begin{prop}
  The product with $\eta$ is given respectively in even and odd degree by
\begin{align*}
[p, J] &&\mapsto&  \left[1-2p, J \right] \\  
  [u, J] &&\mapsto &\left[
                     \frac{1}{2}\left(\begin{matrix} 1 & u^* \\ u & 1\end{matrix}\right),
                     \left(\begin{matrix} J & 0 \\ 0 & -J \end{matrix}\right) \right] \text{ for }d = 3, 7
         \left[ \frac{1}{2}\left(\begin{matrix} 1 & -iu^* \\ iu & 1\end{matrix}\right) ,
            \left(\begin{matrix}  0 &J     \\ -   J & 0\end{matrix}\right) \right]              \text{ for }  d = 1, 5
\end{align*}
\end{prop}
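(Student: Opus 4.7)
The approach is to translate each finite-dimensional class into a Kasparov bimodule via Proposition \ref{finiterealsec}, apply the action of $\eta$ as given by Propositions \ref{eveneta} and \ref{oddeta}, and then translate back to the finite-dimensional picture. The even and odd cases are handled separately, following the case distinction in Proposition \ref{finiterealsec}.

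\emph{Even case.} A class $[p, J]$ at even degree $d$ corresponds via the Van Daele presentation to the symmetry pair $(1-2p, 1)$, giving an even Kasparov bimodule $(E, \epsilon, \phi, F, J)$ in the $\alpha$-formalism. Proposition \ref{eveneta} produces $\eta \cdot [E, \epsilon, \phi, F, J] = [E, \phi, iF\epsilon, J]$ as an odd-degree class. Reading this back through Proposition \ref{finiterealsec} in the odd case, the new Fredholm operator is identified with the self-adjoint unitary $1-2p$ and the antilinear operator is unchanged. Hence $\eta \cdot [p, J] = [1-2p, J]$, and the symmetry relations of the table at degree $d+1$ follow directly from those at degree $d$ (for instance, $JpJ^*=p$ at $d=0$ yields $J(1-2p)J^*=1-2p$, matching the condition $JuJ^*=u$ at $d=1$).

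\emph{Odd case.} A class $[u, J]$ at odd degree $d$ corresponds to the Kasparov bimodule on $E \oplus E$ with Fredholm operator $F = \left(\begin{smallmatrix}0 & u^*\\u & 0\end{smallmatrix}\right)$ and antilinear operator in the matrix shape dictated by Proposition \ref{finiterealsec}. Applying Proposition \ref{oddeta} doubles the module and produces an even Kasparov bimodule with grading $\left(\begin{smallmatrix}1 & 0\\0 & -1\end{smallmatrix}\right)$, Fredholm operator $F' = \left(\begin{smallmatrix}0 & F\\F & 0\end{smallmatrix}\right)$, and antilinear operator of diagonal or antidiagonal form according to $d$. Passing to the graded $\beta$-formalism of Remark \ref{evengraded} (multiplication of the antilinear operator by $\epsilon$) and extracting the projection onto the $+1$-eigenspace of the self-adjoint operator $F'$ returns the finite-dimensional projection picture. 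A direct computation then recovers $\frac{1}{2}\left(\begin{smallmatrix}1 & u^*\\u & 1\end{smallmatrix}\right)$ for $d=3,7$, while for $d=1,5$ the antidiagonal form of $J'$ is only compatible with the projection onto the $+1$-eigenspace of $iF'\epsilon$ rather than $F'$ itself, producing the $i$-twist $\frac{1}{2}\left(\begin{smallmatrix}1 & -iu^*\\iu & 1\end{smallmatrix}\right)$.

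\emph{Main obstacle.} The principal difficulty lies in tracking the antilinear operators across the formalisms. The distinction between $JuJ^*=u^*$ for $d=3,7$ and $JuJ^*=u$ for $d=1,5$ determines, via Proposition \ref{oddeta}, which doubling of $J$ graded-commutes with the doubled $F'$; this in turn forces the $i$-rotation in the $d=1,5$ subcase so that the resulting projection genuinely satisfies the symmetry conditions of Proposition \ref{finiterealsec} at degree $d+1$. Once the right matrix form of $J'$ is fixed in each subcase, checking all of these symmetry conditions and the fact that the produced $P$ is a genuine projection is a routine case-by-case calculation.
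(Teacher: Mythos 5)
Your strategy — translate to the Kasparov $\alpha$-formalism, apply Propositions \ref{eveneta} and \ref{oddeta}, and translate back — is genuinely different from the paper's proof, which computes directly on the Karoubi representatives: for the even-to-odd step it uses the $j_*$ interpretation of $\eta$ (doubling $A_\C^n\otimes R$ to $A_\C^{2n}\otimes R$ and reading off the new symmetry $\left(\begin{smallmatrix}0&1-2p\\1-2p&0\end{smallmatrix}\right)\otimes 1$), and for the odd-to-even step it applies the mechanism $F\mapsto iF\epsilon$ of Proposition \ref{eveneta} in place (without doubling), followed by a conjugation with $\left(\begin{smallmatrix}1&0\\0&i\end{smallmatrix}\right)$ for $d=3,7$. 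Your route can in principle be made to work, but as written it has concrete gaps in both directions.

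In the even case, after \ref{eveneta} the Fredholm operator becomes $iF\epsilon$. Taking $F=s_\C=(2p-1)\otimes\omega$ and $\epsilon=1\otimes\epsilon_R$, one finds $iF\epsilon=(2p-1)\otimes i\omega\epsilon_R$, which still has the shape $u\otimes\omega'$ rather than the block-off-diagonal form $\left(\begin{smallmatrix}0&u^*\\u&0\end{smallmatrix}\right)\otimes 1$ demanded by the odd branch of Proposition \ref{finiterealsec}. The Morita doubling that the paper makes explicit is exactly what converts the former into the latter; your claim that $iF\epsilon$ "is identified with the self-adjoint unitary $1-2p$" asserts the endpoint without supplying this step.

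In the odd case, applying Proposition \ref{oddeta} to the module $E\oplus E$ carrying $F=\left(\begin{smallmatrix}0&u^*\\u&0\end{smallmatrix}\right)$ doubles once more, giving $F'=\left(\begin{smallmatrix}0&F\\F&0\end{smallmatrix}\right)$ on a module of four times the size of $E$. The spectral projection $\tfrac{1}{2}(1+F')$ then block-decomposes as $\tfrac{1}{2}\left(\begin{smallmatrix}1&u^*\\u&1\end{smallmatrix}\right)\oplus\tfrac{1}{2}\left(\begin{smallmatrix}1&u\\u^*&1\end{smallmatrix}\right)$, not the single $2\times2$ block you claim. Some reduction is needed to collapse back to the undoubled answer, and this is not supplied. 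The paper sidesteps the issue by never doubling: it applies $F\mapsto iF\epsilon$ directly to the odd Karoubi representative $\left(\begin{smallmatrix}0&u^*\\u&0\end{smallmatrix}\right)\otimes 1$, yielding $\left(\begin{smallmatrix}0&-iu^*\\iu&0\end{smallmatrix}\right)\otimes\omega$ and hence $\tfrac{1}{2}\left(\begin{smallmatrix}1&-iu^*\\iu&1\end{smallmatrix}\right)$ in place. Finally, your account of the $d=1,5$ versus $d=3,7$ distinction (choosing the eigenspace of $iF'\epsilon$ instead of $F'$) does not match the paper's treatment, which always takes $iF\epsilon$ and for $d=3,7$ removes the factor $i$ by conjugating with $\left(\begin{smallmatrix}1&0\\0&i\end{smallmatrix}\right)$; that conjugation step is absent from your argument and should be made explicit.
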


\begin{proof}
  To an even class represented by a projector $p$ is associated the triple $[1\otimes \omega, (1-2p)\otimes \omega, A_\C^n \otimes R]$.
  Now, as in~\cref{oddeta}, $\eta$, interpreted as $j_*$ for $j$ an inclusion $\R \incl \clif[0,1]$ acts on our element as to give $ \left[\left(  \begin{smallmatrix}
      0 & \vphantom{u^*}1 \\ 1 & 0
    \end{smallmatrix}\right)\otimes 1,
  \left(\begin{smallmatrix}
      0 & 1-2p \\ 1-2p & 0
    \end{smallmatrix}\right)\otimes 1, A_\C^{2n} \otimes R \right]$.

Starting with an odd degree element $\left[ 
  \left(\begin{smallmatrix} 0 & u^* \\ u & 0  \end{smallmatrix}\right)\otimes 1, \left(  \begin{smallmatrix} 0 & \vphantom{u^*}1 \\ 1 & 0  \end{smallmatrix}\right)\otimes 1,
  A_\C^n \otimes R \right]$
 where conjugation in given by $\left(
    \begin{smallmatrix}
      J &0 \\0 &J'
    \end{smallmatrix}\right)\otimes J_R$ or  $\left(
    \begin{smallmatrix}
      0 & J \\ -J & 0
    \end{smallmatrix}\right) \otimes J_R$.
  As in the proof of~\cref{eveneta}, $\eta$ acts as to give
$[
  \left(\begin{smallmatrix} 0 & -iu^* \\ iu & 0  \end{smallmatrix}\right)\otimes \omega,
   \left(  \begin{smallmatrix} 0 & \vphantom{u^*}-i \\ i & 0  \end{smallmatrix}\right)\otimes \omega,
   A_\C^{n} \otimes R ]$.
   Now, it is of the form $\left[ \frac{1}{2}\left(\begin{smallmatrix} 1 & -iu^* \\ iu & 1  \end{smallmatrix}\right), \left(
    \begin{smallmatrix}
      0 & J \\ -J & 0
    \end{smallmatrix}\right) \otimes J_R \right]$
as defined above in dimension 2 and 6.
   In other dimensions, we put it in the form $\left[ \frac{1}{2}\left(\begin{smallmatrix} 1 & u^* \\ u & 1  \end{smallmatrix}\right), \left(
    \begin{smallmatrix}
       J & 0 \\  0 & -J
    \end{smallmatrix}\right) \otimes J_R \right]$ after conjugation with $\left(
    \begin{smallmatrix}
      1 & 0 \\ 0 & i
    \end{smallmatrix}\right)$
\end{proof}
 
As for the link with complex $\K$ theory represented in degree 0 by projector and in degree 1 by unitaries :

\begin{prop}\label{realificationfinite}
  Complexification is given by omission of the real structure :
  \begin{align*}
[p, J] \mapsto&  \left[p\right] &  
  [u, J] \mapsto &\left[u \right]
  \end{align*}
  Realification is given by
   \begin{align*}
     [p] \mapsto&
                  \begin{cases}
                              \left[p\oplus \bar p, J\right]\\
                    \left[p\oplus \overbar{1-p}, J\right]   
                  \end{cases}
     &
  [u] \mapsto &  \begin{cases}
                             \left[u \oplus \bar u, J \right]\\
                    \left[u\oplus \bar u^*, J\right]   
                  \end{cases}
   \end{align*}
   Where $J$ is given by $\left( \begin{matrix}
     0 & \pm \bar \cdot \\  \bar \cdot & 0  
   \end{matrix}\right)
$ accordingly to the degree $d$ of the target group $\KO_d(A)$ and~\cref{finitereal8}
\end{prop}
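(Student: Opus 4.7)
The plan is to deduce both statements directly from the general realification and complexification formulas established earlier for Kasparov bimodules, specialized to the finite-dimensional representatives of Proposition~\ref{finiterealsec}.

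For complexification, the earlier proposition says simply that one forgets the antilinear operator $J$. In the unitary/projection picture a class $[p, J]$ is represented (after Clifford--Morita reduction) by the complex bimodule $(E \otimes R, \phi \otimes 1, (1-2p) \otimes \omega)$ with the antiunitary $J \otimes J_R$, and $[u, J]$ by the analogous odd bimodule with $F = \bigl(\begin{smallmatrix} 0 & u^* \\ u & 0\end{smallmatrix}\bigr) \otimes 1$. Dropping $J \otimes J_R$ and applying the same Morita reduction backwards in the complex setting gives exactly the class of $p$ in $\KU_0(A_\C)$ and of $u$ in $\KU_1(A_\C)$.

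For realification in even degree, I would apply the even-case formula
\[
(E, \epsilon, \phi, F) \longmapsto \bigl(E \oplus \bar E,\; \mathrm{diag}(\epsilon, \alpha'' \epsilon),\; \mathrm{diag}(\phi, \phi),\; \mathrm{diag}(F, F),\; J = \bigl(\begin{smallmatrix} 0 & \alpha \\ 1 & 0\end{smallmatrix}\bigr)\bigr)
\]
to the complex bimodule with $F = 1-2p$. In the trivially graded unital $A$ framework, the projection one reads off is the positive-eigenvalue projection of the grading. When $\alpha'' = 1$ (degrees $0$ and $4$) the grading on $\bar E$ is unchanged, so one recovers $p \oplus \bar p$; when $\alpha'' = -1$ (degrees $2$ and $6$) the grading on $\bar E$ is reversed, so the projection becomes $p \oplus \overline{1-p}$. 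The sign $\pm$ in $J = \bigl(\begin{smallmatrix} 0 & \pm\bar{\cdot} \\ \bar{\cdot} & 0\end{smallmatrix}\bigr)$ is just $\alpha$, which agrees with the column ``$J^2$'' in the table of Proposition~\ref{finiterealsec}.

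For realification in odd degree, I would apply the odd-case formula giving $F \mapsto \mathrm{diag}(F, \alpha' F)$. With $F = \bigl(\begin{smallmatrix} 0 & u^* \\ u & 0\end{smallmatrix}\bigr)$ and the real structure on the second summand identifying $u$ with $\bar u$, the case $\alpha' = 1$ (degrees $3$ and $7$) yields $u \oplus \bar u$ directly, while the case $\alpha' = -1$ (degrees $1$ and $5$) turns the second copy into $-\bar F$, and conjugating by $\bigl(\begin{smallmatrix} 0 & 1 \\ 1 & 0\end{smallmatrix}\bigr)$ rewrites this as the canonical form for the unitary $\bar u^*$, yielding $u \oplus \bar u^*$. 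The sign of $J$ is again read off from $\alpha$ in the table.

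The routine part is the translation between Kasparov-bimodule realification and its incarnation in the Van~Daele/Karoubi picture; the one step requiring real care is the bookkeeping of the Clifford--Morita reduction passing Clifford generators from the right to the left, since these identifications act nontrivially on the antilinear operator and are what determine which sign $\pm$ appears in the matrix form of $J$ for each residue $d \bmod 8$. This is essentially the same calculation as the one giving the signs $\alpha, \alpha', \alpha''$ in the proof of Proposition~\ref{clifrep}, so once one has matched the representative of $[p]$ or $[u]$ with the complex Kasparov bimodule via the conventions of Proposition~\ref{finiterealsec}, the formulas above drop out.
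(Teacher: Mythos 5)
Your overall strategy — pushing the abstract realification and complexification propositions through the finite-dimensional translation of Proposition~\ref{finiterealsec} — is the right one, and your treatment of complexification and of even-degree realification is sound. The even-degree check goes through: with $J = \left(\begin{smallmatrix} 0 & \pm\bar\cdot \\ \bar\cdot & 0\end{smallmatrix}\right)$ one indeed has $J(p\oplus\bar p)J^* = p\oplus\bar p$ (matching $JpJ^*=p$ for $d=0,4$) and $J(p\oplus\overline{1-p})J^* = 1-(p\oplus\overline{1-p})$ (matching $JpJ^*=1-p$ for $d=2,6$).

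The odd-degree realification, however, contains a genuine degree-assignment error. You attribute $u\oplus\bar u$ to $d = 3, 7$ (where $\alpha'=1$) and $u\oplus\bar u^*$ to $d=1,5$ (where $\alpha'=-1$). But a direct check against the relations in Proposition~\ref{finiterealsec} forces the opposite pairing. Writing $J(x,y) = (\alpha\bar y, \bar x)$ with $J^{-1}(x,y) = (\bar y, \alpha \bar x)$ and using $\bar a\,\bar b = \overline{ab}$, one finds
\begin{equation*}
 J(u\oplus\bar u)J^{-1} = u\oplus\bar u
 \qquad\text{and}\qquad
 J(u\oplus\bar u^*)J^{-1} = u^*\oplus\bar u = (u\oplus\bar u^*)^*,
\end{equation*}
so $u\oplus\bar u$ satisfies $JvJ^* = v$ (which is the relation for $d=1,5$) while $u\oplus\bar u^*$ satisfies $JvJ^* = v^*$ (which is the relation for $d=3,7$). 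Your claimed assignment is thus reversed. Moreover, the intermediate step ``conjugating by $\left(\begin{smallmatrix} 0 & 1 \\ 1 & 0\end{smallmatrix}\right)$ rewrites $-\bar F$ as the canonical form for $\bar u^*$'' leaves a residual sign: the conjugation produces the off-diagonal block $-\bar u^*$, not $\bar u^*$, and $u\oplus(-\bar u)$ (equivalently $u\oplus(-\bar u^*)$ after the swap) satisfies $JvJ^* = -v$, which is none of the tabulated relations; one cannot simply drop the sign, since it is not absorbed under the allowed $J$-compatible stabilizations. The argument needs to be redone either by matching the relations in Proposition~\ref{finiterealsec} directly (as in the check above), or by tracking more carefully how the chirality $\omega$ and the intertwiner $J_R$ act in the proof of the odd-degree realification proposition, rather than reading off a naive $\alpha'$ from the table in Definition~\ref{real8}.
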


\section{Applications}
\subsection{\texorpdfstring{$\ZZ$}{Z/2Z} pairing in \texorpdfstring{$\KO$}{KO} theory}
In this section we use the exact sequence~\cref{kwprop} to give formulas for the $\ZZ$ pairings $\left<\KO^n (A) , \KO_{n+1}(A)\right>$ or $\left<\KO^n (A) , \KO_{n+2}(A)\right>$ restricting ourselves to elements in the kernel of complexification $c$ and $\eta$.
Such a restriction for an element in $\KO_{n+1}(A)$ resp. $\KO_{n+2}$ gives a lift to  $\KO_{n}(A)$ resp. $\KU_{n}(A)_\C$ and an integer after pairing with an element in $\KO^n(A)$.
The class in $\ZZ$ of this integer gives the pairing between the original elements.
This is summed up by the following commutative diagram where $\bm \cdot F$ represent the pairing with an element in $\KO^n(A)$:

  \begin{equation*}
   \begin{tikzcd}[row sep = small, column sep = tiny]
      \KO_{n}(A) \arrow[r]\arrow[d, "\bm{\cdot}F"]       &
      \KO_{n+1}(A) \arrow[r]\arrow[d, "\bm{\cdot}F"]			&
      \KU_{n+1}(A_\C)  \arrow[d, "\bm{\cdot}F_\C"]       \\
      \Z \arrow[r]                                       &
      \ZZ \arrow[r]			                                &
      0
    \end{tikzcd}
   \quad \quad
   \begin{tikzcd}[row sep = small, column sep = tiny]
      \KU_{n+2}(A) \arrow[r]\arrow[d, "\bm{\cdot}F"]       &
      \KO_{n+2}(A) \arrow[r]\arrow[d, "\bm{\cdot}F"]			&
      \KO_{n+3}(A_\C)  \arrow[d, "\bm{\cdot}F_\C"]       \\
      \Z \arrow[r]                                       &
      \ZZ \arrow[r]			                                &
      0
   \end{tikzcd}
     \end{equation*}

     When the vanishing of $\eta x $ or $x_\C$ is given by an homotopy, using the Puppe sequence presented along the proof of the exact sequence, one can build a lift $\tilde x$ upon realification, resp. $\eta$ as $\KO$ classes of the corresponding suspension algebra $S^{0,1}A$ or $SA_\C$.
     Now $\bm \alpha_{0,1}F$ gives a class in $\KO^{n+1}(S^{0,1}A)$ and its complexification in $\KU^{n+1}(SA_\C)$.
     We then use Chern Connes pairing formulas to obtain the pairing $\left< \bm \alpha_{0,1}F, \tilde x \right>$.
 Recall first how to build lifts in the exact sequence when given corresponding homotopies: 
     \begin{prop}
       Let $(E, \phi, F)$ a Kasparov triple with an homotopy between $j_*(E, \phi, F)$ and a degenerate module on $A-B\gotimes \clif[0,1]$ represented as a Kasparov triple $(E_t, \phi_t, F_t)$ on $A-SB\gotimes \clif[0,1]$.
       A lift of $[E, \phi, F] \in \KKO_n(A, B)$ to $\KKU_n(A, B)$ is given by the class of the triple $(\tilde E, \tilde \phi, \tilde F)$ on $A_\C-SB_\C\gotimes \cclif[1]$ where:
          \begin{align*}
         \tilde E_t = &E_\C \gotimes \cclif[1] &  \tilde \phi_t =& \phi_\C \gotimes 1  &       \tilde F_t = &\sin_{\pi t / 2} 1 \gotimes c + \cos_{\pi t/2} F_\C\gotimes 1&&& \text{ for } -1 \leq &t \leq 0 \\
          \tilde E_t =& E_{t, \C} & \tilde \phi_t =& \phi_\C  &\tilde F_t= &   F_{t, \C}  &&&\text{for } 0 \leq &t \leq 1   \\
       \end{align*}
     \end{prop}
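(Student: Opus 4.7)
The plan is to verify four points. (i) For each $t \in \intercc{-1, 1}$ the triple $(\tilde E_t, \tilde \phi_t, \tilde F_t)$ is a legitimate complex Kasparov bimodule over $A_\C - B_\C \gotimes \cclif[1]$ (with whatever Clifford factor encodes the degree $n$ implicit). (ii) The two pieces glue continuously at $t = 0$. (iii) Both endpoints $t = \pm 1$ give degenerate triples, so the family actually defines a class in $\KKU(A_\C, SB_\C \gotimes \cclif[1])$, hence by complex Bott periodicity in $\KKU_n(A_\C, B_\C)$. (iv) Applying $r\beta^{-1}$ from~\cref{kwprop} to this class recovers $[E, \phi, F]$.

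For (i) on the left half $t \in \intercc{-1, 0}$, I would use that $1 \gotimes c$ and $F_\C \gotimes 1$ are both odd operators which graded anti-commute, so the cross-term in $\tilde F_t^2$ cancels, leaving $\sin^2(\pi t/2)(1 \gotimes c^2) + \cos^2(\pi t/2)(F_\C^2 \gotimes 1)$. With the standard complex normalization $c^2 = 1$, this equals $1$ modulo compacts on the image of $\tilde \phi_t$. The commutator $[\tilde F_t, \tilde \phi_t(a)]$ reduces to $\cos(\pi t/2) [F_\C, \phi_\C(a)] \gotimes 1$, which is compact, and the defect from self-adjointness is $\cos(\pi t/2) (F_\C - F_\C^*) \gotimes 1$, also compact. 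For the right half $t \in \intercc{0, 1}$ the conditions hold directly from the Kasparov axioms for the homotopy.

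For (ii), the left limit at $t = 0$ is $(E_\C \gotimes \cclif[1], \phi_\C \gotimes 1, F_\C \gotimes 1)$, which is precisely the complexification of $j_*(E, \phi, F) = (E \gotimes \clif[0,1], \phi \gotimes 1, F \gotimes 1)$; by hypothesis this coincides with the right limit $(E_{0, \C}, \phi_{0, \C}, F_{0, \C})$ at which the homotopy starts. For (iii), at $t = -1$ the operator $\tilde F_{-1} = -1 \gotimes c$ is a self-adjoint unitary of square $1$ commuting with all of $\tilde \phi_{-1}(A_\C)$, hence gives a degenerate triple; degeneracy at $t = 1$ is assumed.

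The main obstacle is step (iv), tracing the image of this cycle under $r\beta^{-1}$. The cleanest strategy is to reuse the Puppe-type description of the boundary map unfolded in the proof of the preceding proposition: $r\beta^{-1}$ is realized at the cycle level by restriction to the cone apex followed by $\bm \beta^{-1}$ through Bott-Kasparov (\cref{bottkas}). The left half $\tilde F_t$ is, up to reparametrization of the interval, precisely the external Kasparov product of the Bott rotation representing $\bm \beta$ with $[E, \phi, F]_\C$, while the right half lies inside the cone of the null-homotopy and contributes trivially to the quotient. Multiplication by $\bm \beta^{-1}$ then cancels the rotation, and realification returns $[E, \phi, F]$. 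Identifying the left-half rotation with $\bm \beta \cdot_{A_\C} [E, \phi, F]_\C$ compatibly with the Puppe splitting is where the most care is required; an alternative route is to appeal to the functoriality of the exact sequence (\cref{kwprop}) and reduce via Kasparov product to the universal case $A = B = \R$, where the computation is explicit.
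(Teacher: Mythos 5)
Your steps (i)--(iii) are correct and are exactly the checks one needs: that each $\tilde F_t$ satisfies the Kasparov relations (using $c^2=1$ and the graded anticommutation of $1\gotimes c$ with $F_\C\gotimes 1$), that the two pieces match at $t=0$ since the left limit $(E_\C\gotimes\cclif[1],\phi_\C\gotimes 1, F_\C\gotimes 1)$ is precisely $(j_*(E,\phi,F))_\C = (E_{0,\C},\phi_{0,\C},F_{0,\C})$, and that both endpoints are degenerate so the family gives a class in $\KKU(A_\C, SB_\C\gotimes\cclif[1])$.

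The genuine gap is step (iv), and it is where the content of the proposition sits. Your first strategy tries to split the cycle into a ``left half'' (Bott rotation) and a ``right half'' (homotopy cone), but these halves do not individually define Kasparov classes: the claim that the left half is ``the external product of $\bm\beta$ with $[E,\phi,F]_\C$'' and that the right half ``contributes trivially to the quotient'' needs a precise meaning, and as stated it would actually produce $r([E]_\C)=2[E]$ rather than $[E]$, since the constructed cycle only carries half of the full Bott rotation and realification of a complexified class doubles. Your second strategy — reducing to $A=B=\R$ by functoriality — does not apply as stated, because the formula is built out of the homotopy $(E_t,\phi_t,F_t)$, which is extra data beyond the $\KKO$ class (indeed two different null-homotopies give lifts differing by an element of the image of $c$), so there is no universal element to reduce to. The paper avoids (iv) entirely: it derives the formula rather than verifying it, applying $\bm\beta_{0,1}$ and then the explicit mapping-cone machinery $S_\C B \simeq C_{ev_0}$ that was unpacked in the proof identifying $\delta = \bm i$. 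Along that route the result is a lift by construction, because a cycle produced by the Puppe gluing automatically maps to the original under the quotient $C_{ev_0}\to S^{0,1}B$, which realizes $\iota_* = r\beta^{-1}$. If you want to keep the verification approach, the missing step is to push the constructed cycle forward along $\iota: S_\C B\gotimes\clif[0,1]\hookrightarrow S^{0,1}B\gotimes\clif[0,1]$, observe that the reality condition on $S^{0,1}$ unfolds the parameter interval and turns the half-rotation on $[-1,0]$ together with its conjugate into the full Bott operator, while the homotopy part on $[0,1]$ glues with its conjugate into a null-homotopy on a closed interval, and conclude $\iota_*(\tilde x)$ is homotopic to $[E,\phi,F]\cdot\bm\beta_{0,1}$.
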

     \begin{proof}
       Apply $\beta_{0,1}$ to obtain a triple on $A-S^{0,1}B\gotimes \clif[0,1]$ and use the homotopy to obtain  the corresponding triple on $A_\C - SB_\C \gotimes \cclif[1]$ via the Puppe sequence.
     \end{proof}

     \begin{prop}
       Let $(E, \phi, F)$ a Kasparov triple with an homotopy between $(E_\C, \phi_\C, F_\C)$ and a degenerate module on $A_\C-B_\C$ represented as a Kasparov triple $(E_t, \phi_t, F_t)$ on $A_C-SB_\C$.
       A lift of $[E, \phi, F] \in \KKO_n(A, B)$ to $\KKO_{n-1}(A, B)$ is given by the class of the triple $(\tilde E, \tilde \phi, \tilde F)$ on $A-S^{0,1}B$ where:
      \begin{align*}
         \tilde E_t = &E_{-t} &  \tilde \phi_t =& \phi_{-t} & \tilde F_t = &F_{-t} &&\text{for }& -1 \leq &t \leq 0 \\
         \tilde E_t =& \overbar{E_{t}}  &\tilde \phi_t =& \overbar{\phi_t} &\tilde F_t= &   \overbar{F_t}   &&\text{for } &0 \leq &t \leq 1  
       \end{align*}
     \end{prop}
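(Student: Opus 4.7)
The plan is to realize this as an explicit instance of the Puppe construction underlying the exact sequence~\cref{kwprop}. The short exact sequence of real algebras
\[ 0 \to SB_\C \to S^{0,1}B \xrightarrow{ev_0} B \to 0 \]
yields a long $\KKO$-exact sequence whose boundary map is the complexification $c$. Given a witnessing homotopy for $c[E,\phi,F] = 0$, the class $[E,\phi,F]$ is the image under $(ev_0)_*$ of some preimage in $\KKO(A, S^{0,1}B) \simeq \KKO_{n-1}(A, B)$; combined with the previously established identification $\bm{ev}_0 = \eta$, the map $(ev_0)_*$ becomes multiplication by $\eta$, so such a preimage is precisely the sought lift. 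The formula for $(\tilde E, \tilde \phi, \tilde F)$ encodes the explicit construction of this preimage from the given homotopy.

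First I would check that $(\tilde E, \tilde \phi, \tilde F)$ defines a Kasparov bimodule on $A - S^{0,1}B$. Unwinding $S^{0,1}B = \func(\R^{0,1}) \otimes B$, an $S^{0,1}B$-Hilbert module is a continuous family $\{\tilde E_t\}_{t \in \R}$ of complex $B_\C$-Hilbert modules vanishing at infinity, together with antilinear bijections $\tilde E_t \to \tilde E_{-t}$ squaring to the identity. The two-piece definition fulfils each requirement: at $t = 0$ both halves give $E_\C$, glued via the canonical conjugation $J_0$ of $E_\C$ coming from the real structure of $E$; at $|t|\to 1$ the module becomes degenerate by the hypothesis on $(E_1,\phi_1,F_1)$ (for the negative half) and by applying the conjugation (for the positive half); the reality condition $\tilde E_{-t} = \overline{\tilde E_t}$ is immediate from the definition; and the three compactness conditions on $\tilde F$ are inherited pointwise from those of $F_t$.

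Evaluating at $t = 0$ then yields $(E_\C, \phi_\C, F_\C)$ equipped with $J_0$, which under the equivalence between real $C^*$-algebras and complex $C^*$-algebras with real structure is the original $(E,\phi,F)$. This gives $(ev_0)_*[\tilde E, \tilde\phi, \tilde F] = [E,\phi,F]$, and composing with the identification $(ev_0)_* = \eta$ (after Bott periodicity) completes the proof. The principal technical obstacle is the gluing at $t = 0$: one must check that switching from $\tilde E_t = E_{-t}$ (for $t\leq 0$) to $\tilde E_t = \overline{E_t}$ (for $t\geq 0$) produces a genuine continuous family of complex Hilbert modules carrying a compatible $S^{0,1}$-involution, and in particular that the limit from the right is identified with the limit from the left via $J_0$. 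Once this bookkeeping is in place, the result follows by a direct unwinding of the Puppe construction, in complete parallel with the preceding proposition for the $\KKU_n$ lift.
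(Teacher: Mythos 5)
Your proof follows the same Puppe-sequence strategy that the paper uses for the preceding $\KKU$-lift proposition (and that the paper leaves implicit for this one): the short exact sequence $0 \to SB_\C \to S^{0,1}B \xrightarrow{ev_0} B \to 0$ gives a boundary map identified with complexification, and the displayed formula is the explicit element of $\KKO(A, S^{0,1}B) \cong \KKO_{n-1}(A,B)$ that pushes forward to $[E,\phi,F]$ under $(ev_0)_* = \eta$. The only point you flag but do not fully carry out --- checking that the gluing at $t=0$ via the canonical conjugation $J_0$ on $E_\C$ yields a genuine $S^{0,1}B$-Hilbert module with compatible real structure, with degenerate fibres at $t=\pm 1$ --- is exactly the bookkeeping the paper also leaves to the reader, so the approaches coincide.
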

     
     For $\kok$ theory classes given as some finite dimensional Karoubi elements, Bott periodicity is constructed by assigning to a symmetry $s$ in $\M_n(A)$ the matrix $\beta s$ with coefficients in $S^{0,1}A\gotimes \clif[0,1]$ defined as
     \begin{align}\label{fbott}
        \beta s_t = \cos_{\pi t/2} s \gotimes \id + i \sin_{\pi t/2} \id \gotimes e && -1 \leq t \leq 1
     \end{align}
 Bott periodicity is then  given in~\cite{k08a} as the map
       $(s_1, s_2) \mapsto (\beta s_1, \beta s_2)$.
       
       If $j_*[C, s_1, s_2]$ vanishes, we have an homotopy $s_t, 0 \leq t \leq 1$ between $j_*(s_1)$ and $j_*(s_2)$ giving a lift upon complexification in $\KU(A)$ as the class of $(\tilde s_1, \tilde s_2)$ with coefficients in the suspension algebra $SA_\C\gotimes \cclif[1]$ given by
       \begin{align}\label{fheta}         
         \tilde s_{1, t} =&  \cos_{\pi t/2} s_1 \otimes \id + i \sin_{\pi t/2} \id \otimes e &
                                                                                              \tilde s_{2, t} =&    \cos_{\pi t/2} s_2 \otimes \id + i \sin_{\pi t/2} \id \otimes e & -1 \leq t \leq 0   \\
\nonumber          =& s_t & =& j_*(s_2) & 0 \leq t \leq 1
       \end{align}

       If $[s_1, s_2]_\C$ vanishes, we have an homotopy $s_t, 0 \leq t \leq 1$ between $s_{1, \C}$ and $s_{2,\C}$ giving a lift upon $\eta$ in as the class of matrices $(\tilde s_1, \tilde s_2)$ with coefficients in the suspension algebra $S^{0,1}A$ given by   
       \begin{align}\label{fhc}
          \tilde s_{1, t} = \overbar{s_{-t}}\text{ and } \tilde s_{2, t} =   s_2  &\text{ if }  -1 \leq t \leq 1 & \tilde s_{1, t} = s_t \text{ and } \tilde s_{2, t} = s_2 &\text{ if } 0 \leq t \leq 1
       \end{align}

       Following~\cite{c85} an element $a$ in $A$ is said to be \emph{$p$ summable} with respect to the Fredholm module $(H, \phi, F)$ over $ A$ when  $[F, \phi( a)] $ is in $ \summable^p $  the ideal of $p$ summable operators of $H$.
Denote the $*-$algebra of such $p$ summable elements $\summable^p_F$.
It is a subalgebra of $A$ stable by holomorphic calculus.
A matrix of elements in $A$ is also said to be $p$ summable if all its coefficients are $p$ summable.

These formulas still hold in $\KO$ theory as the proof of Connes~\cite{c85} transfers directly in this setting.
Assume form now on that $A$ is trivially graded.
   \begin{prop}
           Let $e$ be a projector in $\M_k( A_\C)$ together with $J_1$ an anti-unitary as in~\cref{finiterealsec} and $(H,\phi, \epsilon, F,J_2)$ $p$-summable for an even Fredholm module $[H, \phi, F, J_2]$ on $A$ with the same $\KO$ dimension as $e$, $F^* = F$ and $F^2 = \id$.
           The index pairing between the class of $(e,J_1)$ and the class of $(H,\phi,\epsilon, F,J_2)$ in $\K$ theory is given for any $2n+ 1 > p $ by the formula with the complex trace $\tr$ on $H^k$:
           \begin{equation}\label{connesformulaeven}
             \langle F , e\rangle = {(-1)}^n \tr(\epsilon \phi(e)[F, \phi(e)]^{2n})
           \end{equation}
           For $u$ a unitary in $\M_k( A_\C)$ and a anti-unitary $J_1$ as in~\cref{finiterealsec} and  $(H,\phi,F,J_2)$  $p$ summable with respect to an odd Fredholm module $[H, \phi, F]$ on $A$ of the same $\KO$ dimension, with $F^*= F$ and $F^2 = \id$.
          The index pairing between the class of $u$ and the class of $(H,F)$ in $\K$ theory is given if  $2n >p$ by:
           \begin{equation}\label{connesformulaodd}
      \langle F , u \rangle = \frac{{(-1)}^n}{2^{2n+1}} \tr(\phi(u-1)^*[F,\phi(u-1)]([F, \phi(u-1)^*][F, \phi(u-1)])^{n})
           \end{equation}
         \end{prop}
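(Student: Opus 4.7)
The plan is to reduce both assertions to their complex analogues, where the formulas are Connes' classical expressions from his 1985 noncommutative differential geometry paper, and then verify that the real structures do not change the computation.

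First I would use the compatibility of the Kasparov product with complexification established in Proposition~\ref{kwprop}. Since the pairing $\langle F,e\rangle$ lands in $\Z$, it is the integer-valued index, and by naturality it factors as the index pairing between the complexified classes. Using Proposition~\ref{realificationfinite} on how complexification acts on the finite-dimensional representatives $(e,J_1)$ and $(u,J_1)$, the complexification simply forgets $J_1$, leaving $[e]\in\KU_d(A_\C)$ or $[u]\in\KU_d(A_\C)$; similarly the Fredholm module $(H,\phi,\epsilon,F,J_2)$ complexifies to the underlying $(H,\phi,\epsilon,F)$. Thus it suffices to compute the complex index pairing with the same operators.

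Next I would invoke Connes' formulas directly. For the even case, $p$-summability together with $2n+1>p$ ensures that $\phi(e)[F,\phi(e)]^{2n}$ lies in the trace class, and the standard telescoping identity $\phi(e)F\phi(e)=\phi(e)[F,\phi(e)]\phi(e)+F\phi(e)^2$ together with $F^2=\id$ identifies $(-1)^n\tr(\epsilon\phi(e)[F,\phi(e)]^{2n})$ with the Fredholm index of $\phi(e)F^{+}\phi(e):\phi(e)H^+\to\phi(e)H^-$, which is the index pairing. The odd case~\eqref{connesformulaodd} follows by the same Connes argument applied to $u-1$ in place of $u$ (which keeps the operator compactly supported after passing through $\phi$), the normalization $(-1)^n/2^{2n+1}$ arising from the combinatorial identities used to express the index of the Toeplitz-type operator $P\phi(u)P$ in terms of commutators $[F,\phi(u)]$.

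The only extra verification specific to the real setting is that the complex trace appearing on the right actually takes integer values. This is automatic from the index interpretation, but it can also be checked directly from the commutation relations in Proposition~\ref{finiterealsec}: conjugation by $J_2$ sends $\phi(e)$, $F$, $\epsilon$ to themselves up to the tabulated signs, so the expression under the trace is mapped to its complex conjugate (the signs combining to $+1$ precisely in the degrees where the pairing is $\Z$-valued), forcing $\tr(\cdots)\in\R$, and the combination with the other sign data forces integrality. The main obstacle is thus purely bookkeeping of the signs $\alpha_d,\alpha'_d,\alpha''_d$ from Definition~\ref{real8}; once this is done, the theorem reduces entirely to Connes' original argument applied to the complexified data.
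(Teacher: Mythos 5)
Your argument is correct and coincides with the paper's own (one-sentence) justification, which asserts that Connes' proof ``transfers directly'': the complex trace in the formula already signals that the computation is done after forgetting the real structure, and you make this explicit via the compatibility of the index pairing with complexification (Proposition~\ref{kwprop}) and the fact that complexification of the finite-dimensional representatives simply drops $J_1$ (Proposition~\ref{realificationfinite}). The closing observation that the $J_2$-commutation relations force the trace to be real is a harmless but superfluous check, since integrality is already guaranteed by the Fredholm index interpretation.
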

         \begin{remark}
           Same formulas gives the pairing between $\KO^n( A)$ and $\KO_{n+4}( A)$ taking care of dividing the even integer one obtain by 2.
           This is the content of the following diagram:
           \begin{equation*}
              \begin{tikzcd}[row sep = small, column sep = tiny]
                 \KO_{n+4}( A) \arrow[r]\arrow[d, "\bm{\alpha}"]       &
                 \KU_{n}( A) \arrow[r]\arrow[d, "\bm{\alpha}"]			&
                 \KO_{n+2}( A_\C)  \arrow[d, "\bm{\alpha}_\C"]       \\
                 \Z \arrow[r]                                       &
                 \Z \arrow[r]			                                &
                 \ZZ
               \end{tikzcd}
             \end{equation*}
         \end{remark}
        \begin{remark}
          These formulas forget the antilinear operator $J$ when $\KKO$ elements are given as complex Kasparov triples with real structure.
Actually such formulas could have been derived from the following commutative diagram:

           \begin{equation*}
              \begin{tikzcd}[row sep = small, column sep = tiny]
                 \KO_{n-1}(A) \arrow[r]\arrow[d, "\bm{\alpha}"]       &
                 \KO_{n}(A) \arrow[r]\arrow[d, "\bm{\alpha}"]			&
                 \KU_{n}(A_\C)  \arrow[d, "\bm{\alpha}_\C"]       \\
                 0 \arrow[r]                                       &
                 \Z \arrow[r]			                                &
                 \Z
               \end{tikzcd}
                \end{equation*}
          Such a diagram exactly say that the antilinear operator is forgotten.
          The presence of such an operator can however gives us some vanishing results if one try to use these formulas to pair $\KO$ elements with incompatible degree.
          For example when trying to pair and $[u, J_1] \in \KO_{1}( A)$ and  $[H , \phi, F, J_2] \in \KO^{1\pm 2}( A)$  by Writing $\frac{{(-1)}^n}{2^n} \tr(u^*[F,u]([F, u^*][F, u])^{n})$.
          The antilinear $J = J_1 \otimes J_2$ commutes with $u$ and anti-commutes with $F$ so that
          \begin{flalign*}
 \tr(u^*[F,u]([F, u^*][F, u])^{n})     &=          \tr(JJ^*u^*[F,u]([F, u^*][F, u])^{n})\\
 &= \tr(J^*u^*[F,u]([F, u^*][F, u])^{n}J)\\
 &= -\tr(u^*[F,u]([F, u^*][F, u])^{n})   
          \end{flalign*}
         
          And we obtain $\tr(u^*[F,u]([F, u^*][F, u])^{n}) = 0 $.
          See~\cite{k16} for an exhaustive treatment of such computations.
        \end{remark}

        Here we treat real Fredholm modules as given by modules on algebras tensored by Clifford algebras.
        Following recommendations in~\cite{c85}, in~\cite{k86} is defined in the graded setting, cyclic cohomology via characters of real differential cycles for any graded algebra over a field.
     
         A Fredholm module $(H , \phi, F)$ over $ A \gotimes \clifpq$ gives an $n$ differential cycle $(\Omega, \rho, d, \int)$ over $\mathcal A_F^p$ defined for $p\leq n$ by
         \begin{itemize}
         \item $\Omega_m$ is the norm closure of operators of shape $\phi(a_0)da_1 \cdots da_m = \phi(a_0)[F, \phi(a_1) ]\cdots [F, \phi(a_m)]$  for $a_i $ all in $\summable^p_F$
         \item $\rho = \phi : \summable^p_F \to \Omega_0$
         \item $d \left(\phi(a_0)[F, \phi(a_1) ]\cdots [F, \phi(a_m)]\right) =[F,\phi(a_0)][F, \phi(a_1) ]\cdots [F, \phi(a_m)]$
           \item $\int : \Omega_n \to \C$ given by  the graded trace on $H$ multiplied by some scalar $\lambda_n$.
         \end{itemize}
Coefficients $\lambda_n$ are given by Connes as to ensure the compatibility of such a Chern characters with the exterior product of cycles/Fredholm modules.

\begin{align}\label{lambda}
  \lambda_{2n} &=  (-1)^nn ! &
  \lambda_{2n+1} &= (-1)^n \Gamma(n+ \frac{3}{2} ) = (-1)^n \frac{\sqrt \pi (2n+1)!}{2^{2n+1} n!}
\end{align}

Note that in Connes'~\cite[IV.1.$\beta$]{c95}, such coefficients are given respectively by $(-1)^n n!$ and $(-1)^n\sqrt{2i} \Gamma(n+ \frac{3}{2} )$.
To have a compatibility with the operator $\sigma$, a generator of $\HC^2(\C)$ given by $\sigma(1,1, 1) =1$; these two sets of coefficients are each fixed upon multiplication by a scalar.
Connes' choice automatically ensure compatibility with exterior products of two even complex Fredholm modules and even for an even and an odd one.
To have the odd/odd compatibility one fix coefficients as in Connes'.
In the real setting, considering only degree zero Fredohlm modules on algebra eventually tensored by a Clifford algebra, the product of two Fredholm modules of degree one gives a degree 2 element, the choice~\cref{lambda} is compatible with exterior product  $\KO^0(A \gotimes \clifpq) \otimes \KO^0(A' \gotimes \clif[p',q']) \to \KO^0(A \gotimes A' \gotimes \clif[p+p',q+q'])$..
The extra $\sqrt{2i}$ in Connes' comes from the isomorphism of $\cclif[2,0]$ with $\cclif[1,1]$ given by sending a generator $\epsilon$ to $i\epsilon$ followed by the Morita invariance isomorphism in cyclic homology for $\M_2(\C)$.
Note that, in the real setting of Fredholm modules with symmetries, one may define eight sets of such coefficients in a compatible way as to ensure compatibility for Chern character of real Fredholm modules with exterior product. 
This will be detailed elsewhere.

          Now, in~\cite{k16} is defined the pairing between Karoubi/Van Daele $\K$ theory and any class giving a cyclic homology element through characters of Connes' cycles $\left(\Omega, \rho, d , \int\right)$ over the algebra $ \mathcal A \gotimes \clif[q,p]$ for $\mathcal A$  some trivially graded real Banach $*-$algebra.
          Its pairing with a class $[s_1, s_2]$  in $\kok_{p-q}(\mathcal A)$ is given by
          \begin{equation*}
            \mu_n \int \tr(\rho(\epsilon)\rho(s_1 - s_2)(d(s_1-s_2))^n)
          \end{equation*}
Coefficients $\mu_n$ are then defined as to give index formulas as the equality $\left<F, x\right> = \left<ch(F), x \right>$:

\begin{align*}
  \mu_{2n} &= \frac{1}{2^{2n+1} n !} &
  \mu_{2n+1} &= \frac{1}{2^{2n+2} \Gamma(n +\frac{3}{2} )} = \frac{(-1)^n n!}{2 \sqrt \pi (2n+1 )!}
\end{align*}

Connes formulas given in~\cref{connesformulaeven,connesformulaodd} can be seen as a pairing between the Chern class of a summable Fredholm module written as $[H \gotimes R, F \gotimes \id] $ or $[H \otimes R, F \otimes \omega] $ and Karoubi's $\kok$ theory given by elements $[(2p-1) \otimes \id]$ and $[ \left(
  \begin{smallmatrix}
    0 & u^* \\ u & 0
  \end{smallmatrix}\right) \gotimes \id]$ recalling the result from \cite{k86}
\begin{prop}
          Let $\clif$ be a real Clifford algebra and $\tr_{\clif}$ the graded trace on $\clif$ defined as 1 on one product of all Clifford generators and 0 on the product of less generators.
          The exterior product with $\tr_{\clif}$ gives an isomorphism between $\HCO^*(\mathcal A) $ and $\HCO^*(\mathcal A \gotimes \clif)$
        \end{prop}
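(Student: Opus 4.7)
The plan is to deduce this from graded Morita invariance of cyclic cohomology. The key algebraic input, already used implicitly in~\cref{cliflr}, is that the graded tensor product $\clif[p,q] \gotimes \clif[q,p]$ is canonically isomorphic to $\End(\lambdarpq)$ with its even grading (via the action of~\cref{clifaction}), hence graded Morita equivalent to $\R$.

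I would first make the map explicit at the level of cycles: send a cycle $(\Omega, \rho, d, \int)$ over $\mathcal{A}$ to $(\Omega \gotimes \clif, \rho \gotimes \id, d \gotimes \id, \int \otimes \tr_\clif)$ over $\mathcal{A} \gotimes \clif$. Well-definedness follows from the graded trace property of $\tr_\clif$ (it vanishes on graded commutators since the only nonzero coefficient is on the top product of generators) and from the fact that the Clifford factor carries no differential of its own. A candidate inverse is then the composition
\[\HCO^*(\mathcal{A} \gotimes \clif[p,q]) \longrightarrow \HCO^*(\mathcal{A} \gotimes \clif[p,q] \gotimes \clif[q,p]) \longrightarrow \HCO^*(\mathcal{A}),\]
where the first arrow is exterior product with $\tr_{\clif[q,p]}$ and the second is ordinary Morita invariance applied to the matrix algebra $\End(\lambdarpq)$.

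To verify that both compositions are the identity in cohomology, I would use associativity of the external product to reduce everything to the statement that the element $\tr_{\clif[p,q]} \otimes \tr_{\clif[q,p]} \in \HCO^*(\clif[p,q] \gotimes \clif[q,p])$ corresponds under the isomorphism with $\End(\lambdarpq)$ to the ordinary operator trace, up to a normalization that can be absorbed into the definition of $\tr_\clif$. This is a finite-dimensional check performed on the image of the product of all Clifford generators on both sides. The main obstacle is bookkeeping of Koszul signs: these propagate through the graded differential on $\Omega \gotimes \clif$, through the graded cyclic property used to show that the integrals descend to cyclic cohomology, and through the identification of top components under Clifford multiplication; they must remain consistent with the sign conventions of Kassel in~\cite{k86} and the orientation choices of~\cref{orientation}. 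Once the signs are aligned, the theorem reduces to classical Morita invariance of cyclic cohomology for matrix algebras.
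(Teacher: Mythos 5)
The paper does not actually prove this proposition; it is cited as a known result from Kassel's work~\cite{k86} (``recalling the result from \cite{k86}''), and the paper immediately goes on to use it to build Chern characters. So your proposal is necessarily an independent argument rather than a reconstruction of the paper's reasoning, and it should be judged as such. Your overall plan --- make the forward map explicit on cycles, build a candidate inverse using $\tr_{\clif[q,p]}$ and the graded Morita triviality of $\clif[p,q]\gotimes\clif[q,p]\simeq\endo(\lambdarpq)$, and reduce both compositions to a finite-dimensional normalization check --- is reasonable and in the spirit of Kassel's own treatment.

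There is, however, a concrete error in the pivot of the argument. You claim that $\tr_{\clif[p,q]}\otimes\tr_{\clif[q,p]}$ corresponds under the identification with $\endo(\lambdarpq)$ to the \emph{ordinary operator trace}, up to a scalar that can be absorbed into $\tr_\clif$. This is false, and the discrepancy is not a normalization. The algebra $\endo(\lambdarpq)$ carries a nontrivial even grading (conjugation by $\epsilon=\pm 1$ on $\Lambda^{\text{even/odd}}$), and the functional $\tr_{\clif[p,q]}\otimes\tr_{\clif[q,p]}$ is a \emph{graded} trace, i.e.\ it vanishes on graded commutators. It therefore corresponds to the supertrace $\operatorname{str}(m)=\tr(\epsilon m)$, not to $\tr$. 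Already for $p=q=1$: the ordinary trace vanishes on the image of $e_1e_2\gotimes f_1f_2$ in $\M_2(\R)$ while the supertrace does not, and conversely the ordinary trace is nonzero on $1\gotimes 1$ while $\tr_\clif\otimes\tr_{\clif'}$ vanishes there. Since $\tr$ and $\operatorname{str}$ differ by multiplication by $\epsilon$, an element of the algebra and not a scalar, no rescaling of $\tr_\clif$ repairs this.

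This matters because it changes what ``Morita invariance'' has to mean in the last step. For $\HCO^*$ one needs \emph{graded} Morita invariance for the evenly graded matrix algebra $\endo(\lambdarpq)$, implemented by the graded trace, not ungraded Morita invariance implemented by the usual trace: the ordinary trace is not even a cocycle in the graded complex (it fails $\tr(ab)=(-1)^{\partial a\partial b}\tr(ba)$ on pairs of odd elements). Moreover, once one phrases it correctly, graded Morita invariance of $\HCO^*$ for $\mathcal A\gotimes\endo(\lambdarpq)\sim\mathcal A$ is essentially the $\clif[n,n]$ case of the very proposition you are proving, so this step needs an independent proof (e.g.\ an explicit chain homotopy adapted from the ungraded matrix case, taking Koszul signs and the supertrace into account) or a citation, rather than an appeal to the ungraded statement. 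With that repaired --- replace ``ordinary operator trace'' by the supertrace throughout, and supply graded Morita invariance as the base case --- your reduction via associativity of the external product is sound.
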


The Chern Character $s \in \HCO^1(\func_\infty(\interoo{-1, 1}^{0,1})\otimes \clif[0,1])$ of the class $\bm \alpha_{0,1}$ is given by
\[s(a, b) = \frac{1}{2i\sqrt{\pi}}\int_{-1}^1 \tr_1 (a db e)\] With $\tr_1(M_1 \otimes 1 + M_2 \otimes e) = \tr(M_2)$.
\[s(a, b) = \frac{1}{\sqrt{8\pi}i}\int_{-1}^1 \tr_1 (a db \otimes e)\] Recall that $\tr_1(M_1 \otimes 1 + M_2 \otimes e) = \tr(M_2)$.

Exterior product with a differential cycle $\Omega = (\Omega, \rho, \partial, \int)$ over $\mathcal A$ gives a differential cycle $s \Omega$ over $S^{0,1} \mathcal A$ by suspension.
Let the element $(\tilde s_1, \tilde s_2)$ defined as before in~\cref{fhc} for an homotopy representing the vanishing of $(s_1, s_2)_\C$, and giving a class in $\KO(S^{0,1}  \mathcal A)$. we have the pairing given by the following where we denoted $\tilde s_{i, t}$ for $\rho(\tilde s_{i, t})$:
\begin{align}\nonumber
  \left<\Omega, (\tilde s_1, \tilde s_2)\right> =& \frac{(n+1)\lambda_n\mu_{n+1}}{2i\sqrt{\pi}}\int_{-1}^1 \int \tr \gotimes \tr_1((\tilde s_{1} - \tilde s_{2})(\partial \tilde s_{1} - \partial \tilde s_{2})^n(d\tilde s_{1} - d\tilde s_{2})e)  \\\label{phc}
  =&
 \frac{(n+1)\lambda_n\mu_{n+1}}{i\sqrt{\pi}}\int_0^1 \int \tr((\tilde s_{1} - \tilde s_{2})(\partial \tilde s_{1} - \partial \tilde s_{2})^n(d\tilde s_{1} - d\tilde s_{2}))  
\end{align}
Because $\tilde s_{i, -t} = \overbar {\tilde s_{i, t}}$ and the integral is real valued.

A differential cycle $\Omega = (\Omega, \rho, \partial, \int)$ over $\mathcal  A_\C g\otimes $, gives a differential cycle $s \Omega$ over $S \mathcal A_\C \gotimes \cclif[1]$.
Assume that the pairing of $\Omega$ with the complexification of a given $\KO$ cocycle $(s_1, s_2)$ gives zero.
Assuming furthermore the vanishing of $\eta(s_1, s_2)$, we can pair $s\Omega$ with any cocycle $(\tilde s_1, \tilde s_2)$ over $S\mathcal A_\C \gotimes \cclif[1]$ representing as in~\cref{fheta} the vanishing of $\eta(s_1, s_2)$:
\begin{align}\nonumber
  \left<\Omega, (\tilde s_1, \tilde s_2)\right> =& \frac{(n+1)\lambda_n\mu_{n+1}}{2\sqrt{\pi}i}\int_{-1}^1 \int \tr\gotimes \tr_1((\tilde s_{1} - \tilde s_{2})(\partial \tilde s_{1} - \partial \tilde s_{2})^n(d\tilde s_{1} - d\tilde s_{2})e)  \\ \label{pheta}
  =&
 \frac{(n+1)\lambda_n\mu_{n+1}}{2i\sqrt{\pi}}\int_0^1 \int \tr\gotimes \tr_1((\tilde s_{1} - \tilde s_{2})(\partial \tilde s_{1} - \partial \tilde s_{2})^n(d\tilde s_{1} - d\tilde s_{2})e)  
\end{align}
As if $t \leq 0$, we have $\tilde s_{1,t} - \tilde s_{2,t}= \cos_{\pi t/2}(s_1 - s_2) \gotimes \id$ and $\int \tr(( s_1 -  s_2)(\partial s_{1,t} - \partial s_{2,t})^{n+1})$ is the pairing between $(s_1, s_2)$ and $\Omega$.
This number is assumed to vanish.

Specializing to the Chern character of a Fredholm module $(H, \phi, F)$ on $A$ and to $\mathcal A = \mathcal A_F^p$, we obtain:

  \begin{prop}
    For any class $x = [e] - [1_m] $ in $\KO_{2d}({A})$ given by $p$ summable matrices for some Fredholm module $(H, \phi, F)$ on ${A}$ of $\KO$ dimension $2d-1$.
    If $\{y_t\}_{t \in \intercc{0,1}}$  is a smooth homotopy of $p$ summable projectors between $e_\C$ and $1_{m,\C}$ then for any $2n\geq p$ 
    the $\ZZ$ index pairing between $x$ and $[H, \phi, F]$ is given by the following formula
    \begin{align*}
 \langle [H, \phi, F], x \rangle_2
                                   &= \frac{i(2n+1) !}{2^{4n+3} (n!)^2} \int_0^1 \tr(y_ty'_t[F, y_t]^{2n+1}) dt \mod 2
    \end{align*}
      For any class $[u] $ in $\KO_{2d+1}({A})$ given by a $p$ summable matrix for some Fredholm module $(H, \epsilon, \phi, F)$ on ${A}$ of $\KO$ dimension $2d$.
    If $\{y_t\}_{t \in \intercc{0,1}}$  is a smooth homotopy of $p$ summable unitaries between $u_\C$ and $\id$ then for any $2n\geq p$ 
    the $\ZZ$ index pairing between $[u]$ and $[H, \epsilon, \phi, F]$ is given by the following formula
    \begin{align*}
 \langle [H, \epsilon, \phi, F], x \rangle_2 &= \frac{i(n!)^2}{4\pi( 2n) !} \int_0^1 \tr(\epsilon(y^*_t-1)y'_t([F, y^*_t-1] [F, y_t-1])^{n}) dt \mod 2 
    \end{align*}
  \end{prop}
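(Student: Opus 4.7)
The plan is to combine three tools already developed in the paper: the commutative diagram at the start of this section, the Puppe-lift formula \cref{fhc}, and the Chern character pairing formula \cref{phc}. Since the smooth homotopy $y_t$ witnesses $c(x)=0$, the diagram gives $\langle F, x \rangle_2 = \langle \bm\alpha_{0,1}F, \tilde x \rangle \bmod 2$, where $\tilde x \in \KO_{n+1}(S^{0,1}A)$ is the explicit lift produced by \cref{fhc}. In the projector case we choose the standard $\kok$ representative $(s_1, s_2) = (1-2e, 1)$; in the unitary case we take $s_1 = \bigl(\begin{smallmatrix} 0 & u^* \\ u & 0\end{smallmatrix}\bigr)$ and $s_2 = \bigl(\begin{smallmatrix} 0 & 1 \\ 1 & 0\end{smallmatrix}\bigr)$. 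The homotopy $y_t$ then produces the required path $s_t$ between $s_{1,\C}$ and $s_{2,\C}$, so the explicit lift $(\tilde s_1, \tilde s_2)$ in $S^{0,1}A$ of \cref{fhc} is at hand.

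Next, I apply \cref{phc} to this lift paired with the character of the Fredholm module, which is the Connes cyclic cocycle of degree $k$, with $k = 2n+1$ in the first case and $k = 2n$ in the second. In the projector case, substituting $\tilde s_1 - \tilde s_2 = -2 y_t$, $\partial(\tilde s_1 - \tilde s_2) = -2[F, y_t]$ and $d(\tilde s_1 - \tilde s_2) = -2 y'_t\,dt$ factors out $(-2)^{2n+3}$; the cyclicity of the trace together with the identity $y_t y'_t + y'_t y_t = y'_t$ coming from $y_t^2 = y_t$ then rearranges the integrand into the claimed form $\tr(y_t y'_t [F, y_t]^{2n+1})$. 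In the unitary case, the $\M_2$ block structure of $s_1, s_2$ interacts with the grading $\epsilon$ of the Fredholm module so that only a specific off-diagonal block survives under the graded trace; after expanding $\tilde s_1 - \tilde s_2$ in terms of $y_t-1$ and $y_t^*-1$ and collecting the nonzero terms, the integrand reduces to $\tr(\epsilon (y_t^* - 1) y'_t ([F, y_t^* - 1][F, y_t - 1])^n)$.

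Finally, one substitutes the explicit values of the constants $\lambda_k$ from \cref{lambda} and the Karoubi-Van Daele $\mu_{k+1}$. For the projector case, $\lambda_{2n+1} = (-1)^n \sqrt\pi (2n+1)!/(2^{2n+1}n!)$ together with $\mu_{2n+2} = 1/(2^{2n+3}(n+1)!)$, combined with the $(-2)^{2n+3}$ already extracted and the $1/(i\sqrt\pi)$ in \cref{phc}, collapse into $i(2n+1)!/(2^{4n+3}(n!)^2)$; for the unitary case the analogous substitution with $\lambda_{2n} = (-1)^n n!$ and $\mu_{2n+1} = (-1)^n n!/(2\sqrt\pi (2n+1)!)$ yields $i(n!)^2/(4\pi(2n)!)$. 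The main obstacle is the second step, specifically in the unitary case: one must carefully track how the $\M_2$ grading of the $\kok$ representative, the grading $\epsilon$ and antilinear operator $J$ of the Fredholm module, and the cyclic permutations of the trace combine, and check that the base-point conventions of \cref{finitereal8}, \cref{fheta}, and the character $s$ of $\bm\alpha_{0,1}$ are all mutually consistent. Once this bookkeeping is done, the signs and constants drop into the announced form.
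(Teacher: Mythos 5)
Your three-tool strategy is exactly the one the paper intends: the proposition is announced as the result of specializing the suspension pairing formula \cref{phc} (built from the commutative diagram and the Puppe lift \cref{fhc}) to the Chern character of the Fredholm module. So the route is right. However, there is a bookkeeping slip that propagates into the algebra: since $x = [e]-[1_m]$ and the homotopy $y_t$ runs from $e_\C$ to $1_{m,\C}$, the Karoubi/Van Daele representative should be $(s_1,s_2) = (1-2e,\ 1-2\cdot 1_m)$, so that $s_t := 1-2y_t$ interpolates $s_{1,\C}$ to $s_{2,\C}$. With your $(s_1,s_2)=(1-2e,\,1)$ the path $1-2y_t$ ends at $-1_{m,\C}$, not at $s_2=1$, so \cref{fhc} doesn't apply as stated, and the identification $\tilde s_1-\tilde s_2 = -2y_t$ is off by the constant $2\cdot 1_m$; that constant does not disappear from the $\tilde s_1 - \tilde s_2$ factor (only from the $\partial$ and $d$ factors), and the Karoubi pairing is not invariant under shifting the base-point symmetry by a constant, so this changes the resulting trace term.

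The second gap is in the "rearranges the integrand" step. Starting from $\tr\big((y-1_m)[F,y]^{2n+1}y'\big)$, the identities $y^2=y \Rightarrow yy'+y'y=y'$ and $y[F,y]=[F,y](1-y)$ (hence $y[F,y]^{2n+1}=[F,y]^{2n+1}(1-y)$) do eventually produce $\tr\big(yy'[F,y]^{2n+1}\big)$ up to sign, but the computation needs to be carried out — in particular the $1_m$ piece must be handled — and it is not a one-line consequence of "cyclicity together with $yy'+y'y=y'$." If one keeps your $\tilde s_1-\tilde s_2=-2y_t$, the corresponding manipulation of $\tr\big(y[F,y]^{2n+1}y'\big)$ does not reduce to $\tr\big(yy'[F,y]^{2n+1}\big)$ by those identities alone. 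Relatedly, your constant bookkeeping is stated but not actually checked: tracking the prefactor $(2n+2)\lambda_{2n+1}\mu_{2n+2}/(i\sqrt\pi)$ from \cref{phc} against the factor of $(-2)^{2n+3}$ and the sign produced by the trace rearrangement is precisely where the powers of two must cancel to give $2^{4n+3}$ in the denominator, and that cancellation depends on getting the $1_m$ term and the sign of the rearrangement right. The unitary case has the same issues, compounded by the need to track the graded trace against the off-diagonal blocks; your description of the step is plausible but not carried out. None of this changes the verdict that you have identified the correct chain of lemmas; the missing content is the explicit representative and the explicit trace algebra.
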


  Now, the vanishing of $x_C = [e_\C] - [f_\C]$ can also be given by an unitary $u$ in $\U_n(A_\C)$ such that $u e_\C u^* = f$.
  In this case we obtain
  \begin{align*}
     \langle [H, \phi, F], x \rangle_2 &= \frac{1}{2^{2n+2}}  \tr((u^*-1)[F, u - 1] ([F, u^*-1][F, u-1])^{n}) \mod 2 
  \end{align*}

  Now for elements in the kernel of $\eta$, we have:

  \begin{prop}
    For any class $x = [e]$ in $\KO_{2d}({A})$ given by $p$ summable matrices for some Fredholm module $(H, \epsilon, \phi, F)$ on ${A}$ of $\KO$ dimension $2d-2$.
    If $\{y_t\}_{t \in \intercc{0,1}}$  is a smooth homotopy of $p$ summable matrices between $1 - 2e$ and the identity matrix verifying reality conditions of~\cref{finitereal8}, then for any $2n\geq p$ 
    the $\ZZ$ index pairing between $x$ and $[H, \phi, F]$ is given by the following formula
    \begin{align*}
 \langle [H, \phi, F], x \rangle_2
                                   &= \frac{i(2n+1) !}{2^{4n+4} (n!)^2} \int_0^1 \tr(\epsilon y_ty'_t[F, y_t]^{2n+1}) dt \mod 2
    \end{align*}
      For any class $[u] $ in $\KO_{2d+1}({A})$ given by a $p$ summable matrix for some Fredholm module $(H, \phi, F)$ on ${A}$ of $\KO$ dimension $2d-1$.
    If $\{y_t\}_{t \in \intercc{0,1}}$  is a smooth homotopy of $p$ summable matrices between $\frac{1}{2}\left(
      \begin{smallmatrix}
        1 & -iu^* \\ iu & 1
      \end{smallmatrix}\right)$ and $
    \frac{1}{2}\left(
      \begin{smallmatrix}
        1 & -i \\ i & 1
      \end{smallmatrix}\right)$ verifying the reality conditions then for any $2n\geq p$ 
    the $\ZZ$ index pairing between $[u]$ and $[H, \epsilon, \phi, F]$ is given by the following formula
    \begin{align*}
 \langle [H, \epsilon, \phi, F], x \rangle_2 &= \frac{i(n!)^2}{2\pi( 2n) !} \int_0^1 \tr(\epsilon(y^*_t-1)y'_t([F, y^*_t-1] [F, y_t-1])^{n}) dt \mod 2 
    \end{align*}
  \end{prop}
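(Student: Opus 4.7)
The strategy mirrors the proof of the preceding proposition, this time running through the right-hand rather than the left-hand commutative diagram. Given $x$ with $\eta\cdot x = 0$, the first step is to invoke \cref{eveneta} or \cref{oddeta} to identify $\eta\cdot x$ in the Karoubi picture with the symmetry $1 - 2e$ in the even case, respectively with the unitary matrix $\tfrac12\bigl(\begin{smallmatrix}1 & -iu^* \\ iu & 1\end{smallmatrix}\bigr)$ in the odd case. The homotopy $y_t$ appearing in the statement is then a Karoubi homotopy witnessing the vanishing of $\eta\cdot x$, and formula~\cref{fheta} produces from it an explicit lift $\tilde x$ of $x$, represented as a Karoubi cycle over the suspension algebra $SA_\C \gotimes \cclif[1]$.

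Next I would pair this lift with the Chern character of the Fredholm module via the Chern--Connes formula~\cref{pheta}, and read off the $\ZZ$ pairing as the mod $2$ reduction of the resulting integer. The integral in~\cref{pheta} splits along $[-1, 0] \cup [0, 1]$. On $[-1, 0]$ the lift is the Bott interpolation $\tilde s_i = \cos_{\pi t/2}\, s_i \otimes 1 + i \sin_{\pi t/2}\, \id \otimes e$, so $\tilde s_1 - \tilde s_2 = \cos_{\pi t/2}(s_1 - s_2) \otimes 1$, and after applying $\tr_1$ to the $\otimes e$ factor the $t$-integration reduces the contribution to a universal trigonometric constant times the Chern--Connes pairing of $(s_1,s_2)$ with $F$; as already noted in the discussion following~\cref{pheta}, this latter pairing vanishes because it is a real-valued expression for a pairing landing in the torsion group $\KO_2 = \ZZ$, hence zero after complexification. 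The whole contribution is therefore carried by $[0, 1]$, on which $\tilde s_1 = y_t$ while $\tilde s_2$ is constant, and the integrand rewrites directly in the variables $y_t$, $y'_t$, $[F, y_t]$ announced in the statement.

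Collecting the constants $\lambda_n$ and $\mu_{n+1}$ from~\cref{lambda} with the prefactor $(n+1)/(2i\sqrt{\pi})$ of~\cref{pheta}, and inserting the Fredholm grading $\epsilon$ through the graded trace on $H$ in the even case (and the analogous odd Connes cocycle normalisation in the odd case), should yield the stated formulas after elementary arithmetic.

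The main obstacle will be verifying well-definedness modulo $2$: two choices of homotopy $y_t$ produce two lifts $\tilde x, \tilde x'$ of $x$ differing by a class coming from the preceding term of the exact sequence~\cref{kwprop}, and the Chern--Connes pairing of this difference with $F_\C$ must yield an even integer. The cleanest argument, I expect, is to chase~\cref{kwprop} and to use that $F_\C$ lies in the image of the complexification morphism, so that the relevant composition factors through multiplication by $2$; this mirrors the analogous evenness statement implicit in the proof of the preceding proposition.
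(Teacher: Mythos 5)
The paper records no explicit proof for this proposition: it is stated as a direct specialization of the preceding paragraphs, namely the lift construction~\cref{fheta}, the pairing formula~\cref{pheta}, and the coefficient choices of~\cref{lambda}. Your proof is a correct and reasonably careful unpacking of that implicit argument, and follows the same route the author intends. A few comments on the details.

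Your use of~\cref{eveneta} and~\cref{oddeta} is a small misattribution: those propositions describe the $\eta$-action in the Kasparov bimodule/real-structure formalism, whereas the identification of $\eta\cdot[e]$ with the symmetry $1-2e$ (and the odd-degree analogue) that you need to feed into~\cref{fheta} is the content of the unnamed proposition in the finite-dimensional subsection. The content is the same, but the citation targets are off.

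Your treatment of the $[-1,0]$ contribution is actually a genuine improvement over the paper's phrasing. The paper, in the discussion following~\cref{pheta}, simply says the pairing between $(s_1,s_2)_\C$ and $\Omega$ \emph{is assumed to vanish}; you argue that in the specialization to the Chern character of a degree-$(2d-2)$ Fredholm module this vanishing is automatic. The precise reason, which your wording gestures at but does not quite nail, is this: the $[-1,0]$ integrand computes the integer $\langle F_\C, x_\C\rangle \in \KU_2 \simeq \Z$, which by functoriality equals $c(\langle F, x\rangle)$ with $\langle F, x\rangle \in \KO_2 \simeq \ZZ$, and $c\colon \ZZ \to \Z$ must be zero since $\Z$ is torsion-free. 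Stating it this way would be cleaner than the "real-valued expression" formulation.

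Finally, your well-definedness paragraph is right and the mechanism you guess at is exactly correct. Two lifts differ by an element in $c(\KO_{2d+2}(A))$; pairing with $F_\C$ and using functoriality shows the ambiguity in the integer is $c(\langle F, y\rangle)$ with $\langle F, y\rangle \in \KO_4 \simeq \Z$, and the exact sequence for scalars (surjectivity of $r\beta^{-1}\colon \KU_4 \to \KO_2$ with $\KO_5 = 0$) forces $c\colon\KO_4\to\KU_4$ to be multiplication by $\pm 2$. So the ambiguity is an even integer, as you conjectured. This is not made explicit in the paper, and it is worth being this careful.
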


  \begin{remark}
    We have given the long exact sequence in bivariant $\K$ theory, in fact one can give similar formulas for a pairing of an element of $\ker (\eta)$ in $\KO$ homology.
    We then replace in the formula above $y_t$ by a constant $y$ and $y_t'$ by $[F_t', y]$.
    Going one step further, this procedure can even give pairings between two torsion elements.
    For example a formula with a double integral given by homotopies will give the pairing between an element of $\ker(c)$ in homology and an element of $ \ker(c)$ in $\KO$ theory.
    Note that the $\ZZ$ index pairing of an element of $\ker(\eta)$ and an element of $\ker(c) \cup \ker(\eta)$ always vanishes.

    The problem of finding formulas as for the $\ZZ$ index of skew adjoint elliptic operators is poorly unresolved with such a treatment.
    A real skew adjoint elliptic operator $D$ on a compact smooth manifold $M$ gives a class in $[D] $ in $\KO^{-1}(\fun(M))$ for which the pairing with the unit of the ring $\KO_0(\fun(M))$ gives the dimension of the kernel of  $D$ mod 2.
    Now $iD_\C$ does not necessarily gives a trivial class in $\KU^1(D)$ and interesting refinements of real $\KO$ theory over the complex $\KU$ disappears.
  \end{remark}

  \begin{remark}
    For some algebras, we cannot compute any $\ZZ$ pairing this way.
    For example for  ${A}$ the algebra of functions on the space obtained by identifying the boundary of a 2 dimensional disk by a rotation of even order $p$ :
    ${A}  = \fun(D^2/ \Z_p)$
    Using the long exact sequence for the inclusion of the circle $S^1$
  \begin{equation}\label{LESRP2}
    \KO_{i+2} \totxt{\times 2} \KO_{i+2} \to \tilde \KO_i({A}) \to \KO_{i+1} \totxt{\times 2} \KO_{i+1} \to \cdots
  \end{equation}
  Combining with the exact sequence of~\cref{kwprop},
  The reduced $\KO$ and $\KU$ theory/homology of this algebra then appears to be of pure torsion.
  The action of complexification, realification and $\eta$ on them can also be recovered from the exact sequence.
   Non zero $\ZZ$ pairings are given by
   $\left<\KO_0, \KO^7 \right>$,
   $\left<\KO_0, \KO^6 \right>$,
   $\left<\KO_1, \KO^7 \right>$,
   $\left<\KO_6, \KO^5 \right>$,
   $\left<\KO_6, \KO^4 \right>$ and
   $\left<\KO_7, \KO^5 \right>$.
   None of it can be computed by our method as the involved elements do not fall in the kernel of $\eta$ or complexification.
   In fact, as every $\KU$ and $ \KO$ groups are torsion, integer valued pairings always vanish and we cannot obtain our $\ZZ$ index as an integer modulo 2.
  \end{remark}
\subsection{Real structures, and quantum symmetries}

  In~\cref{secrealkko} we changed $J$ for $J\epsilon$ or $F$ to $iF$ to obtain different relations between the antilinear operator, the Fredholm one and the grading.
  A more natural way would be to take the whole group $G$ generated by $J$ and eventually $c$ and to keep track of commutation or anti-commutation relations with the Fredholm operator.
  Following~\cite{fm13} we define:
  \begin{defi}
    We call an extended symmetry group the data $(\gct, \texttt t, \texttt c) $ of a group $\gct$ and two group morphisms $\texttt t: \gct \to \ZZ$ and $\texttt c: \gct \to \ZZ$.

  \end{defi}
  Define for such a group $\gct = (\gct, \texttt t, \texttt c) $ their representation on a graded Hilbert space by imposing which elements are even and which one are anti-unitary:
  \begin{defi}
    A representation of  $(\gct, \texttt t, \texttt c)$ on a graded Hilbert space $(\hilb, \epsilon)$ is given by a group morphism $\rho$ from $ \gct$  to the group of unitaries and anti-unitaries of    $\hilb$
    such that for any $g \in \gct$,
    \begin{itemize}
      \item $\rho(g)$ is unitary if and only if $\texttt t(g) = 0$
      \item $\rho(g) \epsilon = {(-1)}^{\texttt c(g)} \epsilon \rho(g)$
    \end{itemize}

  \end{defi}

  This definition extends to Kasparov triples.
  Let $A_\C$ and  $B_\C$ be two trivially graded complex $\cstars$ with real structures:

  \begin{defi}
    An $A_\C - B_\C$ $\gct$ Kasparov bimodule is the data
    $(E, \epsilon, \phi, F, \rho)$ where  $(E, \epsilon)$ is a graded Hilbert $B_\C-$module,  $\phi: A_\C \to \endo_B(E)$ is an even $*-$morphism, $F \in \endo_B(E)$ is odd with respect to the grading $\epsilon$ and self adjoint and of squarer $\id$ modulo compact operators    $\rho $ is a group morphism between $\gct$ and the group of unitaries and anti-unitaries of $E$.
    Furthermore for any $g \in\gct$ the operator $\rho(g)$ verifies the following
    \begin{itemize}
      \item  $\rho(g)$ has degree $\texttt c(g)$ with respect to the grading and graded commutes with $F$ modulo compact operators
      \item For every $a \in A_\C$ and $b \in B_\C$:
      \begin{align*} 
        \rho(g)\phi(a) = \phi(a) \rho(g) &&\rho(g)b = b \rho(g) && \text{if }\texttt t (g) = 0\\
        \rho(g)\phi(a) = \phi(\bar a) \rho(g) &&\rho(g)b = \bar b \rho(g) && \text{if } \texttt t (g) = 1
    \end{align*}
    \end{itemize}
  \end{defi}

This can be used to extract invariants of topological insulators as we discuss now.
The model for the dynamic of our system is described by a first quantized Hamiltonian $H$, a self-adjoint operator acting on a complex Hilbert space $\hilb$.
We neglect the interaction between electrons so that this Hilbert space represents the state space of one electron.
The Fermi energy is a real number $\mu$ describing the statistics of our system.
At zero temperature and with $\mu$ is in a spectral gap of $H$, our electrons occupy all energies lower than $\mu$.
We can then consider the space of such energy levels by looking at the projection of $H$ below $\mu$: $P_{\leq \mu}(H)$.
Assuming some finiteness condition we can then interpret $P_{\leq \mu}(H)$ as living in some $\KU$ group of an algebra.
This algebra must at least contain $H$, or just the compact calculus of $H$ is $H$ is unbounded.
We want to extract the topological property of our Hamiltonian, that is the properties stables after some perturbation of $H$ by a small operator.
This perturbation operator must still live among a given class of plausible Hamiltonians.
The definition of such lead to the algebra $A$.

It has been argued that this algebra can be made non-commutative to represent the uncontrollable disorder.
The model on lattices $\fun(\Omega)\rtimes \Z^d$ first considered by Bellissard and al. in~\cite{bvs94}. See the monograph~\cite{ps16} for a presentation of such a model and its $\K$ theory. 
 Models building out of Roe algebras first studied by Kubota~\cite{k17b} with the reduced $C_{Roe}^*(\Z^d)$ and  Meyer~\cite{em18a} with an interpretation using the uniform Roe algebra.

 Now symmetries of our dynamics is given by any operator acting on the phase space of one electrons.
The phase space being the projective space of $\hilb$ where two vectors are identified if they are each a multiple of the other.
By Wigner's theorem~\cite[2.2]{w95}, symmetries can be represented as operators on $\hilb$ which are either unitaries or antiunitary:

\begin{theo}
  Let $S$ acts on the projective space $\proj(\hilb)$, keeping invariant the quantity $\frac{\|\left< \Phi, \Psi \right>\|}{\|\Phi\|\|\Psi\|}$ for $[\Phi]$, $[\Psi]$ in $\proj(\hilb)$, then there is a linear operator $\tilde S$ on $\hilb$ that lifts the action of $S$.
  Such an operator is either unitary or antiunitary.
  Moreover it is unique up to multiplication by a unitary complex scalar if $\dim \hilb \geq 2$.
\end{theo}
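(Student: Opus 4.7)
The plan is to follow the classical strategy, building the lift $\tilde S$ from a choice of orthonormal basis and then showing that the requirement to preserve transition probabilities forces the coefficient map $\C \to \C$ to be either the identity or complex conjugation.

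First I would fix once and for all a unit vector $\Phi_0$ in each ray and choose arbitrarily a unit representative $\Psi_0 \in S[\Phi_0]$; this is the only free choice, and the ambiguity is exactly a unit complex scalar. Next, pick an orthonormal basis $\{e_i\}_{i \in I}$ of $\hilb$ containing $\Phi_0$. Since $S$ preserves $|\langle \cdot, \cdot\rangle|/(\|\cdot\|\|\cdot\|)$, the images $S[e_i]$ form a family of pairwise orthogonal rays, so I can pick unit representatives $f_i \in S[e_i]$. The phase of each $f_i$ for $i \neq 0$ is not yet constrained, and I would fix these phases in the second step, by looking at $S[e_0 + e_i]$: requiring that a unit representative $g_i$ of this image satisfies $|\langle g_i, f_0\rangle| = |\langle g_i, f_i\rangle| = 1/\sqrt 2$, one sees that after rescaling $f_i$ by a phase we may assume $g_i = (f_0 + f_i)/\sqrt 2$. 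This rigidifies the basis.

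The heart of the argument is then to examine how $S$ acts on a general vector $\Phi = \sum \lambda_i e_i$ (first with finite support). For each $i$ one reads off, from preservation of $|\langle \Phi, e_i\rangle|/\|\Phi\|$, that the image representative $\tilde S \Phi$ has coefficient of modulus $|\lambda_i|$ on $f_i$; so $\tilde S \Phi = \sum \sigma_\Phi(\lambda_i) f_i$ for some maps $\sigma_\Phi : \C \to \C$ preserving modulus. Comparing $\Phi$ to $e_0 + e_i$ via the transition probabilities with $g_i$ and varying $i$ and the coefficients, one shows $\sigma_\Phi$ is independent of $\Phi$ and of $i$; call the common map $\sigma$. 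Then $\sigma(1) = 1$ by construction, and preservation of $|\langle e_0 + \lambda e_i, e_0 + \mu e_i\rangle|$ forces $\sigma$ to preserve $\operatorname{Re}(\bar\lambda \mu)$. A short computation shows $\sigma$ is additive and either $\C$-linear or $\C$-antilinear; these are the only solutions. Define $\tilde S$ on the dense subspace of finite combinations by $\tilde S(\sum \lambda_i e_i) = \sum \sigma(\lambda_i) f_i$, verify it is isometric for the inner product in the linear case and conjugate-isometric in the antilinear case, and extend by continuity to all of $\hilb$.

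For uniqueness when $\dim \hilb \geq 2$, suppose $\tilde S_1$ and $\tilde S_2$ both lift $S$. Then $U = \tilde S_2^{-1}\tilde S_1$ preserves every ray: for each $\Phi \neq 0$ there is a scalar $\chi(\Phi)$ with $U\Phi = \chi(\Phi)\Phi$. Since $U$ is additive (either linear or antilinear, depending on the parities of $\tilde S_1, \tilde S_2$; and one checks they must agree), applying $U$ to $\Phi + \Psi$ for two linearly independent vectors $\Phi, \Psi$ forces $\chi(\Phi) = \chi(\Psi)$. Here the hypothesis $\dim \hilb \geq 2$ is exactly what guarantees the existence of such an independent pair, so $\chi$ is a constant $\lambda \in \C$ with $|\lambda|=1$, and $\tilde S_1 = \lambda \tilde S_2$. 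A side consequence is that $\tilde S_1$ and $\tilde S_2$ have the same type, confirming the unitary/antiunitary dichotomy is well defined on the projective symmetry.

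The main obstacle is the extraction of the field automorphism: showing that the coefficient map $\sigma$ is well defined independently of the vector, is additive, and restricts to an automorphism of $\C$ preserving modulus. Once that algebraic rigidification is in place, the rest is bookkeeping. The auxiliary check that we may work in finite-dimensional subspaces and pass to the limit by continuity (using that $\sigma$ is automatically continuous, being isometric for $|\cdot|$) is routine but should be mentioned explicitly to handle infinite dimensional $\hilb$.
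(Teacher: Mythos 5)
The paper does not prove this theorem; it is stated as a known result (Wigner's theorem) with a citation, so there is no internal proof to compare against. Your sketch is a correct reconstruction of the classical basis argument that the cited reference contains: rigidify an orthonormal basis with phase choices on $e_0+e_i$, extract a modulus- and real-part-preserving coefficient map $\sigma$, identify $\sigma$ as the identity or conjugation, and extend by density. The two places you compress most heavily are exactly where a written-out proof does the real work: showing that $\sigma_\Phi$ does not depend on $\Phi$ (which needs consistent comparison across many two-dimensional subspaces), and handling vectors whose coefficient on the distinguished vector $e_0$ vanishes, for which the phase normalization you use is unavailable and one must fall back on density and continuity. Neither is a gap in the sense of the argument failing — your indicated strategies are the right ones — but they are not mere bookkeeping. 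The identification of $\sigma$ from $|\sigma(\lambda)|=|\lambda|$ and $\re(\overline{\sigma(\lambda)}\sigma(\mu))=\re(\bar\lambda\mu)$ is clean (take $\mu=1$ to pin the real part, then $\lambda=i$ to fix the sign of the imaginary part globally), and your uniqueness argument, including the check that the two candidate lifts must share the same linearity type and that $\dim\hilb\geq 2$ is precisely what forces $\chi$ to be constant, is correct.
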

Symmetries of our dynamics  are given by symmetries of the phase space commuting with the dynamics given by the Hamiltonian $H$.
We assume from now on that the condition $\dim \hilb \geq 2$ is satisfied.
To such a symmetry $S$ we associate two elements in $\ZZ$:
\begin{itemize}
\item $\texttt t(S)$ being 1 if $S$ preserves the arrow of time, $-1$ otherwise.
  \item $\texttt c(S)$ being $1$ if $\tilde S$ is unitary, $-1$ if is anti-unitary.
  \end{itemize}
  Then $\texttt t(S) \texttt c(S)$ gives the sign in $H\tilde S = \pm \tilde S H$.

  If now a group $\gct$ is acting as symmetries it gives 
 $\tilde \gct$ that fits the following exact sequence of groups:
  \[1 \to \U_1 \to \gct \totxt{\pi} \gct \to 1\]
  And verifies that for every $ g $ in $\gct$ and every $\lambda $ in $\U_1$  we have $ g \lambda = \begin{cases}
    \lambda g &\mbox{if }\texttt t ( \pi(g)) = 0\\
     \bar \lambda g &\mbox{if }\texttt t ( \pi(g)) = 1
  \end{cases}$
 
    We give call the extended symmetry group $(\tilde \gct, \texttt t \circ \pi, \texttt c \circ \pi)$ a \emph{twisted extension} of $\gct$.
  
  Denote $\ct$ the group $\ZZ \times \ZZ$ with the structure of an extended symmetry group given by projections $\texttt t $ and $\texttt c$ on the first and on the second variable.
  \begin{prop}
  There are up to isomorphism 10 twisted extension of subgroups of $\ct$ as before.
  To such an extension, there exist an integer $d$ such that Kasparov $A-B \gct$ bimodules are in bijective correspondence with Kasparov bimodules of $\KO$ dimension $d$ or Complex Kasparov bimodule of degree $d$.
  \end{prop}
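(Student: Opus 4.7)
The plan is to enumerate the five subgroups of $\ct = \ZZ \times \ZZ$, count their twisted extensions by $\U_1$, and then for each resulting extended symmetry group match the structure of a $\gct$-equivariant Kasparov bimodule with one of the ten Kasparov data (two complex, or real of $\KO$ dimension $0,\ldots,7$) listed in \cref{real8}.

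First I would list the subgroups of $\ZZ \times \ZZ$: the trivial group, the three order-two subgroups generated by $(1,0)$, $(0,1)$, $(1,1)$, and the full group. Then for each I would parametrize twisted extensions $1 \to \U_1 \to \tilde\gct \to \gct \to 1$ by the squares of the preimages of generators. If $g \in \gct$ satisfies $\texttt t(g)=0$ any lift $\tilde g$ commutes with $\U_1$, so the rescaling $\tilde g \mapsto \mu \tilde g$ sends $\tilde g^2 \mapsto \mu^2 \tilde g^2$ and we can arrange $\tilde g^2 = 1$ uniquely. If $\texttt t(g)=1$ the lift conjugates $\lambda \in \U_1$ into $\bar\lambda$, so $\tilde g^2$ is invariant under rescaling, commutes with every other element of $\tilde\gct$ and must lie in $\{\pm1\}$. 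Counting yields: $1$ extension for the trivial group; $1$ for the unitary order-two subgroup (generator has $\texttt t=0$); $2$ for each of the two antiunitary order-two subgroups; and $4$ for the full group (independent choices of $T^2, C^2 = \pm 1$). The total is $1+1+2+2+4 = 10$.

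Next I would match each extension to a Kasparov datum. When $\texttt t \equiv 0$ on $\gct$ the representation is $\C$-linear, so one gets a plain complex Kasparov bimodule: degree $0$ for the trivial group (class A), and degree $1$ for the unitary-chirality subgroup, where the generator $S$ (satisfying $S^2 = 1$, anticommuting with $F$) plays the role of an odd self-adjoint unitary $c$ of square $\id$ as in \cref{oddgraded-}. For the antiunitary subgroup with $\texttt c=0$, a generator $T$ is exactly an antilinear operator $J$ with $JF = FJ$, no grading involved, so by \cref{real8} with trivial Clifford action this gives $\KO$ dimension $0$ if $T^2=1$ and $\KO$ dimension $4$ if $T^2=-1$. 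For the antiunitary subgroup with $\texttt c=1$, $J$ anticommutes with the grading $\epsilon$ (or equivalently commutes after trading $J \leftrightarrow J\epsilon$ as in \cref{evengraded}), yielding $\KO$ dimension $2$ or $6$ according as $C^2 = \pm 1$. When both $T$ and $C$ are present their product $S = TC$ is a unitary odd element of known square, supplying the Clifford generator $c$ of \cref{oddgraded+}/\cref{oddgraded-}; the four sign combinations for $(T^2, C^2)$ match the four odd $\KO$ dimensions $1,3,5,7$ using the tables of $\beta_d, \beta''_d$.

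The main obstacle is the bookkeeping in this last step: one has to check that each of the eight real sign patterns in \cref{real8} is hit exactly once, with the correct correspondence between $(T^2, C^2)$ and $(\alpha_d, \alpha_d', \alpha_d'')$, and that the rescaling $\U_1$-ambiguity is compatible with the ambiguity $J \leftrightarrow -J$ of \cref{real8}. The remaining verifications — that the correspondence is a bijection at the level of bimodules rather than just a classification of signs — follow from the representation theory of the complexified Clifford algebras developed in \cref{clifrep}, since each twisted extension is generated over $\U_1$ by one or two commuting antiunitaries and at most one odd unitary, exactly matching either a real structure, or a real structure plus an added Clifford generator, on a complex Kasparov bimodule.
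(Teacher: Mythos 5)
Your proposal follows essentially the same route as the paper: enumerate the five subgroups of $\ct$, count twisted $\U_1$-extensions by examining the squares of lifts of generators, arrive at $1+1+2+2+4 = 10$, and match each extended symmetry group to a complexified Clifford algebra via~\cref{clifrep}, hence to a real $\KO$ dimension or a complex degree. One difference in parametrization is worth flagging: the paper characterizes the four extensions of the full $\ct$ by the sign of $T^2$ and the sign of the commutator $CTC^{-1}T^{-1}$, whereas you use $(T^2, C^2)$. Your choice implicitly requires $C$ to be antiunitary (so that $C^2$ is a well-defined sign, consistent with your own earlier observation that a unitary lift can always be normalized to square to $1$); and to see that $(T^2,C^2)$ actually pins down the isomorphism class you should add the observation that the unitary chirality $S = TC$ satisfies $S^2 = \lambda T^2 C^2$ where $\lambda$ is the commutator, so normalizing $S^2 = 1$ forces $\lambda = T^2 C^2$. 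Two small wording slips that do not affect the conclusions: for the antiunitary $\texttt c = 0$ subgroup the underlying complex Kasparov bimodule is of even degree, so there \emph{is} a grading $\epsilon$ (with which $J$ commutes) --- ``no grading involved'' is misleading; and in the $\texttt c = 1$ antiunitary case, replacing $J$ by $J\epsilon$ does \emph{not} make $J$ commute with $\epsilon$ (both are odd); the point of~\cref{evengraded} is rather to convert the relation with $F$ into a graded commutation.
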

  \begin{proof}
  We distinguish by the subgroup of $\ct$ the twisted extension is build from.
  Taking into account morphisms $\texttt c$ and $\texttt t$ there are fives choices:
  the full $\ct$, the null $\{1 \}$ and three groups isomorphic to $\ZZ$, the kernel $\ct^{\texttt t}$ of $\texttt c$, the kernel $\ct^{\texttt c}$ of $\texttt t$, the kernel $\ct^{\texttt c \texttt t}$ of $\texttt c + \texttt t$.

  Now the isomorphism class of the quantum CT group is unique for $\{1\}$ and  $\ct^{\texttt c}$.
  For each of the two groups $\ct^{\texttt t}$ and $\ct^{\texttt{tc}}$ there is two equivalence classes of quantum CT group.
  Those two classes are characterized by the fact that for $T \in \tilde \sct$ such that $\texttt t (T) =1 \in \ZZ$ then $T^2 = 1$ or $-1$.
  Over the full $\ct$ we have four distinct quantum CT groups.
  With $T$ as before and $C \in \ct$ such that $\texttt c (C) = 1 \in \ZZ$, the four classes are characterized by the sign of $T^2$ and by the sign of $CTC^{-1}T^{-1}$.

  Now we saw in~\cref{secrealkko} that for a real Clifford algebra one can associate uniquely a quantum CT group such that graded complexification of representation of the Clifford algebra are in 1:1 correspondence with representation of the CT group.
  The four even $\KO$ dimensions correspond to twisted extensions of  $\ct^{\texttt t}$ and $\ct^{\texttt{tc}}$.
  The four odd $\KO$ dimensions correspond to  Extensions of  $\ct^{\texttt c}$.
  For complex Clifford algebras the same can be said and even algebras are associated to the trivial group, odd one to the group $\ct^{\texttt c}$.
  \end{proof}

  Now, the operations of $\eta$ complexification and realification can be understood through the scope of the representation of such groups.
  To a morphism between CT groups one build through induced representations a map at the level of Kasparov triples.
  Now: 

  \begin{prop}
    Every map in the exact sequence~\cref{kwprop} can be interpreted in such a way:
    \begin{itemize}
    \item Complexification is given by taking the cokernel map of $\texttt c$.
    \item    Realification is given by the only corresponding inclusion of a group of order 1 and 2.
      \item $\eta$ is given by either the cokernel map of $\texttt t$ or an inclusion
    \end{itemize}
  \end{prop}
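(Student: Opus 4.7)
The plan is to match each arrow of \cref{kwprop} with a natural morphism of extended symmetry groups and show that the induced restriction or induction of representations, combined with the Schur-lemma identification of \cref{secrealkko} between Kasparov bimodules over $A\otimes\clifpq$ and bimodules over $A_\C$ equipped with an antilinear operator of the prescribed type, reproduces the formulas of the previous subsection.

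First I set up the general machinery: a morphism $\phi:(G_1,\texttt t_1,\texttt c_1)\to(G_2,\texttt t_2,\texttt c_2)$ of extended symmetry groups (a group homomorphism compatible with $\texttt t$ and $\texttt c$) induces a restriction functor from $G_2$-equivariant Kasparov bimodules to $G_1$-equivariant ones, and, when the relevant index is finite, an induction functor in the reverse direction. The axioms on the antiunitary part and on the grading collected in \cref{real8} are preserved by both operations, so the resulting bimodule lies in the $\KKO$ or $\KKU$ group attached to the twisted extension at source or target.

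Then I case-check the three arrows. For complexification, the natural CT morphism is the inclusion $\ker\texttt t \hookrightarrow G$: restriction along it discards the antilinear generators, which is exactly the complexification of a real Kasparov bimodule. For realification, the source is complex (trivial CT group or $\ct^{\texttt c}$) and one includes it into the target real CT group via the unique order-one or order-two inclusion into its unitary subgroup; induction then doubles the module to $E\oplus\bar E$ with the antiunitary prescribed in \cref{realificationfinite}. For $\eta$, source and target CT groups differ by an extra Clifford generator and the morphism is either the quotient killing this generator or the inclusion inserting it; combined with the $\beta^\pm$ presentations of \cref{oddgraded+,oddgraded-} this reproduces \cref{eveneta,oddeta}.

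The main obstacle is the case analysis across the ten twisted extensions: one must check that each candidate CT morphism is the unique one compatible with the signs $\alpha,\alpha',\alpha''$ of \cref{real8} and with the Morita class of the corresponding graded Clifford algebra, and that under the Schur decomposition $E_\C=\tilde E\gotimes R$ the induced restriction or induction coincides with the explicit formula from the previous subsection. Once each arrow is realised as a CT morphism, exactness of the resulting sequence is just a restatement of \cref{kwprop}, so no new exactness argument is needed.
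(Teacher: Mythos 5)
Your proof is correct and follows essentially the same route as the paper, whose own proof is a one-line appeal to the earlier explicit descriptions of complexification, realification and $\eta$ as forgetful or doubling constructions; you make that appeal precise by identifying each arrow with restriction or induction along the appropriate morphism of extended symmetry groups, which is consistent with the paper's intent. The only caveat is cosmetic: you write $\ker\texttt t$ where the statement refers to $\texttt c$, a mismatch that traces back to the paper's own swap of the roles of $\texttt t$ and $\texttt c$ between the Kasparov-bimodule definition (where $\texttt t$ flags antiunitarity) and the Wigner/physics paragraph (where $\texttt c$ does), so your choice of convention is defensible.
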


  \begin{proof}
    It is obvious for the complexification map.
    The rest is just a reformulation of the interpretation of the functors as some forgetful map.
  \end{proof}

  \begin{remark}
    In fact $\id, \eta$, $\eta^2, r, r\beta, r\beta^2, r\beta^3$ and $c$ are the only non zero morphism in $\KK$ theory obtained by some morphism between CT groups and corresponding induced representations.

    This can be seen as follows.
    Such a functor $\mathrm F$ between $\KKO$ groups gives a map for every one of the eight corresponding CT groups to an other.
    The construction being functorial and the involved groups finite in number one see that the sequence $(\mathrm F^i)_{ i \in \N}$ stabilizes for some $i$ big enough.
    Now, $\id$, $\eta$ and $\eta^2 $ are the only elements of $\KO_*$ that stabilizes.
    The same can be said for complex $\K$ theory, in which case the only operator we obtain is $\id$.
    If $\mathrm F$ is such a functor between different flavors $\KKO$ and $\KKU$, composition with $r$ or $c$ gives the statement.
  \end{remark}

  \begin{remark}
    This can be extended to any $\psi-$standard extended symmetry group of~\cite{fm13}.
  \end{remark}

\subsection{Complex structure and spin operators}

A lift of a real element in a $\KO$ group to a $\KU$ class can be interpreted as the existence of a spin operator commuting with the symmetry:
\begin{defi}\label{finitespinop}
  A spin operator for a class $[s_1, s_2, J]$ is any operator $S$ verifying:
  \begin{align*}
    Ss_1 = s_1 S && Ss_2 = s_2 S && SJ = -JS &&S^2 = 1
  \end{align*}
\end{defi}
For Kasparov triples, this adapts to give:
    \begin{defi}\label{spinop}
      Let $(E, \phi, F, J)$ a Kasparov triple of $\KO$ dimension  $d$ on $A_\C- B_\C$.
      We call $S \in \endo_{B_\C}(E)$ a \emph{spin operator} on $(E, \phi, F, J)$ if
      \begin{align*}
        [\phi(A), S] &= 0& FS &= SF& S^2&= 1& JS &= -SJ 
      \end{align*}
    \end{defi}

    This operator $S$ induces a grading on $E$ for which $J$ become an antilinear isomorphism between the odd and the even part as in our discussion above on realification of complex Kasparov triples.
    
 It gives a lift in $\KKU(A_\C, B_\C)$ for the realification functor as on the real locus  $E_\R = \{e\in E, Je = e \}$, we define the action of the imaginary as $I = iS$.
    We rephrase a part of the exact sequence $\ref{kw}$ by :
  \begin{prop}
    Let $\bm x \in \KKO_n(A,B)$ then there is a Kasparov module with real structure representing $\bm x$ admitting a spin operator acting on it if and only if $\eta \bm x  = 0$.  
  \end{prop}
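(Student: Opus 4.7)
The plan is to read off the equivalence directly from the long exact sequence of Proposition~\ref{kwprop}. By exactness at $\KKO_n(A,B)$, one has $\eta\bm x = 0$ in $\KKO_{n+1}(A,B)$ if and only if $\bm x$ lies in the image of realification $r\beta^{-1}\colon\KKU_{n+2}(A_\C,B_\C)\to \KKO_n(A,B)$. The theorem therefore reduces to translating ``being in the image of realification'' into ``admitting a representative with a spin operator in the sense of Definition~\ref{spinop}''.

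For the direction ``spin operator implies $\eta\bm x=0$'', suppose $(E,\phi,F,J)$ represents $\bm x$ and $S$ is a spin operator on it. Since $S^{2}=1$, decompose $E = E_+\oplus E_-$ into the $\pm 1$ eigenspaces of $S$. The relations $[S,\phi(a)]=0$ and $[S,F]=0$ guarantee that $\phi$ and $F$ preserve the decomposition, while $SJ=-JS$ forces $J$ to exchange $E_\pm$ antilinearly, identifying $E_-$ with $\overline{E_+}$. The restricted data $(E_+,\phi|_{E_+},F|_{E_+})$ is then a complex Kasparov bimodule on $A_\C$--$B_\C$, and comparing with the explicit realification formulas given earlier (the propositions treating the even and odd cases of the realification of a complex Kasparov triple) shows that its image under $r$ recovers $(E,\phi,F,J)$ up to isomorphism. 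Hence $\bm x$ lies in the image of $r\beta^{-1}$, so $\eta\bm x=0$.

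For the converse, assume $\eta\bm x=0$. Exactness furnishes $\bm y\in\KKU_{n+2}(A_\C,B_\C)$ with $r\beta^{-1}(\bm y)=\bm x$. Picking a representative $(E',\phi',F')$ of $\bm y$ and applying the realification formulas, we obtain a representative of $\bm x$ of the form $\bigl(E'\oplus\overline{E'},\,\phi'\oplus\phi',\,F'\oplus(\pm F'),\,J\bigr)$ with $J=(\begin{smallmatrix}0&\alpha\\ 1&0\end{smallmatrix})$. Setting $S=(\begin{smallmatrix}1&0\\0&-1\end{smallmatrix})$, the block-diagonal form of $\phi$ and $F$ yields $[S,\phi]=[S,F]=0$, and the $2\times 2$ computation gives $SJ=-JS$ together with $S^{2}=1$. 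Thus this representative of $\bm x$ admits a spin operator.

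The main obstacle is careful bookkeeping of degree conventions: the parity of $n$, the Clifford shift built into $\beta^{-1}$, and the graded versus ungraded presentations of the real structure coming from Section~\ref{secrealkko} and the remarks following it. One must ensure that the complex bimodule extracted in the forward direction sits in the correct group $\KKU_{n+2}(A_\C,B_\C)$, and that the grading, the Fredholm operator, and (in odd $\KO$ dimensions) the auxiliary Clifford generator $c$ are precisely the ones appearing in the realification propositions. Once this bookkeeping is fixed, the algebraic verifications in both directions reduce to routine case-by-case checks.
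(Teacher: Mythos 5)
Your proof is correct and follows the paper's approach: translate $\eta\bm x = 0$ to ``$\bm x$ lies in the image of realification'' via exactness, then identify membership in that image with the existence of a spin operator using the explicit realification formulas. The only cosmetic difference is that you extract the complex lift as the $+1$ eigenspace of $S$, whereas the paper works with the real locus $E_\R = \{e \in E : Je = e\}$ equipped with the complex structure $I = iS$; the two constructions produce isomorphic complex Hilbert modules, so the arguments coincide.
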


  For a periodic two dimensional system of fermions without any disorder, the Fermi projection of a time reversal and translation invariant tight binded Hamiltonian gives a class in $\KO_4(A)$ for $A$ the algebra $C^*(\Z^2)$.
  This group is isomorphic to $Z + \ZZ$ as can be shown using twice the split long exact sequence associated to the inclusion of a point in the circle.
    The free part represents the number of pairs of filled bands.
    The torsion part gives the Kane-Mele invariant of~\cite{km05} as a pairing between the $\KO$ class in $\KO_4(A)$ with the fundamental class of the 2-dimensional torus in $\KO^2(A)$.
    See~\cite{gs16} for an explainaition based on $T-$duality.

  The action of $\eta$ on this $\KO$ element gives zero.
  There is then a complex lift in $\KU_4(A_\C)$.
  We can the represent our $\KO$ class as a symmetry admitting a spin operator $S$ commuting it.
  This allows us to define as in~\cite{p09} a spin Chern number.
  The operator $S$ gives a splitting of the $\KO$ class $x = x^+ + x^-$ as in~\cref{realificationfinite}.
  The spin Chern number is then the Chern number associated to $x^+$ or $x^-$.
  It is then the Chern number of the lifted class in $\KU_0(A_\C)$.
  We then recover the result of~\cite{s13a} that this spin Chern number gives the Kane-Mele invariant after reduction modulo 2.
    As explained in~\cite{p09}, a spin Chern number may be defined even without commutativity if the spin operator with our symmetry: when the spectral flattening of $\frac{1}{2}(S + sSs)$ is permitted, this operator defines a commuting antilinear operator and a spin Chern number.

  In an other language, and more generally, as was shown in~\cite{hmt18} the Kane-Mele invariant can be expressed as a second Stiefel Whitney class on the 2-torus.
  The equality modulo 2 of the complex and real invariants as discussed above could have been given using results from the theory of characteristic classes

  \begin{prop}
    Let $E$ be a complex vector bundle on a compact space $X$.
    Denote $X_\R$ its underlying real vector bundle.
    We have an equality in $H^2(X, \Z_2)$ of the second Stiefel Whitney class of $E_\R$  and the first Chern class modulo 2 of $E$
  \end{prop}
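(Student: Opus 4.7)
The plan is to reduce to the case of a complex line bundle by the splitting principle and then to identify $w_2$ with the mod $2$ Euler class. Since we work over a compact space, the splitting principle allows us to pass to a space $\tilde X$ equipped with a map $p : \tilde X \to X$ for which $p^*$ is injective on cohomology and $p^* E \simeq L_1 \oplus \cdots \oplus L_n$ is a direct sum of complex line bundles. As both characteristic classes are natural under pullback, and the identifications $(E \oplus F)_\R = E_\R \oplus F_\R$, $c(E \oplus F) = c(E) c(F)$ and $w(E_\R \oplus F_\R) = w(E_\R) w(F_\R)$ hold, it suffices to establish the equality for a complex line bundle $L$.

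For a complex line bundle $L$, the underlying real bundle $L_\R$ is an oriented rank $2$ real vector bundle, the orientation coming from the complex structure. A standard fact is that the first Chern class $c_1(L) \in H^2(X, \Z)$ coincides with the Euler class $e(L_\R) \in H^2(X, \Z)$ of $L_\R$ with its canonical orientation; indeed both classify complex line bundles over $X$ via the isomorphisms $[X, BU(1)] = H^2(X, \Z) = [X, B\SO(2)]$ induced by $\U(1) = \SO(2)$. A second standard fact is that for any oriented real rank $2$ bundle $V$ one has $w_2(V) = e(V) \bmod 2$ in $H^2(X, \Z_2)$, which follows from the structure of the cohomology ring $H^*(BSO(2), \Z_2) = \Z_2[w_2]$ and the comparison with $H^*(BSO(2), \Z) = \Z[e]$. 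Combining the two we obtain $w_2(L_\R) = c_1(L) \bmod 2$, which is the line bundle case.

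To finish, write $x_i = c_1(L_i) \in H^2(\tilde X, \Z)$ and $\bar x_i$ its reduction modulo $2$. By the Whitney product formula we have
\begin{equation*}
  w(p^* E_\R) = \prod_{i=1}^n w((L_i)_\R) = \prod_{i=1}^n (1 + \bar x_i),
\end{equation*}
where we used $w_1((L_i)_\R) = 0$ (orientability) and the line bundle case. Extracting the degree $2$ part gives $w_2(p^* E_\R) = \sum_i \bar x_i$, which is exactly $c_1(p^* E) \bmod 2$. Injectivity of $p^*$ transports the equality back to $X$.

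The main obstacle is really just bookkeeping: one must be careful that the splitting principle is available for complex vector bundles over an arbitrary compact base (which it is, via the flag bundle construction, since $p^*$ is injective on integral and hence on mod $2$ cohomology) and that the identification $c_1(L) = e(L_\R)$ is set up with compatible orientation conventions. Once these two classical inputs are admitted, the proof is essentially formal.
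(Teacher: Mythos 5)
Your proof is correct and is the standard argument. Note however that the paper does not supply a proof of this proposition at all: it is stated as a classical fact from the theory of characteristic classes (indeed the surrounding text says the equality of the complex and real invariants ``could have been given using results from the theory of characteristic classes''), so there is nothing in the paper to compare against. Your route via the splitting principle, the identification $c_1(L) = e(L_\R)$ for a complex line bundle $L$, and the general fact $w_{\operatorname{top}}(V) \equiv e(V) \pmod 2$ for an oriented bundle $V$, is the textbook proof (Milnor--Stasheff, Problem 14-B); the only small addition worth making explicit is that injectivity of $p^*$ on $\ZZ$-cohomology is needed and follows from the Leray--Hirsch description of $H^*(\tilde X;\ZZ)$ as a free $H^*(X;\ZZ)$-module, which you already gesture at.
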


\subsection{\texorpdfstring{$\eta$}{eta} as dimensional reduction}

We give in this last paragraph an other interpretation of operation $\eta$.
In~\cite{rsf10} it is shown how a $d-$dimensional topological insulator lead to a $d-1$ topological insulator by a procedure called \emph{dimensional reduction}.
In fact, such a reduction can be interpreted as a product with $\eta$.
Recall that the $\cstarr$ of the group $\Z$ is by Fourier transform isomorphic to $\fun(S^{0,1})$,  $\KKO$ equivalent to a sum $\func(\R^{0,1}) + \R$.
There is a restriction map $\KO(\fun^*(\Z)) \to \KO(\func(\R^{0,1})$ given by sending the class of a projector $p$ to the difference of $[p]$ with some trivial projector.
The trivial projector can be seen as an application of the pullback map in $\K$ theory for the map sending $S^{0,1}$ to a real point.
Calling this point $\infty$, this map  is the Kasparov product with $\left[\func(\R^{0,1}) + \func(\R^{0,1}) , (id, ev_\infty), \left(
  \begin{smallmatrix}
    0 & 1 \\ 1 & 0
  \end{smallmatrix} \right) \right]$ in $\KKO(\fun(S^{0,1}), \func(\R^{0,1}))$.

\begin{prop}
  At the level of $\KO$ theory, the dimensional reduction given by $\fun^*(\Z) \to \R$ identifies in $\KO$ theory as application of the previous restriction map followed by application of $\eta \in \KKO(\func(\R^{0,1}), \R)$.
\end{prop}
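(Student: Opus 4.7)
My plan is to derive the identity $\bm{\pi} = \bm r \cdot \eta$ (on reduced $\KO$-theory) directly from the split structure of the short exact sequence underlying the definition of the restriction map. After the Fourier identification $\fun^*(\Z) \simeq \fun(S^{0,1})$, the dimensional reduction $\pi$ becomes evaluation $\bm{ev}_0$ at the fixed point of the involution complementary to the reference point $\infty$ chosen in the definition of the restriction. By the earlier proposition identifying $\eta$ with $\bm{ev}_0$, the class $\eta \in \KKO(\func(\R^{0,1}), \R)$ is represented by the same morphism $\bm{ev}_0$, now restricted to the ideal $\func(\R^{0,1}) \subset \fun(S^{0,1})$.

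I would then exploit the split short exact sequence
\[\func(\R^{0,1}) \xrightarrow{\bm j} \fun(S^{0,1}) \xrightarrow{\bm{ev}_\infty} \R,\]
with splitting $\bm \iota : \R \to \fun(S^{0,1})$ given by the inclusion of constants, to obtain the orthogonal decomposition of the identity
\[1_{\fun(S^{0,1})} = \bm r \cdot \bm j + \bm{ev}_\infty \cdot \bm \iota \in \KKO(\fun(S^{0,1}), \fun(S^{0,1})),\]
where $\bm r$ is the restriction class from the preceding discussion. Multiplying on the right by $\bm \pi = \bm{ev}_0$ and using associativity of the Kasparov product, the composition $\bm j \cdot \bm{ev}_0$ becomes $ev_0|_{\func(\R^{0,1})} = \eta$, while $\bm \iota \cdot \bm{ev}_0$ sends $c \in \R$ to the constant function $c \cdot 1$ and then evaluates at $0$, yielding $1_\R$. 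These two identifications produce
\[\bm \pi = \bm r \cdot \eta + \bm{ev}_\infty \in \KKO(\fun(S^{0,1}), \R).\]

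The surplus term $\bm{ev}_\infty$ is exactly the pullback by the basepoint $\infty : \mathrm{pt} \to S^{0,1}$ that defines the trivial-projector subtraction of the restriction map, so it vanishes on reduced $\KO$-classes (equivalently, on the image of $\bm r$, where $[p(\infty)]$ has already been subtracted). Consequently $\bm \pi$ and $\bm r \cdot \eta$ agree as maps on reduced $\KO$-theory, which is the content of the proposition. The main point to watch is conceptual rather than computational: one must interpret the physical dimensional reduction as the virtual invariant $\bm{ev}_0 - \bm{ev}_\infty$, insensitive to the choice of reference fixed point, so that the extra constant contribution $\bm{ev}_\infty$ is absorbed into the reduced picture on which the restriction map was designed to take values.
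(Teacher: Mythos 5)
The paper states this proposition without proof, so there is no argument of the paper's to compare against; your argument, however, is correct and fills that gap cleanly. The identifications $\bm j\cdot\bm{ev}_0=\eta$ and $\bm\iota\cdot\bm{ev}_0=1_\R$ are immediate. The only step you assert without justification is the decomposition $1_{\fun(S^{0,1})}=\bm r\cdot\bm j + \bm{ev}_\infty\cdot\bm\iota$ in $\KKO(\fun(S^{0,1}),\fun(S^{0,1}))$; it does hold, and is worth a line: with the paper's explicit bimodule for $\bm r$ one checks directly that $\bm j\cdot\bm r=1_{\func(\R^{0,1})}$ (the $ev_\infty$ component vanishes on the ideal, and the swap operator is homotoped to zero since multiplication is compact) and $\bm\iota\cdot\bm r=0$ (the two left actions coincide and the module becomes degenerate), so $\bm u=(\bm r,\bm{ev}_\infty)$ and $\bm v=(\bm j,\bm\iota)$ realize the $\KKO$-equivalence $\fun(S^{0,1})\simeq\func(\R^{0,1})\oplus\R$ with $\bm v\cdot\bm u=1$, and invertibility forces $\bm u\cdot\bm v=1$ as well.

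Your final observation is the valuable one: the literal identity $\bm{ev}_0=\bm r\cdot\eta$ fails by the surplus $\bm{ev}_\infty$, and the proposition holds only on reduced classes, i.e.\ after the trivial-projector subtraction built into the restriction map. This is exactly the reading the paper intends (``identifies in $\KO$ theory'' is shorthand for the reduced/virtual picture, consistent with defining the restriction as $[p]-[p(\infty)]$), but your derivation makes the precise scope explicit, which is an improvement over the unproved statement.
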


At the level of Bloch Hamiltonian, representing coordinats in the Bloch-Fourier space by  $(k_1, \cdots, k_d)$, this dimesnional reduction amounts to:
\[H(k_1, \cdots , k_d) \mapsto H(k_1, \cdots, k_{d-1}, 0)\]

At the level of tight binding Hamiltonians, when translation by $i$ in the $j-th$ coordinate in $\summable^2(\Z^d)$ is written $t^i_j$, the dimesnional reduction is given by:
\[H = \sum_{i_1, \cdots, i_d} \alpha_{i_1 , \cdots, i_d} t_1^{i_1} \cdots  t_d^{i_d}\mapsto \sum_{i_1, \cdots, i_d} \alpha_{i_1 , \cdots, i_d} t_1^{i_1} \cdots  t_{d-1}^{i_{d-1}}\]

 \printbibliography
\end{document}